\newlength{\defbaselineskip}
\newcommand{\setlinespacing}[1]%
           {\setlength{\baselineskip}{#1 \defbaselineskip}}
\theoremstyle{plain}
\newtheorem{theorem}{Theorem}[section]
\newtheorem{lemma}[theorem]{Lemma}
\newtheorem{corollary}[theorem]{Corollary}
\theoremstyle{definition}
\newtheorem{ass}[theorem]{Assumption}
\theoremstyle{remark}
\newtheorem{remark}[theorem]{Remark}
\numberwithin{equation}{section}
\DeclareMathOperator*{\esssup}{ess\,sup}
\newcommand{\fp}{\mathfrak{p}}
\newcommand{\fd}{\mathfrak{d}}
\newcommand{\bE}{\mathbb{E}}
\newcommand{\etamin}{\eta_\star}%{\etamin}
\newcommand{\lambdamin}{\lambda_\star}%{\lambdamin}
\begin{document}

\title{Mean-Field Leader-Follower Games\\ with Terminal State Constraint\thanks{Financial support by the TRCRC 190 {\it Rationality and competition: the economic performance of individuals and firms} is gratefully acknowledged. We thank Paulwin Graewe for valuable discussions.}
}

%    Information for first author
\author{Guanxing Fu\footnote{Department of Mathematics, Humboldt-Universit\"at zu Berlin,
         Unter den Linden 6, 10099 Berlin, Germany; email: fuguanxi@math.hu-berlin.de}  ~ and ~ Ulrich Horst\footnote{Department of Mathematics, and School of Business and Economics, Humboldt-Universit\"at zu Berlin,
         Unter den Linden 6, 10099 Berlin, Germany; email: horst@math.hu-berlin.de}
%          ~ Paulwin Graewe\footnote{Department of Mathematics, Humboldt-Universit\"at zu Berlin,
%         Unter den Linden 6, 10099 Berlin, Germany; email: fuguanxi@math.hu-berlin.de} ~ Alexandre Popier\footnote{ Département de Mathématiques,
%     Laboratoire Manceau de Mathématiques, Université du Maine, Avenue Olivier Messiaen, 72085 Le Mans Cedex 9, France; Alexandre.Popier@univ-lemans.fr  }
}

\maketitle

\begin{abstract}
We analyze linear McKean-Vlasov forward-backward SDEs arising in leader-follower games with mean-field type control and terminal state constraints on the state process. We establish an existence and uniqueness of solutions result for such systems in time-weighted spaces as well as a {convergence} result of the solutions with respect to certain perturbations of the drivers of both the forward and the backward component. The general results are used to solve a novel  single-player model of portfolio liquidation under market impact with expectations feedback as well as a novel Stackelberg game of optimal portfolio liquidation with asymmetrically informed players.
\end{abstract}

{\bf AMS Subject Classification:} 93E20, 91B70, 60H30

{\bf Keywords:}{ mean-field control, Stackelberg game,  mean-field game with a major player, McKean-Vlasov FBSDE, portfolio liquidation, singular terminal constraint}

\section{Introduction and overview}
Mean field games (MFGs) are a powerful tool to analyze strategic interactions in large populations when each individual player has only a small impact on the behavior of other players. Introduced independently by Huang, Malham\'e and Caines \cite{HMC-2006} and Lasry and Lions \cite{LL-2007}, MFGs have received considerable attention in the probability and stochastic control literature in the last decade. A probabilistic approach to solving MFGs was introduced by Carmona and Delarue in \cite{CD-2013}. Using a maximum principle of Pontryagin type, they showed that solving the MFG reduces to solving a McKean-Vlasov forward-backward SDE (FBSDE) of form,
\begin{equation}\label{FBSDE-probab-anal}
    \left\{\begin{split}
        dX_t=&~b(t,X_t,Y_t,\mathcal L(X_t,Y_t))\,dt+\sigma\,dW_t,\\
        -dY_t=&~h(t,X_t,Y_t,\mathcal L(X_t,Y_t))\,dt-Z_t\,dW_t,\\
        X_0=&~\chi,~Y_T=l(X_T,\mathcal L(X_T)),
    \end{split}\right.
\end{equation}
where $X$ is the state of the representative player, $Y$ is the adjoint variable, and $\mathcal L(\cdot)$ denotes the law of a stochastic process. In MFGs with common noise \cite{Ahuja-2016,ARY-2016} the dependence of the coefficients on the law of the process $(X,Y)$ is of conditional form. FBSDEs of the form \eqref{FBSDE-probab-anal} also arise in mean-field control (MFC) problems \cite{ABC-2018,AD-2011,CD-2015} and in MFGs with a major player \cite{BCY-2015,BCLY-2017,CZ-2016} when formulating stochastic maximum principles. MFGs with a major player are a special class of leader-follower games with mean-field control. In such a game, the leader's optimization problem can be viewed as MFC control problem where the state dynamics follows a controlled FBSDE that characterizes the representative minor agent's optimal response to the leader's control. We study a novel class of leader-follower games with mean-field control and terminal state constraint on the state processes that naturally arise in Stackelberg games of optimal portfolio liquidation with asymmetrically informed players.

\subsection{McKean-Vlasov FBSDE with terminal state constraint}
Let $W=(\overline W,W^0)$ be a {multi-dimensional} Brownian motion generating the filtration $\mathbb F=(\mathcal F_t)_{t\geq 0}$ and let $\mathbb F^0=(\mathcal F^0_t)_{t\geq 0}$ be the filtration generated by $W^0$. In this paper, we consider linear McKean-Vlasov FBSDEs of the form
\begin{equation}\label{FBSDE-general}
    \left\{\begin{aligned}
         dQ_t=&~\left(-\Lambda^1_tR_t-\Lambda^2_t\mathbb E\left[\left.\gamma_tQ_t\right|\mathcal F^0_t\right]+\overline f_t\right)\,dt,\\
        %H_t=&~\mathbb E\left[\left.\int_t^Te^{-\int_t^s\Lambda^1_u A_u\,du}\left(-\Lambda^2_sA_s\mathbb E[\gamma_sQ_s|\mathcal F^0_s]+A_s\overline f_s+\Lambda^3_s\mathbb E[\zeta_sR_s|\mathcal F^0_s]+\textcolor{red}{\Lambda^5_t\mathbb E[\varrho_tQ_t|\mathcal F^0_t]}+g_s\right)\,ds\right|\mathcal F_t\right],~t\in[0,T),\\
        -dR_t=&~\left(\Lambda^4_t Q_t+\Lambda^3_t\mathbb E[\zeta_tR_t|\mathcal F^0_s]+\Lambda^5_t\mathbb E[\varrho_tQ_t|\mathcal F^0_t]+\overline g_t\right)\,dt-Z_t\,d W_t,\\
        %R=&~\widetilde\beta AQ+H,~t\in[0,T),\\
        Q_0=&~\chi, ~Q_T=0,% ~H_T=\lim_{t\nearrow T}H_t,
    \end{aligned}\right.
\end{equation}
with given initial and terminal condition for the forward, and unspecified terminal condition for the backward process. FBSDEs of this form arise in linear-quadratic MFGs, MFC problems, and leader-follower games under a terminal state constraint on the state process when formulating stochastic maximum principles. Under a terminal state constraint on the state sequence the terminal value of the adjoint process is unknown. The special case $\Lambda^2=\Lambda^3=\Lambda^5=\overline f=\overline g=0$ arises in the single player portfolio liquidation models under market impact studied in, e.g. \cite{AJK-2014,GHS-2013}. The special case $\Lambda^2=\Lambda^5=\overline f=\overline g=0$  was recently analyzed in \cite{FGHP-2018} in the framework of a MFG of optimal portfolio liquidation.

We prove a general existence and uniqueness of solutions result for the system \eqref{FBSDE-general} under boundedness assumptions on the model parameters that allows us to solve single player portfolio liquidation problems with private information and expectations feedback. The existence and uniqueness result is complemented by a convergence result for the solution of \eqref{FBSDE-general} with respect to the parameters $(\overline f,\overline  g)$ that allows us to formulate a stochastic maximum principle for leader-follower games of portfolio liquidation with asymmetrically informed players.

The existence and uniqueness of solutions to \eqref{FBSDE-general} is obtained via two nested continuation arguments. Standard continuation methods for McKean-Vlasov FBSDEs established in, e.g.~\cite{ARY-2016,BensoussanYamZhang15} do not apply to the system \eqref{FBSDE-general}, due to the unknown terminal value of the backward process. In order to overcome this problem we make a linear ansatz $R=AQ+H$, from which we derive an exogenous BSDE with singular terminal condition for the process $A$, and a BSDE with {known asymptotic behavior at the terminal time} for the process $H$. The driver of the latter BSDE depends on the unbounded process $A$. The nature of the FBSDE for $(Q,H)$ is different from \cite{FGHP-2018} where a similar ansatz gave a BSDE with known terminal condition.  Analyzing simultaneously the triple $(Q,H,R)$ allows us to prove the fixed-point condition arising in the application of the continuation method in a suitable space.

Our second main result is a convergence result for the solution $(Q,R)$ to the system \eqref{FBSDE-general} with respect to the ``input'' $(\overline f,\overline g)$. Our convergence is not in the $L^2$ sense as in the standard FBSDE literature \cite{MWZZ-2015,Yong-2010} but rather in the $L^\nu$ ($1<\nu<2$) sense. Specifically, we consider the convergence of the solutions $(Q^n,R^n)$ to a penalized version of \eqref{FBSDE-general} under a uniform $L^2$ boundedness assumption on the sequence $(\overline f^n,\overline g^n)$. For such inputs a result of Koml\'os \cite{Komlos-1967} guarantees the Cesaro convergence of $(\overline f^n,\overline g^n)$ along a subsequence in $L^\nu$ ($1<\nu<2$). We prove the convergence of the solutions in the same sense. To this end, we define auxiliary processes to decouple the system \eqref{FBSDE-general} and then show that these processes solve the system \eqref{FBSDE-general} in the right spaces. The convergence result then follows from the previously established uniqueness result.

%%%%%%%%%%%%%%%%%%%%%%%%%%%%%%%%%%%%%%%
%%%%%%%%%%%%%%%%%%%%%%%%%%%%%%%%%%%%%%%
%%%%%%%%%%%%%%%%%%%%%%%%%%%%%%%%%%%%%%%

\subsection{Applications to optimal portfolio liquidation}\label{section-motivation}
Models of optimal portfolio liquidation have received substantial attention in the financial mathematics and stochastic control literature in recent years; see \cite{AJK-2014,GH-2017,GHQ-2015,GHS-2013,KP-2016,PZ-2017} among many others. In such models, the controlled state sequence typically follows a dynamic of the form
\[
	X_t=x-\int_0^t\xi_s\,ds,
\]
where $x \in \mathbb{R}$ is the initial portfolio, and $\xi$ is the trading rate. The set of admissible controls is confined to those processes $\xi$ that satisfy almost surely the liquidation constraint $X_{T} = 0.$
%\[
%	X_{T} = 0.
%\]
It is typically assumed that the unaffected price process against which the trading costs are benchmarked follows some Brownian martingale $S$ and that the trader's transaction price is given by
\[
	\widetilde S_t=S_t-\int_0^t\kappa_s\xi_s\,ds-\eta_t\xi_t.
\]
The integral term accounts for permanent price impact; the term $\eta_t \xi_t$ accounts for instantaneous impact that does not affect future transactions. The trader's objective is then to minimize the cost functional
\[
	J(\xi) = \bE\left[\int_0^T\!\!\!\Big(\kappa_s \xi_s X_s + \eta_s|\xi_s|^2+\lambda_s|x_s|^2  \Big)\,ds \right]
\]
over all admissible liquidation strategies. We refer to \cite{AJK-2014, GHS-2013} for an interpretation of the processes $\eta, \kappa, \lambda$.

%%%%%%%%%%%%%%%%%%%%%%%%%%%%%%%%%%%
%%%%%%%%%%%%%%%%%%%%%%%%%%%%%%%%%%%
%%%%%%%%%%%%%%%%%%%%%%%%%%%%%%%%%%%

\subsubsection{Single player model with expectations feedback}\label{portfolio1}

Standard portfolio liquidation models assume that a trader's permanent price impact is driven by his observable transactions. If the transactions are not directly observable, then it is natural to assume that the permanent impact is driven by the market's expectation about the trader's transactions as in \cite{ABC-2018,BP-2017}, given the publicly observable information.

In Section 3 we solve a single-player liquidation model with expectations feedback where uncertainty is generated by the multi-dimensional Brownian motion $W=(\overline W,W^0)$. The Brownian motion $W^0$ describes a commonly observed random factor that drives market dynamics; the Brownian motion $\overline W$ is private information to the trader. Specifically, we assume that the trader's transaction price is given by
\begin{equation}\label{price-follower}
    \widetilde S_t=S_t-\int_0^t \left\{ \kappa_s\mathbb E[\xi_s|\mathcal F^0_s] + \widetilde g_s \right\}\,ds-\eta_t\xi_t,
\end{equation}
where $S$ is an $\mathbb F^0$ martingale, $\mathbb E[\xi_s|\mathcal F^0_s]$ is the market's expectation about the trader's strategy, and $\widetilde g$ is an $\mathbb F^0$-adapted process that will be endogenized in the next subsection. Assuming a standard quadratic running cost function as in \cite{AJK-2014, GHQ-2015,GHS-2013}, the objective of the trader is then to minimize the functional
\begin{equation}\label{cost-MKV}
    J(\xi)=\mathbb E\left[\int_0^T\kappa_tX_t\mathbb E[\xi_t|\mathcal F^0_t]+\widetilde g_t X_t + \eta_t\xi_t^2+\lambda_tX^2_t\,dt\right],
\end{equation}
 subject to the state dynamics
 \begin{equation}\label{state-MFC}
 \begin{split}
    dX_t &=-\xi_t\,dt\\
    X_0 &= x, ~~ X_T =0.
 \end{split}	
 \end{equation}
We allow the cost coefficients to be private information, i.e. to be $\mathbb F$ adapted. This justifies the conditional expectation term in the price dynamics. A standard stochastic maximum principle suggests that the optimal strategy is given by
 \begin{equation}
 	\xi^*_t = \frac{Y_t-\mathbb E[\kappa_tX_t|\mathcal F^0_t]}{2\eta_t},
 \end{equation}
 where $X$ is the portfolio process, $Y$ is the adjoint variable, and $(X,Y)$ solves  \eqref{FBSDE-general} with $\overline f=0,~ \overline g = \widetilde g$:
\begin{equation}\label{MKV-FBSDE}
\left\{\begin{aligned}
dX_t =&~-\frac{Y_t-\mathbb E[\kappa_tX_t|\mathcal F^0_t]}{2\eta_t}\,dt,\\
-dY_t =&~\left(\kappa_t\mathbb E\left[\left.\frac{Y_t}{2\eta_t}\right|\mathcal F^0_t\right]-\kappa_t\mathbb E\left[\left.\frac{1}{2\eta_t}\right|\mathcal F^0_t\right]\mathbb E[\kappa_tX_t|\mathcal F^0_t]+2\lambda_tX_t+\widetilde g_t\right)\,dt-Z_t\,d W_t,\\
X_0 =&~x,~X_T = ~0.
\end{aligned}
\right.
\end{equation}

If the terms $\mathbb E[\kappa_tX_t|\mathcal F^0_t]$ and $\kappa_t\mathbb E\left[\left.\frac{1}{2\eta_t}\right|\mathcal F^0_t\right]\mathbb E[\kappa_tX_t|\mathcal F^0_t]$ drop out of the FBSDE system, then the system reduces to that arising in the MFG analyzed in \cite{FGHP-2018}. In the next subsection we introduce a model extension where the privately informed trader is the follower in a Stackelberg game of optimal portfolio liquidation. As a byproduct we obtain an extension of the MFG in \cite{FGHP-2018} to a MFG with a major player. A related model without liquidation constraint and without any feedback of the major player's strategy on the minor players' transaction price has been considered in \cite{HJN-2015}.

%%%%%%%%%%%%%%%%%%%%%%%%%%%%%%%%%%%
%%%%%%%%%%%%%%%%%%%%%%%%%%%%%%%%%%%
%%%%%%%%%%%%%%%%%%%%%%%%%%%%%%%%%%%

\subsubsection{Mean-Field type Stackelberg game with asymmetric information}\label{portfolio2}

In Section 4 we solve a Stackelberg game of optimal portfolio liquidation with asymmetrically informed players. The leader (she) has the first-mover advantage while the follower (he) has an informational advantage.

We assume again that uncertainty is generated by the multi-dimensional Brownian motion $W=(\overline W,W^0)$ and that $W^0$ describes a commonly observed market factor while $\overline W$ is private information to the follower. For a given ${\mathbb F}^0$-adapted strategy $\xi^0$ of the Stackelberg leader, we assume that the follower's liquidation problem is the same as in the previous subsection with $\widetilde g=\widetilde\kappa^0\xi^0$ for some $\mathbb F^0$-adapted process $\widetilde\kappa^0$. Let $\xi^*(\cdot)$ be the follower's optimal response function to the leader's strategy and put $\mu^* := \mathbb E[\xi^*(\cdot)|{\cal F}^0]$. Following the standard approach we assume that the leader's transaction price is
\begin{equation}\label{trading-price-leader}
        \widetilde S^0_t=S_t-\int_0^t\overline\kappa^0_s\mu^*_s\,ds-\int_0^t\kappa^0_s\xi^0_s\,ds-\eta^0_t\xi^0_t
\end{equation}
for ${\mathbb F}^0$-adapted coefficients $\eta^0,\kappa^0, \overline \kappa^0$. The difference is that now the leader controls the transaction price both directly and indirectly through the dependence of the follower's optimal response on her trading strategy. We furthermore assume that the leader's cost functional is given by
\begin{equation} \label{leader-cost}
	J^0(\xi^0)=\mathbb E\left[\int_0^T\left(\overline\kappa^0_t\mu^*_tX^0_t+\kappa^0_tX^0_t\xi^0_t+\eta^0_t(\xi^0_t)^2+\lambda^0_t (X^0_t)^2+\overline\lambda_t(\mu_t^*)^2\right)\,dt\right],
\end{equation}
where $X^0$ denotes her portfolio process and $\lambda^0, \overline \lambda$ are ${\mathbb F}^0$-adapted. Her control problem is then a MFC problem with state process $(X^0,X,Y)$, where $(X,Y)$ solves \eqref{MKV-FBSDE} with $\widetilde g=\widetilde\kappa^0\xi^0$ and
\begin{equation}\label{state-leader}
\begin{split}
    dX^ 0_t &=-\xi^ 0_t\,dt\\
    X^ 0_0 &= x, ~ X^ 0_T =0.
 \end{split}	
\end{equation}
We establish a new maximum principle for this control problem from which we derive an explicit representation of the major player's optimal control $\xi^{0,*}$ as
\begin{equation} \label{leader-optimal}
	 \xi^{0,*}_t=\frac{p_t+\mathbb E[\widetilde\kappa^0_tq_t|\mathcal F^0_t]-\kappa^0_tX^{0,*}_t}{2\eta^0_t}
\end{equation}
in terms of the state equation \eqref{state-leader} and the adjoint equations:
\begin{equation}\label{leader-adjoint-p}
    -dp_t=\left(\overline\kappa^0_t\mathbb E\left[\left.\frac{Y^{}_t}{2\eta_t}\right|\mathcal F^0_t\right]-\overline\kappa^0_t\mathbb E\left[\left.\frac{1}{2\eta_t}\right|\mathcal F^0_t\right]\mathbb E[\kappa_tX_t|\mathcal F^0_t]+\kappa^0_t\xi^{0}_t+2\lambda^0_tX^{0}_t\right)\,dt-Z_t\,dW^0_t
\end{equation}
and
\begin{equation}\label{leader-adjoint-q-r}
    \left\{\begin{aligned}
        -dq_t=&~\left(-\frac{r_t}{2\eta_t}-\mathbb E\left[\kappa_tq_t|\mathcal F^0_t\right]\frac{1}{2\eta_t}+\overline f_t\right)\,dt,\\
                -dr_t=&~\left(-2\lambda_tq_t+\kappa_t\mathbb E\left[\left.\frac{r_t}{2\eta_t}\right|\mathcal F^0_t\right]+\kappa_t\mathbb E\left[\left.\frac{1}{2\eta_t}\right|\mathcal F^0_t\right]\mathbb E[\kappa_tq_t|\mathcal F^0_t]+\overline g_t\right)\,dt-Z_t\,dW_t,\\
        q_0=&~0,~q_T=0,
    \end{aligned}\right.
\end{equation}
where \[
    \overline f_t=\frac{\overline\kappa^0_tX^{0}_t}{2\eta_t}+\frac{\overline\lambda_t}{\eta_t}\mathbb E\left[\left.\frac{Y^{}_t}{2\eta_t}\right|\mathcal F^0_t\right]-\frac{\overline\lambda_t}{\eta_t}\mathbb E\left[\left.\frac{1}{2\eta_t}\right|\mathcal F^0_t\right]\mathbb E[\kappa_tX_t|\mathcal F^0_t]
\]
and
\[
    \overline g_t=-\kappa_t\mathbb E\left[\left.\frac{1}{2\eta_t}\right|\mathcal F^0_t\right]\overline\kappa^0_tX^0_t-2\overline\lambda_t\kappa_t\mathbb E\left[\left.\frac{1}{2\eta_t}\right|\mathcal F^0_t\right]\left(\mathbb E\left[\left.\frac{Y_t}{2\eta_t}\right|\mathcal F^0_t\right]-\mathbb E\left[\left.\frac{1}{2\eta_t}\right|\mathcal F^0_t\right]\mathbb E[\kappa_tX_t|\mathcal F^0_t]\right).
\]
Here, $p$ is the adjoint variable to $X^0$ and $(q,r)$ are the adjoint variables to $(Y,X)$.
The system \eqref{leader-adjoint-q-r} is again a special case of \eqref{FBSDE-general}.

In order to establish our maximum principle we first consider a sequence of unconstrained optimization problems where the liquidation constraints are replaced by increasingly penalized open positions at the terminal time. The resulting optimal strategies for the Stackelberg leader turn out to be $L^2$ bounded, hence they have Cesaro convergent subsequence. From this we deduce that the sequence of state-adjoint equations for the penalized problems Cesaro converges to the system \eqref{MKV-FBSDE}, \eqref{state-leader}, \eqref{leader-adjoint-p} and \eqref{leader-adjoint-q-r}.

The rest of this paper is organized as follows. Our general existence, uniqueness and convergence results for the FBSDE \eqref{FBSDE-general} are established in Section \ref{section-ex-sta}. The MFC problem and the Stackelberg game of optimal portfolio liquidation introduced above are solved in Section \ref{section-MKV} and Section \ref{section-MFG}, respectively.

{\sc Notation and conventions.} Throughout, we work on probability space $(\Omega,\mathbb P,\mathcal F)$, on which there exist two independent Brownian motions $W^0$ and $\overline W$. We denote by $\mathbb F^0=(\mathcal F^0_t)_{0\leq t\leq T}$ and $\mathbb F=(\mathcal F_t)_{0\leq t\leq T}$ the filtrations generated by $W^0$ and $W$, augmented by the $\mathbb P$ null sets, respectively, where $W=(\overline W,W^0)$.
For a space $\mathbb I$ and a filtration $\mathbb G$ we introduce the following spaces:
\begin{align*}
    L^0_{\mathbb G}([0,T]\times\Omega;\mathbb I)=& \{X: X:[0,T]\times\Omega\rightarrow\mathbb I\textrm{ and }X\textrm{ is }\mathbb G\textrm{ progressively measurable and }\mathbb I\textrm{ valued}\} \\
    L^k_{\mathbb G}([0,T]\times\Omega;\mathbb I)=&\left\{X\in L^0_\mathbb G([0,T]\times\Omega;\mathbb I):\mathbb E\left[\int_0^T|X_t|^k\,dt\right]<\infty\right\},~k\geq 1\\
    L^\infty_{\mathbb G}([0,T]\times\Omega;\mathbb I)=&\left\{X\in L^0_\mathbb G([0,T]\times\Omega;\mathbb I):\esssup_{(t,\omega)\in[0,T]\times\Omega}|X_t(\omega)|<\infty\right\}.
\end{align*}
The spaces $L^k_{\mathbb G}$ are equipped the norm $\|X\|_{L^k}=\left(\mathbb E\left[\int_0^T|X_t|^k\,dt\right]\right)^{1/k}$. The spaces
\begin{equation*}
\begin{split}
     S^2_{\mathbb G}([0,T]\times\Omega;\mathbb I)=& \left\{X\in L^0_{\mathbb G}([0,T]\times\Omega;\mathbb I): %X\textrm{ admits continuous trajectory and } \right. \\ & \left. \qquad
     \mathbb E\left[\sup_{0\leq t\leq T}|X_t|^2\right]<\infty\right\} \\
      S^{2,-}_{\mathbb G}([0,T)\times\Omega;\mathbb I)=& \left\{X\in L^0_{\mathbb G}([0,T)\times\Omega;\mathbb I):\sup_{\epsilon > 0}\mathbb E\left[\sup_{0\leq t\leq T-\epsilon}|X_t|^2\right] \leq C\right\}
\end{split}
\end{equation*}
are equipped with the respective norms
\[
	\|X\|_{S^2}:=\left(\mathbb E\left[\sup_{0\leq t\leq T}|X_t|^2\right]\right)^{1/2};\|X\|_{S^{2,-}}:=\sup_{\epsilon\geq 0}\left(\mathbb E\left[\sup_{0\leq t\leq T-\epsilon}|X_t|^2\right]\right)^{1/2},
\]	
and for $\beta>0$ we introduce the space
\[
   {\mathcal H}_{\beta}=\left\{X\in \mathbb S^2_{\mathbb F}([0,T]\times\Omega;\mathbb I): E\left[\sup_{t\in[0,T]}\left|\frac{|X_t|}{(T-t)^{\beta}}\right|^2\right]<\infty\right\}  \mbox{ with }
   \|X\|_\beta:=\left(\mathbb E\left[\sup_{0\leq t\leq T}\left|\frac{X_t}{(T-t)^\beta}\right|^2\right]\right)^{1/2}.
\]
For $\phi\in L^\infty_{\mathbb G}([0,T]\times\Omega;\mathbb I)$, we denote by $\|\phi\|$ and $\phi_\star$ its upper and lower bounds, respectively. Finally, we adopt the convention that a positive constant $C$ may vary from line to line.

%%%%%%%%%%%%%%%%%%%%%%%%%%%%%%%%%%%
%%%%%%%%%%%%%%%%%%%%%%%%%%%%%%%%%%%
%%%%%%%%%%%%%%%%%%%%%%%%%%%%%%%%%%%

\section{The McKean-Vlasov FBSDE}\label{section-ex-sta}
In this section, we prove a general existence and uniqueness of solutions result (in a suitable space) for the FBSDE \eqref{FBSDE-general} along with the convergence result with respect to the processes $(\overline  f,\overline  g)$. We assume throughout that the system coefficients satisfy the following assumption.

\begin{ass}\label{ass-ex-sta}
\begin{itemize}
    \item[i)] The stochastic processes $\gamma$, $\zeta$, {$\varrho$} and $\Lambda^i$ $(i=1,\cdots,5)$ belong to $L^\infty_{\mathbb F}$.
    %\item $\widetilde\beta$ is a constant which is not $0$;
    \item[ii)] There exist constants $\theta_i>0$ $(i=1,2)$ such that
    \[
        \left(\Lambda^1-\frac{\|\gamma\||\Lambda^2|^2}{2\theta_1}-\frac{\|\Lambda^3\||\zeta|^2}{2\theta_2}\right)_\star>0
    \]
    and
    \[
        \left(\Lambda^4-\frac{\|\gamma\|\theta_1}{2}-\frac{\|\Lambda^3\|\theta_2}{2}-{\|\Lambda^5\|\|\varrho\|}\right)_\star>0.
    \]
    \item[iii)] The initial condition $\chi$ belongs to $L^2_{\mathbb F}$ and $(\overline f,\overline g)\in S^2_{\mathbb F}\times L^2_{\mathbb F}$.
\end{itemize}
\end{ass}

The linear ansatz $R= AQ+H$ on $[0,T)$ results in the following FBSDE for the triple $(Q,H,R)$:
{
\begin{equation}\label{FBSDE-appendix}
    \left\{\begin{aligned}
         dQ_t=&~\left(-\Lambda^1_tR_t-\Lambda^2_t\mathbb E\left[\left.\gamma_tQ_t\right|\mathcal F^0_t\right]+\overline f_t\right)\,dt,\\
        -dH_t=&~\left(-\Lambda^1_tA_tH_t-\Lambda^2_tA_t\mathbb E[\gamma_tQ_t|\mathcal F^0_t]+A_t\overline f_t+\Lambda^3_t\mathbb E[\zeta_tR_t|\mathcal F^0_t] \right. \\ & \left. \qquad +\Lambda^5_t\mathbb E[\varrho_tQ_t|\mathcal F^0_t]+\overline g_t\right)\,dt  -Z_t\,d W_t,\\
        -dR_t=&~\left(\Lambda^4_t Q_t+\Lambda^3_t\mathbb E[\zeta_tR_t|\mathcal F^0_s]+\Lambda^5_t\mathbb E[\varrho_tQ_t|\mathcal F^0_t]+\overline g_t\right)\,dt-Z_t\,d W_t,\\
        R=&~AQ+H,~t\in[0,T),\\
        Q_0=&~\chi,~Q_T=0,
    \end{aligned}\right.
\end{equation}}
where $A$ satisfies the singular BSDE
\begin{equation}\label{singular-BSDE}
    -dA_t=\left(\Lambda^4_t-\Lambda^1_tA^2_t\right)\,dt-Z_t\,d W_t,\quad \lim_{t\nearrow T}A_t=\infty.
\end{equation}
It has been shown in \cite{AJK-2014,GHS-2013} that the equation \eqref{singular-BSDE} is well-posed under Assumption \ref{ass-ex-sta} and that the following estimate holds:
\begin{equation}\label{estimate-A}
     \frac{1}{\mathbb E\left[\left.\int_t^T\Lambda^1_u\,du\right|\mathcal F_t\right]}\leq A_t\leq \frac{1}{(T-t)^2}\mathbb E\left[\left.\int_t^T\frac{1}{\Lambda^1_u}+(T-u)^2\Lambda^4_u\,du\right|\mathcal F_t\right].
\end{equation}
It follows from \eqref{estimate-A} that $A$ is nonnegative and that for all $0\leq t_1<t_2\leq T$,
\begin{equation}\label{estimate--exp-bar-C-appendix}
    e^{-\int_{t_1}^{t_2}\Lambda^1_sA_s\,ds}\leq C\left(\frac{T-t_2}{T-t_1}\right)^\beta\leq C\left(\frac{T-t_2}{T-t_1}\right)^\tau, \quad \mbox{where} \quad \beta:=\Lambda^1_\star/\|\Lambda^1\| \mbox{ and } 0\leq\tau\leq\beta.
\end{equation}

%%%%%%%%%%%%%%%%%%%%%%%%%%%%%%%%%%%
%%%%%%%%%%%%%%%%%%%%%%%%%%%%%%%%%%%
%%%%%%%%%%%%%%%%%%%%%%%%%%%%%%%%%%%

\subsection{Existence and uniqueness of solutions}

In view of \cite{FGHP-2018}, we expect to find a solution $(Q,H,R)$ to \eqref{FBSDE-appendix} such that $(Q,R) \in \mathcal H_\alpha\times L^2_{\mathbb F}$ for some $\alpha>0$. Unlike in \cite{FGHP-2018} the process $H$ is only defined on $[0,T)$. The following heuristics suggests that if we can find a solution such that $(Q,R) \in \mathcal H_\alpha\times L^2_{\mathbb F}$, then $H \in S^{2,-}_{\mathbb F}$. In fact, by the general solution formula for linear BSDEs,
for any $0\leq t<\widetilde T<T$,
\[
    H_t=\mathbb E\left[\left.H_{\widetilde T}e^{-\int_t^{\widetilde T}\Lambda^1_uA_u\,du}+\int_t^{\widetilde T}e^{-\int_t^s\Lambda^1_uA_u\,du}K_s\,ds\right|\mathcal F_t\right],
\]
where
\[
    K_s=\left(-\Lambda^2_sA_s\mathbb E[\gamma_sQ_s|\mathcal F^0_s]+A_s\overline f_s+\Lambda^3_s\mathbb E[\zeta_sR_s|\mathcal F^0_s]+{\Lambda^5_s\mathbb E[\varrho_sQ_s|\mathcal F^0_s]}+\overline g_s\right).
\]
If we knew that
\begin{equation}\label{upper-limit-H}
    \limsup_{\widetilde T\nearrow T}\mathbb E[|H_{\widetilde T}|^2]<\infty,
\end{equation}
then taking the limit $\widetilde T\nearrow T$ and using the estimate \eqref{estimate--exp-bar-C-appendix},
\begin{equation}\label{integral-H}
    H_t=\mathbb E\left[\left.\int_t^{T}e^{-\int_t^s\Lambda^1_uA_u\,du}K_s\,ds\right|\mathcal F_t\right].
\end{equation}
From this and using \eqref{estimate--exp-bar-C-appendix} again, we obtain a constant $C>0$ such that for any $\epsilon>0$,
\begin{equation*}
    \mathbb E\left[\sup_{0\leq t\leq T-\epsilon}|H_t|^2\right]\leq C\left(\|Q\|_\alpha+\|\overline f\|_{S^2}+\|R\|_{L^2}+\|\overline g\|_{L^2}\right).
\end{equation*}
Since \eqref{upper-limit-H} holds for $H \in S^{2,-}_{\mathbb F}$ our goal is to establish the existence and  uniqueness of a solution $(Q,H,R) \in \mathcal H_\alpha\times S^{2,-}_{\mathbb F}\times L^2_{\mathbb F}$. To this end, we apply a nested continuation method to the system:
\begin{equation}\label{continuation-r-q}
    \left\{\begin{aligned}
         dQ_t=&~\left(-\Lambda^1_tR_t-\Lambda^2_t\mathbb E\left[\left.\gamma_tQ_t\right|\mathcal F^0_t\right]+\overline f_t\right)\,dt,\\
        %-dD_t=&~\left(\frac{\overline C_tD_t}{2\eta_t}+\overline C_t\mathbb E[\kappa_tq_t|\mathcal F^0_t]\frac{1}{2\eta_t}-\overline C_tg_t\right)\,dt-Z_t\,d\overline W_t,\\
        -dH_t=&~\left(-\Lambda^1_t A_tH_t-\Lambda^2_tA_t\mathbb E[\gamma_tQ_t|\mathcal F^0_t]+A_t\overline f_t+\fp\Lambda^3_t\mathbb E[\zeta_tR_t|\mathcal F^0_t] \right. \\ & \left. \qquad \qquad +\fp\Lambda^5_t\mathbb E[\varrho_tQ_t|\mathcal F^0_t]+\overline g_t+f_t\right)\,dt-Z_t\,dW_t,\\
        -dR_t=&~\left(\Lambda^4_t Q_t+\fp\Lambda^3_t\mathbb E[\zeta_tR_t|\mathcal F^0_s]+\fp\Lambda^5_t\mathbb E[\varrho_tQ_t|\mathcal F^0_t]+\overline g_t+f_t\right)\,dt-Z_t\,d W_t,\\
        R=&~ AQ+H,~t\in[0,T),\\
        Q_0=&~\chi,~Q_T=0.
    \end{aligned}\right.
    \end{equation}

In a first step, we prove the existence of a unique solution to the above system for $\fp=0$. Subsequently, we show that the solution result extends to $\fp=1$.

\begin{lemma}\label{appendix-p-0}
If  $\fp=0$, then the FBSDE \eqref{continuation-r-q} has a solution in $\mathcal H_\alpha\times S^{2,-}_{\mathbb F}\times L^2_{\mathbb F}$ for any $f\in L^2_{\mathbb F}$, where $0<\alpha<\beta$.
\end{lemma}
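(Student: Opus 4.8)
Since $\fp=0$, the conditional means weighted by $\Lambda^3$ and $\Lambda^5$ disappear from the drivers of the $H$- and $R$-equations in \eqref{continuation-r-q}; in particular neither $R$ nor $H$ enters the driver of the $R$-equation, and $R$ does not enter that of the $H$-equation. Throughout I treat $A$, the solution of the singular BSDE \eqref{singular-BSDE}, as a given process and use only the bounds \eqref{estimate-A}--\eqref{estimate--exp-bar-C-appendix}; in particular $0\le A_t\le C/(T-t)$. The genuine unknowns then reduce to the pair $(Q,H)$, coupled through $R=AQ+H$, so the plan is: construct $(Q,H)$, set $R:=AQ+H$ on $[0,T)$, verify that the triple solves \eqref{continuation-r-q}, and finally show $R\in L^2_{\mathbb F}$.

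To construct $(Q,H)$ I would run a second, inner continuation in the mean-field coupling, replacing $\Lambda^2$ by $s\Lambda^2$ for $s\in[0,1]$. At $s=0$ the system is triangular: the $H$-equation is a linear BSDE whose driver does not involve $Q$, so $H$ is given outright by the representation \eqref{integral-H} with $K_s=A_s\overline f_s+\overline g_s+f_s$; then $Q$ solves $dQ_t=(-\Lambda^1_tA_tQ_t-\Lambda^1_tH_t+\overline f_t)\,dt$, $Q_0=\chi$, by variation of constants with integrating factor $e^{-\int\Lambda^1 A}$; and $R:=AQ+H$. For $s\in(0,1]$ one adds the cross-term $s\Lambda^2 A\,\mathbb E[\gamma Q|\mathcal F^0]$ back into the $H$-driver (and $s\Lambda^2\,\mathbb E[\gamma Q|\mathcal F^0]$ into the $Q$-drift) and extends by the standard continuation argument, the required uniform a priori estimate being the energy estimate below. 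In both steps, checking $H\in S^{2,-}_{\mathbb F}$ and $Q\in\mathcal H_\alpha$ rests on \eqref{estimate--exp-bar-C-appendix} used with an exponent $\tau$ strictly between $\alpha$ and $\beta$ — this is exactly where the hypothesis $\alpha<\beta$ is consumed: the term $A\,\mathbb E[\gamma Q|\mathcal F^0]$ in $K$ contributes a bound of order $(T-\cdot)^\alpha$ to $H$, the term $A\overline f$ a bounded contribution using $\overline f\in S^2_{\mathbb F}$, and $\overline g+f\in L^2_{\mathbb F}$ a time-integrable bounded contribution via Cauchy--Schwarz; while for $Q$ the homogeneous part $e^{-\int_0^t\Lambda^1 A}\chi$ is of order $(T-t)^\beta\le C(T-t)^\alpha$ and its forcing is handled the same way. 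The constraint $Q_T=0$ in \eqref{continuation-r-q} then holds automatically for any $Q\in\mathcal H_\alpha$ with $\alpha>0$.

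It remains to check the $R$-equation and $R\in L^2_{\mathbb F}$. For the former, apply Itô's product rule to $R=AQ+H$; since $Q$ is of finite variation, $dR=Q\,dA+A\,dQ+dH$, and after substituting \eqref{singular-BSDE} and the $Q$- and $H$-dynamics the quadratic term $\Lambda^1 A^2 Q$ from $Q\,dA$ combines with $-\Lambda^1 AR$ from $A\,dQ$ and $\Lambda^1 AH$ from $dH$ into $\Lambda^1 A(AQ-R+H)=0$, leaving precisely $-dR=(\Lambda^4_tQ_t+\overline g_t+f_t)\,dt-Z_t\,dW_t$. For $R\in L^2_{\mathbb F}$ a pointwise bound is not available — the crude estimate $|R_t|\lesssim(T-t)^{\alpha-1}$ from $Q\in\mathcal H_\alpha$ need not be square-integrable for small $\alpha$ — so I would argue by an energy estimate: Itô applied to $Q_tR_t$ on $[0,\widetilde T]$, followed by taking expectations, gives
\begin{equation*}
\begin{aligned}
\mathbb E\!\int_0^{\widetilde T}\!\!\big(\Lambda^4_tQ_t^2+\Lambda^1_tR_t^2\big)\,dt
&=\mathbb E[\chi R_0]-\mathbb E[Q_{\widetilde T}R_{\widetilde T}]\\
&\quad+\mathbb E\!\int_0^{\widetilde T}\!\!\big(R_t\overline f_t-Q_t\overline g_t-Q_tf_t-\Lambda^2_tR_t\,\mathbb E[\gamma_tQ_t|\mathcal F^0_t]\big)\,dt .
\end{aligned}
\end{equation*}
Here $Q_{\widetilde T}R_{\widetilde T}=A_{\widetilde T}Q_{\widetilde T}^2+Q_{\widetilde T}H_{\widetilde T}\ge-|Q_{\widetilde T}||H_{\widetilde T}|$ because $A\ge0$, and since $\|Q_{\widetilde T}\|_{L^2(\Omega)}\to0$ (as $Q\in\mathcal H_\alpha$) while $H\in S^{2,-}_{\mathbb F}$, the boundary term is $o(1)$; $R_0=A_0\chi+H_0\in L^2$ bounds $\mathbb E[\chi R_0]$; and the remaining terms, including $\Lambda^2_tR_t\,\mathbb E[\gamma_tQ_t|\mathcal F^0_t]$, are absorbed into the left-hand side by Young's inequality with the constants $\theta_1,\theta_2$ of Assumption \ref{ass-ex-sta}(ii). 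This yields $\mathbb E\int_0^{\widetilde T}R_t^2\,dt\le C$ uniformly in $\widetilde T$, and $\widetilde T\nearrow T$ gives $R\in L^2_{\mathbb F}$.

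The main difficulty I anticipate is concentrated at the terminal time and is twofold: carrying out the weight bookkeeping so that the constructed $(Q,H)$ genuinely lie in $\mathcal H_\alpha\times S^{2,-}_{\mathbb F}$ and the continuation/contraction constants stay controlled despite $A_t\to\infty$ (here $\alpha<\beta$ is essential); and establishing $R\in L^2_{\mathbb F}$ near $T$, which cannot be done pointwise and instead relies on the sign of $A$ in the energy identity together with the coercivity built into Assumption \ref{ass-ex-sta}(ii).
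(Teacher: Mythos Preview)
Your proposal is correct and follows essentially the same approach as the paper: an inner continuation in the $\Lambda^2$-coupling (the paper calls the parameter $\overline\fp$), explicit integral formulas for $H$ and $Q$ at the decoupled endpoint, the weighted estimates based on \eqref{estimate--exp-bar-C-appendix} to place $(Q,H)$ in $\mathcal H_\alpha\times S^{2,-}_{\mathbb F}$, and the energy identity for $QR$ together with $A\ge0$ to obtain $R\in L^2_{\mathbb F}$. One small clarification worth making explicit: the continuation cannot be closed from the energy identity alone --- that gives only $L^2$-control of $(\widetilde Q-\widetilde Q',\widetilde R-\widetilde R')$ --- and the paper bootstraps this through the chain $L^2\Rightarrow S^2$ for $Q$, $\Rightarrow S^{2,-}$ for $H$, $\Rightarrow \mathcal H_\alpha$ for $Q$ (its estimates \eqref{p-bar-0-L2-Q}--\eqref{p-bar-0-S2-H}) before concluding the contraction in $\mathcal H_\alpha$; you have all the ingredients for this chain but should state that it, not just the energy estimate, drives the contraction.
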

\begin{proof}
Notice that the system \eqref{continuation-r-q} is still coupled for $\fp=0$. To solve it, we apply a continuation method to the following system:
\begin{equation}\label{appendix-p-0-p-bar}
    \left\{\begin{split}
         dQ_t=&~\left(-\Lambda^1_tR_t-\overline\fp\Lambda^2_t\mathbb E\left[\left.\gamma_tQ_t\right|\mathcal F^0_t\right]+\overline f_t+b'_t\right)\,dt,\\
        %-dD_t=&~\left(\frac{\overline C_tD_t}{2\eta_t}+\overline C_t\mathbb E[\kappa_tq_t|\mathcal F^0_t]\frac{1}{2\eta_t}-\overline C_tg_t\right)\,dt-Z_t\,d\overline W_t,\\
        -dH_t=&~\left(-\Lambda^1_t A_tH_t-\overline\fp\Lambda^2_tA_t\mathbb E[\gamma_tQ_t|\mathcal F^0_t]+A_t\overline f_t+\overline g_t+f_t+f'_t\right)\,dt-Z_t\,dW_t,\\
        -dR_t=&~\left(\Lambda^4_t Q_t+\overline g_t+f_t+f'_t-A_tb'_t\right)\,dt-Z_t\,d W_t,\\
        R=&~  AQ+H,~t\in[0,T),\\
        Q_0=&~\chi, ~Q_T=0.
    \end{split}\right.
\end{equation}

\textbf{Step 1.} {\it For $\overline\fp=0$, the system \eqref{appendix-p-0-p-bar} is solvable in $\mathcal H_\alpha \times S^{2,-}_{\mathbb F} \times L^2_{\mathbb F}$ for any $(b',f')\in\mathcal H_{\alpha}\times\mathcal H_{\alpha-1}$.}

If $\overline\fp=0$, then the system \eqref{appendix-p-0-p-bar} is decoupled and we let $H$ be
\begin{equation}\label{remark-equation-1}
    \begin{split}
    H_t=&~\mathbb E\left[\left.\int_t^Te^{-\int_t^s\Lambda^1_u A_u\,du}\left(A_s\overline f_s+\overline g_s+f_s+f'_s\right)\,ds\right|\mathcal F_t\right],~0\leq t<T.
   % \leq&~C\mathbb E\left[\left.\sup_{0\leq t\leq T}|\overline f_t|+\int_0^T(|g_s|+|f_s|+|f'_s|)\,ds\right|\mathcal F_t\right].
    \end{split}
\end{equation}
Moreover, by the estimate \eqref{estimate--exp-bar-C-appendix} and Doob's maximal inequality, we have for any $\epsilon>0$,
\begin{equation}\label{p-bar-0-H-estimate}
    \mathbb E\left[\sup_{0\leq t\leq T-\epsilon}|H_t|^2\right]\leq C\left(\|\overline f\|_{S^2}+\|\overline g\|_{L^2}+\|f\|_{L^2}+\|f'\|_{\alpha-1}\right),
\end{equation}
where $C$ is independent of $\epsilon$. Thus, $H$ belongs to $S^{2,-}_{\mathbb F}$ and satisfies the SDE in \eqref{appendix-p-0-p-bar}.

We now turn to the process $Q$.
Taking $R=AQ+H$ into the SDE for $Q$ yields,
\begin{equation}\label{remark-equation-2}
    Q_t=\chi e^{-\int_0^t\Lambda^1_uA_u\,du}+\int_0^te^{-\int_s^t\Lambda^1_uA_u\,du}\left(-\Lambda^1_sH_s+\overline f_s+b'_s\right)\,ds,~0\leq t\leq T.
\end{equation}
Using monotone convergence and the estimate \eqref{p-bar-0-H-estimate} this implies,
\begin{equation}\label{p-bar-0-Q-S2-T-epsilon}
    \begin{split}
    &~\mathbb E\left[\sup_{0\leq t\leq T}\left|\frac{Q_t}{(T-t)^\alpha}\right|^2\right]\\
    \leq&~ C\left(\|\chi\|_{L^2}+\mathbb E\left[\left(\int_0^{T}\frac{|H_s|}{(T-s)^\alpha}\,ds\right)^2\right]+\|\overline f\|_{S^2}+\|b'\|_{\alpha}\right)\\
    =&~ C\left(\|\chi\|_{L^2}+\lim_{\epsilon\searrow 0}\mathbb E\left[\left(\int_0^{T-\epsilon}\frac{|H_s|}{(T-s)^\alpha}\,ds\right)^2\right]+\|\overline f\|_{S^2}+\|b'\|_{\alpha}\right) \\%~~(\textrm{monotone convergence})\\
    \leq&~C\left(\|\chi\|_{L^2}+\lim_{\epsilon\searrow 0}\mathbb E\left[\sup_{0\leq t\leq T-\epsilon}|H_t|^2\right]+\|\overline f\|_{S^2}+\|b'\|_{\alpha}\right)\\
    \leq&~C\left(\|\chi\|_{L^2}+\|\overline f\|_{S^2}+\|\overline g\|_{L^2}+\|f\|_{L^2}+\|f'\|_{\alpha-1}+\|b'\|_{\alpha}\right).%~~(\textrm{by estimate }\eqref{p-bar-0-H-estimate}).
    \end{split}
\end{equation}
This shows that $Q\in\mathcal H_\alpha$.
Integration by parts for the product $QR$ on $[0,T-\epsilon]$ yields,
\begin{equation*}
    \begin{split}
        H_{T-\epsilon}Q_{T-\epsilon} & \leq A_{T-\epsilon}Q_{T-\epsilon}^2+H_{T-\epsilon}Q_{T-\epsilon}
        =Q_{T-\epsilon}R_{T-\epsilon}\\
        & \leq -\int_0^{T-\epsilon}\left(Q^2_t+R^2_t\right)\,dt+CA_0\chi^2+|\chi H_0|+C\int_0^{T-\epsilon}|Q_t||\overline g_t+f_t+f'_t+A_tb'_t|\,dt \\ & \qquad +\int_0^{T-\epsilon}Q_tZ_t\,d\overline W_t.
    \end{split}
\end{equation*}
Taking expectations on both sides we have
\begin{equation*}
    \begin{split}
        &~\mathbb E\left[\int_0^{T-\epsilon}\left(Q^2_t+R^2_t\right)\,dt\right]\\
        \leq&~ \mathbb E[CA_0\chi^2]+\mathbb E[|\chi H_0|]+C\mathbb E\left[\int_0^{T-\epsilon}|Q_t||\overline g_t+f_t+f'_t+A_tb'_t|\,dt\right]+\mathbb E[|H_{T-\epsilon}Q_{T-\epsilon}|]\\
        \leq&~C\left(\mathbb E[A_0\chi^2]+C\mathbb E[|\chi H_0|]+\|Q\|_\alpha\right)\\
        &~+C\left(\|\overline g\|_{L^2}+\|f\|_{L^2}+\|f'\|_{\alpha-1}+\|b'\|_\alpha\right)+\mathbb E[|H_{T-\epsilon}Q_{T-\epsilon}|],
        %\leq&~C\left(\|\chi\|_{L^2}+\|g\|_{L^2}+\|f\|_{L^2}+\|f'\|_{\alpha-1}+\|b'\|_\alpha+\|\overline f\|_{S^2}\right)+\mathbb E[|H_{T-\epsilon}Q_{T-\epsilon}|]~(\textrm{by }\eqref{p-bar-0-Q-S2-T-epsilon}\textrm{ and }\eqref{p-bar-0-H-estimate}),
    \end{split}
\end{equation*}
Thus, by taking $\epsilon\rightarrow 0$, from \eqref{estimate-A}, \eqref{p-bar-0-H-estimate} and \eqref{p-bar-0-Q-S2-T-epsilon} we get $R\in L^2_{\mathbb F}$.

\textbf{Step 2.} {\it If \eqref{appendix-p-0-p-bar} admits a solution for some $\overline\fp\in[0,1]$ and for any $(b',f')\in\mathcal H_{\alpha}\times\mathcal H_{\alpha-1}$, then the same holds for $\overline\fp+\overline\fd$ for some constant $\overline\fd$} that does not depend on $\overline\fp$.

For fixed $Q\in\mathcal H_\alpha$, since
\[
    -\overline\fd\Lambda^2\mathbb E\left[\left.\gamma Q \right|\mathcal F^0\right]+b'\in\mathcal H_\alpha,\quad -\overline\fd\Lambda^2A\mathbb E[\gamma Q|\mathcal F^0]+f'\in\mathcal H_{\alpha-1},
\]
there exists a solution $(\widetilde Q,\widetilde H,\widetilde R)\in\mathcal H_\alpha\times S^{2,-}_{\mathbb F}\times L^2_{\mathbb F}$ to the following system:
\begin{equation}\label{appendix-p-0-p-bar-d-bar}
    \left\{\begin{split}
         d\widetilde Q_t=&~\left(-\Lambda^1_t\widetilde R_t-\overline\fp\Lambda^2_t\mathbb E\left[\left.\gamma_t\widetilde Q_t\right|\mathcal F^0_t\right]-\overline\fd\Lambda^2_t\mathbb E\left[\left.\gamma_t Q_t\right|\mathcal F^0_t\right]+\overline f_t+b'_t\right)\,dt,\\
        %-dD_t=&~\left(\frac{\overline C_tD_t}{2\eta_t}+\overline C_t\mathbb E[\kappa_tq_t|\mathcal F^0_t]\frac{1}{2\eta_t}-\overline C_tg_t\right)\,dt-Z_t\,d\overline W_t,\\
       -d\widetilde H_t=&~\left(-\Lambda^1_t A_t\widetilde H_t-\overline\fp\Lambda^2_tA_t\mathbb E[\gamma_t\widetilde Q_t|\mathcal F^0_t] - \overline\fd\Lambda^2_tA_t\mathbb E[\gamma_t Q_t|\mathcal F^0_t] \right. \\ & \left. \qquad \qquad  +A_t\overline f_t+\overline g_t+f_t+f'_t\right)\,dt-Z_t\,d W_t,\\
        -d\widetilde R_t=&~\left(\Lambda^4_t \widetilde Q_t+\overline g_t+f_t+f'_t-A_tb'_t\right)\,dt-Z_t\,d W_t,\\
        \widetilde R=&~ A\widetilde Q+\widetilde H,~t\in[0,T),\\
        \widetilde Q_0=&~\chi, ~\widetilde Q_T=0.
    \end{split}\right.
\end{equation}
It remains to prove that the mapping $\Phi: \mathcal H_\alpha \rightarrow \mathcal H_\alpha$, $Q \mapsto \widetilde Q$ is a contraction when $\overline\fd$ is small enough and independent of $\overline\fp$.
For any $Q,~Q'\in\mathcal H_\alpha$, let $(\widetilde Q,\widetilde H,\widetilde R)$ and $(\widetilde Q',\widetilde H',\widetilde R')$ be the corresponding solutions. Integration by parts for $(\widetilde Q-\widetilde Q')(\widetilde R-\widetilde R')$ on $[0,T-\epsilon]$ implies,
\begin{equation*}
    \begin{split}
       &~ \mathbb E\left[\int_0^{T-\epsilon}\left(\Lambda^4_s-\frac{\|\gamma\|\theta_1}{2}\right)(\widetilde Q_s-\widetilde Q'_s)^2\,dts\right]+\mathbb E\left[\int_0^{T-\epsilon}\left(\Lambda^1_s-\frac{\|\gamma\|(\Lambda^2_s)^2}{2\theta^1}\right)(\widetilde R_s-\widetilde R'_s)^2\,dt\right]\\
       \leq&~C\mathbb E\left[|\widetilde Q_{T-\epsilon}\widetilde H_{T-\epsilon}|\right]+\varepsilon\mathbb E\left[\int_0^{T-\epsilon}\left(\widetilde R_s-\widetilde R'_s\right)^2\,ds\right]+C\overline\fd\mathbb E\left[\int_0^{T-\epsilon}(Q_t-Q'_t)^2\,dt\right].
    \end{split}
\end{equation*}
Letting $\epsilon\rightarrow 0$ and choosing $\varepsilon$ small enough, Assumption \ref{ass-ex-sta} yields,
\begin{equation}\label{p-bar-0-L2-Q}
    \mathbb E\left[\int_0^{T}(\widetilde Q_s-\widetilde Q'_s)^2\,dts\right]+\mathbb E\left[\int_0^{T}(\widetilde R_s-\widetilde R'_s)^2\,ds\right]\leq C\overline\fd\mathbb E\left[\int_0^{T}(Q_t-Q'_t)^2\,dt\right].
\end{equation}
Considering the SDE for $\widetilde Q$ in terms of $\widetilde R$, by \eqref{p-bar-0-L2-Q} we have
\begin{equation}\label{p-bar-0-S2-Q}
    \mathbb E\left[\sup_{0\leq t\leq T}|\widetilde Q_t-\widetilde Q'_t|^2\right]\leq C\overline\fd\mathbb E\left[\int_0^{T}(Q_t-Q'_t)^2\,dt\right].
\end{equation}
Since $\widetilde H\in S^{2,-}_{\mathbb F}$, we have the following expression:
\begin{equation}\label{p-bar-widetildetilde-H}
    \begin{split}
    \widetilde H_t=&~\mathbb E\left[\left.\int_t^Te^{-\int_t^s\Lambda^1_uA_u\,du}\left(-\overline\fp\Lambda^2_tA_t\mathbb E[\gamma_t\widetilde Q_t|\mathcal F^0_t] - \overline\fd\Lambda^2_tA_t\mathbb E[\gamma_t Q_t|\mathcal F^0_t] \right.\right. \right.\\
    &~\left.\left. \left.\left. \qquad \qquad  +A_t\overline f_t+\overline g_t+f_t+f'_t\right)\,dt\right)\right|\mathcal F_t\right].
    \end{split}
\end{equation}
From \eqref{p-bar-widetildetilde-H}, Doob's maximal inequality and \eqref{p-bar-0-S2-Q} yield that for any $\epsilon>0$
\begin{equation}\label{p-bar-0-S2-H}
    \begin{split}
        & ~ \mathbb E\left[\sup_{0\leq t\leq T-\epsilon}|\widetilde H_t-\widetilde H'_t|^2\right] \\
        \leq&~ C\mathbb E\left\{\sup_{0\leq t\leq T}\left|\mathbb E\left[\int_t^T\frac{(T-s)^{\beta-1}}{(T-t)^\beta}\mathbb E[\left.|\widetilde Q_s-\widetilde Q'_s||\mathcal F^0_s]\,ds\right|\mathcal F_t\right]\right|^2\right\}\\
        &~+C\overline\fd\mathbb E\left\{\sup_{0\leq t\leq T}\left|\mathbb E\left[\int_t^T\frac{(T-s)^{\beta-1}}{(T-t)^\beta}\mathbb E[\left.|Q_s-Q'_s||\mathcal F^0_s]\,ds\right|\mathcal F_t\right]\right|^2\right\}\\
        \leq&~C\mathbb E\left\{\sup_{0\leq t\leq T}\left|\mathbb E\left[\sup_{0\leq s\leq T}\mathbb E[\left.|\widetilde Q_s-\widetilde Q'_s||\mathcal F^0_s]\right|\mathcal F_t\right]\right|^2\right\}\\
        &~+C\mathbb E\left\{\sup_{0\leq t\leq T}\left|\mathbb E\left[\sup_{0\leq s\leq T}\mathbb E[\left.|Q_s-Q'_s||\mathcal F^0_s]\right|\mathcal F_t\right]\right|^2\right\}\\
        \leq&~C\overline\fd\mathbb E\left[\int_0^T(Q_t-Q'_t)^2\,dt\right]+C\overline\fd\mathbb E\left[\sup_{0\leq t\leq T}\left|Q_t-Q'_t\right|^2\right],
    \end{split}
\end{equation}
where $C$ is independent of $\epsilon$.
Finally, considering the SDE for $\widetilde Q$ in terms of $\widetilde H$, by \eqref{p-bar-0-S2-Q}, \eqref{p-bar-0-S2-H} and the same argument as \eqref{p-bar-0-Q-S2-T-epsilon}, we have
\[
    \|\widetilde Q-\widetilde Q'\|_\alpha\leq C\overline\fd\|Q-Q'\|_\alpha.
\]
Thus, when $\overline\fd$ is small enough, $\Phi$ is a contraction. Iterating the argument finitely often and letting $f'=b'=0$ yields the desired result.
\end{proof}

\begin{theorem}\label{existence-appendix}
The FBSDE system \eqref{FBSDE-appendix} admits a unique solution $(Q,H,R)\in\mathcal H_\alpha\times S^{2,-}_{\mathbb F}\times L^2_{\mathbb F}$, where $0<\alpha<\beta$.
\end{theorem}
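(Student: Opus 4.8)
The plan is to obtain Theorem~\ref{existence-appendix} by upgrading Lemma~\ref{appendix-p-0} from $\fp=0$ to $\fp=1$ via the outer continuation argument, and then separately arguing uniqueness.

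\textbf{Existence.} First I would observe that Lemma~\ref{appendix-p-0} provides solvability of \eqref{continuation-r-q} at $\fp=0$ for every $f\in L^2_{\mathbb F}$, which is the base case. The continuation claim I would establish is: if \eqref{continuation-r-q} is solvable in $\mathcal H_\alpha\times S^{2,-}_{\mathbb F}\times L^2_{\mathbb F}$ for some $\fp\in[0,1]$ and every $f\in L^2_{\mathbb F}$, then the same holds for $\fp+\fd$ for some $\fd>0$ independent of $\fp$. Given the assumed solvability at level $\fp$, for a frozen $(\widehat Q,\widehat R)$ one feeds the extra terms $\fd\Lambda^3\mathbb E[\zeta\widehat R|\mathcal F^0]+\fd\Lambda^5\mathbb E[\varrho\widehat Q|\mathcal F^0]$ into the inhomogeneity $f$ of the $R$- and $H$-equations (they lie in $L^2_{\mathbb F}$ since $\widehat Q\in\mathcal H_\alpha\subset L^2_{\mathbb F}$, $\widehat R\in L^2_{\mathbb F}$ and the coefficients are bounded), solve at level $\fp$, and call the solution operator $\Psi:(\widehat Q,\widehat R)\mapsto(Q,R)$ on $\mathcal H_\alpha\times L^2_{\mathbb F}$. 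The contraction estimate is the familiar one: for two inputs take the difference of the two systems, apply integration by parts to $(Q-Q')(R-R')$ on $[0,T-\epsilon]$, use \eqref{estimate-A} to kill the boundary term $\mathbb E[|Q_{T-\epsilon}H_{T-\epsilon}|]$ as $\epsilon\downarrow0$ (here $\widehat Q,\widehat Q'\in\mathcal H_\alpha$ with $\alpha<\beta$ is exactly what is needed so the boundary term vanishes), invoke Assumption~\ref{ass-ex-sta}ii) to absorb the quadratic terms in $Q-Q'$ and $R-R'$, and obtain an $L^2$-contraction with factor $C\fd$; then promote to the $\mathcal H_\alpha$- and $S^{2,-}_{\mathbb F}$-norms exactly as in the proof of Lemma~\ref{appendix-p-0} (solve the $Q$-SDE in terms of $R$, then the representation \eqref{integral-H} for $H$ together with Doob and \eqref{estimate--exp-bar-C-appendix}, then the $Q$-SDE in terms of $H$). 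Since the mesh $\fd$ is uniform, finitely many steps reach $\fp=1$; setting $f=0$ gives a solution of \eqref{FBSDE-appendix} in $\mathcal H_\alpha\times S^{2,-}_{\mathbb F}\times L^2_{\mathbb F}$.

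\textbf{Uniqueness.} Suppose $(Q,H,R)$ and $(Q',H',R')$ are two solutions in the stated space. Set $\delta Q=Q-Q'$, etc. Both triples satisfy $\delta R=A\,\delta Q+\delta H$ on $[0,T)$, $\delta Q_0=0$, $\delta Q_T=0$, and the driver differences are the linear McKean--Vlasov expressions with \emph{no} inhomogeneity. Applying integration by parts to $\delta Q\,\delta R$ on $[0,T-\epsilon]$, taking expectations to remove the martingale part, and using $\delta Q_0=0$, one gets
\[
 \mathbb E[\delta Q_{T-\epsilon}\,\delta R_{T-\epsilon}]
 = -\,\mathbb E\!\int_0^{T-\epsilon}\!\Big(\Lambda^4_s\,\delta Q_s^2+\Lambda^1_s\,\delta R_s^2
  + (\text{cross terms})\Big)\,ds.
\]
The cross terms are handled by Young's inequality with the constants $\theta_1,\theta_2$ of Assumption~\ref{ass-ex-sta}ii), using that $\gamma,\zeta,\varrho,\Lambda^i$ are bounded and that conditional expectations are $L^2$-contractions; this makes the right-hand side bounded above by $-c\,\mathbb E\!\int_0^{T-\epsilon}(\delta Q_s^2+\delta R_s^2)\,ds$ for some $c>0$. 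For the left-hand side, write $\delta Q_{T-\epsilon}\,\delta R_{T-\epsilon}=A_{T-\epsilon}\,\delta Q_{T-\epsilon}^2+\delta Q_{T-\epsilon}\,\delta H_{T-\epsilon}\ge \delta Q_{T-\epsilon}\,\delta H_{T-\epsilon}$ since $A\ge0$, and since $\delta Q\in\mathcal H_\alpha$ and $\delta H\in S^{2,-}_{\mathbb F}$ one has $\mathbb E[|\delta Q_{T-\epsilon}\,\delta H_{T-\epsilon}|]\le \|\delta Q\|_\alpha\,(T-\epsilon)^\alpha\,\big(\sup_{\epsilon}\mathbb E\sup_{[0,T-\epsilon]}|\delta H|^2\big)^{1/2}\to0$ as $\epsilon\downarrow0$. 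Hence $c\,\mathbb E\!\int_0^T(\delta Q_s^2+\delta R_s^2)\,ds\le 0$, forcing $\delta Q\equiv0$ and $\delta R\equiv0$; then $\delta H=\delta R-A\,\delta Q\equiv0$ on $[0,T)$, and $\delta Z\equiv0$ follows from the BSDE. This proves uniqueness.

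\textbf{Main obstacle.} The delicate point is not the continuation bookkeeping but controlling the boundary term $\mathbb E[|Q_{T-\epsilon}H_{T-\epsilon}|]$ (and its increment version) at the terminal time: $H$ is merely in $S^{2,-}_{\mathbb F}$ and is \emph{not} bounded near $T$, while $A$ blows up like $(T-t)^{-1}$; it is precisely the membership $Q\in\mathcal H_\alpha$ with $0<\alpha<\beta=\Lambda^1_\star/\|\Lambda^1\|$, combined with the sharp decay estimate \eqref{estimate--exp-bar-C-appendix} for $e^{-\int\Lambda^1 A}$, that guarantees $Q_{T-\epsilon}\to0$ fast enough to beat the growth of $H$. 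One must also check that the threshold $\alpha<\beta$ survives the contraction step — i.e. that solving at level $\fp$ returns a $Q$ in the \emph{same} $\mathcal H_\alpha$ — which it does because the representation \eqref{integral-H} together with \eqref{estimate--exp-bar-C-appendix} produces precisely the weight $(T-t)^\alpha$ with $\alpha<\beta$, as in \eqref{p-bar-0-Q-S2-T-epsilon}. The rest is routine Gronwall/Young estimation.
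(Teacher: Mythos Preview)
Your proposal is correct and follows essentially the same route as the paper: the outer continuation in $\fp$ with a fixed-point map on $\mathcal H_\alpha\times L^2_{\mathbb F}$, the integration-by-parts estimate to get the $L^2$ contraction, and the bootstrap to $S^{2,-}$ and $\mathcal H_\alpha$ via the representations of $H$ and $Q$ are exactly the paper's Step~1, while your uniqueness argument is a slightly streamlined version of the paper's Step~2 (the paper redundantly re-derives vanishing of the $S^2$, $S^{2,-}$ and $\mathcal H_\alpha$ norms after obtaining $\delta Q=\delta R=0$ in $L^2$). Two small slips to fix: in the boundary-term estimate you want $\epsilon^\alpha$, not $(T-\epsilon)^\alpha$, and the processes whose $\mathcal H_\alpha$/$S^{2,-}$ membership kills the boundary term in the contraction step are the \emph{outputs} $Q,Q',H,H'$ of $\Psi$, not the frozen inputs $\widehat Q,\widehat Q'$.
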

\begin{proof}
We first prove the existence of a solution. In a second step we prove the uniqueness of solutions.

\textbf{Step 1.} {\it Existence of a solution}. By Lemma \ref{appendix-p-0}, the FBSDE system \eqref{continuation-r-q} admits a solution $(Q,H,R)\in\mathcal H_\alpha\times  S^{2,-}_{\mathbb F}\times L^2_{\mathbb F}$ when $\fp=0$, for any $f\in L^2_{\mathbb F}$. Hence it remains to prove that if for some $\fp\in[0,1]$ the system \eqref{continuation-r-q} admits a solution for any $f\in L^2_{\mathbb F}$, then the same result holds true for $\fp+\fd$ for some small enough constant $\fd$ that is independent of $\fp$. The proof is similar to proof of Lemma \ref{appendix-p-0}.

For any fixed $(Q,R,f)\in \mathcal H_\alpha\times L^2_{\mathbb F}\times L^2_{\mathbb F}$, we introduce the following system:
\begin{equation}\label{continuation-step2}
    \left\{\begin{aligned}
        d\widetilde Q_t=&~\left(-\Lambda^1_t\widetilde R_t-\Lambda^2_t\mathbb E\left[\left.\gamma_t\widetilde Q_t\right|\mathcal F^0_t\right]+\overline f_t\right)\,dt,\\
        %-dD_t=&~\left(\frac{\overline C_tD_t}{2\eta_t}+\overline C_t\mathbb E[\kappa_tq_t|\mathcal F^0_t]\frac{1}{2\eta_t}-\overline C_tg_t\right)\,dt-Z_t\,d\overline W_t,\\
        -d\widetilde H_t=&~\left(-\Lambda^1_t A_t\widetilde H_t-\Lambda^2_tA_t\mathbb E[\gamma_t\widetilde Q_t|\mathcal F^0_t]+A_t\overline f_t+\fp\Lambda^3_t\mathbb E[\zeta_t\widetilde R_t|\mathcal F^0_t]+\fp\Lambda^5_t\mathbb E[\varrho_t\widetilde Q_t|\mathcal F^0_t]+\overline g_t\right)\,dt,\\
        & \qquad \qquad +\left(f_t+\fd\Lambda^3_t\mathbb E[\zeta_t R_t|\mathcal F^0_t]+\fd\Lambda^5_t\mathbb E[\varrho_tQ_t|\mathcal F^0_t]\right)\,dt-Z_t\,d W_t,\\
        -d\widetilde R_t=&~\left(\Lambda^4_t \widetilde Q_t+\fp\Lambda^3_t\mathbb E[\zeta_t\widetilde R_t|\mathcal F^0_s]+\fd\Lambda^3_t\mathbb E[\zeta_t R_t|\mathcal F^0_s]+\fp\Lambda^5_t\mathbb E[\varrho_t\widetilde Q_t|\mathcal F^0_t]+\fd\Lambda^5_t\mathbb E[\varrho_tQ_t|\mathcal F^0_t] \right. \\ & \qquad \qquad \left.+\overline g_t+f_t\right)\,dt-Z_t\,d W_t,\\
        \widetilde R=&~ A\widetilde Q+\widetilde H,~t\in[0,T),\\
        \widetilde Q_0=&~\chi, ~\widetilde Q_T=0.
    \end{aligned}\right.
\end{equation}
Since $f+\fd\Lambda^3\mathbb E[\zeta R|\mathcal F^0]+\fd\Lambda^5\mathbb E[\varrho Q|\mathcal F^0]\in L^2_{\mathbb F}$,
%\[
%    f+\fd\Lambda^3\mathbb E[\zeta R|\mathcal F^0]+\textcolor{red}{\fd\Lambda^5\mathbb E[\varrho Q|\mathcal F^0]}\in L^2_{\mathbb F}
%\]
there exists a solution $(\widetilde Q,\widetilde H,\widetilde R)\in\mathcal H_\alpha\times S^{2,-}_{\mathbb F}\times L^2_{\mathbb F}$ by assumption. This defines a mapping
\begin{equation}\label{mapping-continuation-step2}
    \Phi:(Q,R)\in\mathcal H_\alpha\times L^2_{\mathbb F}\rightarrow(\widetilde Q,\widetilde R)\in\mathcal H_\alpha\times L^2_{\mathbb F}.
\end{equation}

It is sufficient to prove the existence of a fixed point of $\Phi$. To this end, for any $Q,~Q'\in \mathcal H_\alpha$, $R,~R'\in L^2_{\mathbb F}$, by integration by part and using the same arguments leading to the estimate \eqref{p-bar-0-L2-Q},
\begin{equation}\label{estimate-r-q-L2}
\begin{split}
    & \mathbb E\left[\int_0^{T}(\widetilde R_t-\widetilde R'_t)^2\,dt\right]+\mathbb E\left[\int_0^{T}(\widetilde Q_t-\widetilde Q'_t)^2\,dt\right] \\
    \leq & \fd C\mathbb E\left[\int_0^{T}(Q_t-Q'_t)^2\,dt\right]+\fd C\mathbb E\left[\int_0^{T}(R_t-R'_t)^2\,dt\right].
\end{split}
\end{equation}
The preceding estimate allows us to estimate $\widetilde Q$ in terms of $\widetilde R$ as follows %By the estimate \eqref{estimate-r-q-L2}, we have
\begin{equation}\label{estimate-q-S2}
    \begin{split}
        %&~|\widetilde Q_t-\widetilde Q'_t|\leq C\int_0^T|\widetilde R_s-\widetilde R'_s|\,ds+C\int_0^t\mathbb E[|\widetilde Q'_s-\widetilde Q_s||\mathcal F^0_s]\,ds+C\fd\int_0^T \mathbb E[|Q_s-Q'_s||\mathcal F^0_s]\,ds\\
         & \mathbb E\left[\sup_{0\leq t\leq T}|\widetilde Q_t-\widetilde Q'_t|^2\right] \\ \leq&~ C\mathbb E\left[\int_0^T|\widetilde R_s-\widetilde R'_s|^2\,ds\right]
        +C\int_0^{T}\mathbb E\left[|\widetilde Q'_s-\widetilde Q_s|^2\right]\,ds\\
        \leq&~ \fd C\mathbb E\left[\int_0^{T}(Q_t-Q'_t)^2\,dt\right]+\fd C\mathbb E\left[\int_0^{T}(R_t-R'_t)^2\,dt\right].
    \end{split}
\end{equation}
%Now we turn to the expression for $\widetilde H$.
By \eqref{estimate-q-S2}, a similar argument as in \eqref{p-bar-0-S2-H} yields the existence of a uniform $C$ such that for any $\epsilon>0$,
\begin{equation}\label{estimate-H-S2}
    \begin{split}
     \mathbb E\left[\sup_{0\leq t\leq T-\epsilon}\left|\widetilde H_t-\widetilde H'_t\right|^2\right]
        \leq&~ C\mathbb E\left[\sup_{0\leq s\leq T}|\widetilde Q_s-\widetilde Q'_s|^2\right]+C\mathbb E\left[\int_0^T|\widetilde R_t-\widetilde R'_t|^2\,dt\right] \\
         & + C\fd\mathbb E\left[\sup_{0\leq s\leq T}|Q_s-Q'_s|^2\right]+C\fd\mathbb E\left[\int_0^T|R_t-R'_t|^2\,dt\right].
    \end{split}
\end{equation}
Now we return to the expression of $\widetilde Q$ in terms of $\widetilde H$, from which we have by \eqref{estimate-q-S2}, \eqref{estimate-H-S2} and the same argument as in \eqref{p-bar-0-Q-S2-T-epsilon} that,
\begin{equation}\label{estimate-q-Halpha}
    \begin{split}
    \mathbb E\left[\sup_{0\leq t\leq T}\left|\frac{\widetilde Q_t-\widetilde Q'_t}{(T-t)^\alpha}\right|^2\right]
    %\leq&~ C\mathbb E\left[\sup_{0\leq s\leq T}|\widetilde H_s-\widetilde H'_s|^2\right]
    %+C\fd\int_0^T\mathbb E\left[\left(\frac{|Q_s-Q'_s|}{(T-s)^\alpha}\right)^2\right]\,ds+C\int_0^{\widetilde t}\mathbb E\left[\left(\frac{|\widetilde Q_s-\widetilde Q'_s|}{(T-s)^\alpha}\right)^2\right]\,ds\\
    %&~\leq  C\mathbb E\left[\sup_{0\leq s\leq T}|\widetilde D_s-\widetilde D'_s|^2\right]+C\fd\|q-q'\|^2_\alpha+\int_0^{\widetilde t}\mathbb E\left[\sup_{0\leq s\leq t}\left|\frac{\widetilde q_s-\widetilde q'_s}{(T-s)^\alpha}\right|^2\right]\,dt\\
    %&~\qquad\qquad(\textrm{Doob's maximal inequality})\\
   % \Rightarrow&~\|\widetilde Q-\widetilde Q'\|^2_{\alpha}\leq C\mathbb E\left[\sup_{0\leq s\leq T}|\widetilde H_s-\widetilde H'_s|^2\right]+C\fd\|Q-Q'\|^2_{\alpha}\\
    \leq C\fd\|Q-Q'\|^2_{\alpha}+C\fd\mathbb E\left[\int_0^T|R_t-R'_t|^2\,dt\right].
    %&~\qquad\qquad(\textrm{Gronwall inequality and estimate }\eqref{estimate-D-S2}).
    \end{split}
\end{equation}
By the estimates \eqref{estimate-r-q-L2} and \eqref{estimate-q-Halpha}, when $\fd$ is small enough we have a fixed point which is a solution to \eqref{continuation-r-q} when $\fp$ is replaced by $\fp+\fd$. Iterating the argument finitely often and then taking $f=0$ yields the existence of a solution.

\textbf{Step 2.} {\sl Uniqueness of solutions.}  Let us assume to the contrary that there exist two solutions $(Q,H,R)\in\mathcal H_\alpha\times S^{2,-}_{\mathbb F}\times L^2_{\mathbb F}$ and $(Q',H',R')\in \mathcal H_\alpha\times S^{2,-}_{\mathbb F}\times L^2_{\mathbb F}$ to \eqref{FBSDE-appendix}. As in the proof of Step 1. integration by part for $(Q-Q')(R-R')$ yields,
\begin{equation}\label{L2-0}
    \mathbb E\left[\int_0^T(R_t-R'_t)^2+(Q_t-Q'_t)^2\,dt\right]=0.
\end{equation}
Secondly, by the expression of $(Q-Q')$ in terms of $R-R'$, \eqref{L2-0} yields that
\begin{equation}\label{S2-0-Q}
    \mathbb E\left[\sup_{0\leq t\leq T}|Q_t-Q'_t|^2\right]=0.
\end{equation}
Thirdly, the expression for $(H-H')$, \eqref{L2-0} and \eqref{S2-0-Q} yield that for any $\epsilon>0$
\begin{equation}\label{S2-0-H}
    \mathbb E\left[\sup_{0\leq t\leq T-\epsilon}|H_t-H'_t|^2\right]=0.
\end{equation}
Finally, by the expression for $(Q-Q')$ in terms of $(H-H')$, \eqref{L2-0}, \eqref{S2-0-Q}, \eqref{S2-0-H} and arbitrariness of $\epsilon$ yield that
\begin{equation}
    \mathbb E\left[\sup_{0\leq t\leq T}\left|\frac{Q_t-Q'_t}{(T-t)^\alpha}\right|^2\right]=0.
\end{equation}
\end{proof}
\begin{remark}\label{increase-regularity}
From the proof of Lemma \ref{appendix-p-0} and Theorem \ref{existence-appendix} (see e.g. \eqref{remark-equation-1} and \eqref{remark-equation-2}), we see that for $\overline f\equiv 0$, the regularity of the solution can be increased to $(Q,H)\in\mathcal H_\beta\times\mathcal H_{\varsigma}$, where $\varsigma<\frac{1}{2}\wedge\beta$. This is the case in \cite{FGHP-2018}.
\end{remark}

The following corollary is important for the analysis of our leader-follower game of optimal portfolio liquidation analyzed below. It implies that the follower's optimal response function is linear convex and hence that the leader's control problem is convex.

\begin{corollary}\label{mu-convex-xi0}
The mapping $(\overline f,\overline g)\in S^2_{\mathbb F}\times L^2_{\mathbb F}\rightarrow (Q,H,R)(\overline f, \overline g)\in\mathcal H_\alpha\times S^{2,-}_{\mathbb F}\times L^2_{\mathbb F}$ is well defined and convex.
\end{corollary}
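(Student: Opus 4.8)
The well-definedness is precisely the content of Theorem \ref{existence-appendix}: for every pair $(\overline f,\overline g)\in S^2_{\mathbb F}\times L^2_{\mathbb F}$, part iii) of Assumption \ref{ass-ex-sta} holds, so the system \eqref{FBSDE-appendix} admits a unique solution $(Q,H,R)\in\mathcal{H}_\alpha\times S^{2,-}_{\mathbb F}\times L^2_{\mathbb F}$, and this assigns a well-defined element $(Q,H,R)(\overline f,\overline g)$. It remains to establish convexity, and in fact I will show that the map is affine, which is stronger.

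The plan is to exploit that \eqref{FBSDE-appendix} is linear in the unknowns $(Q,H,R,Z)$ and in the data $(\overline f,\overline g)$, whereas the process $A$ solving the singular BSDE \eqref{singular-BSDE} depends only on $\Lambda^1,\Lambda^4$ and \emph{not} on $(\overline f,\overline g)$. Fix two pairs $(\overline f^1,\overline g^1),(\overline f^2,\overline g^2)$ in $S^2_{\mathbb F}\times L^2_{\mathbb F}$, let $(Q^i,H^i,R^i)$ be the corresponding unique solutions together with their $Z$-processes, and for $\lambda\in[0,1]$ set $\overline f^\lambda:=\lambda\overline f^1+(1-\lambda)\overline f^2$, $\overline g^\lambda:=\lambda\overline g^1+(1-\lambda)\overline g^2$, and likewise $Q^\lambda$, $H^\lambda$, $R^\lambda$ by taking the same convex combinations. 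Since $\mathcal{H}_\alpha$, $S^{2,-}_{\mathbb F}$ and $L^2_{\mathbb F}$ are vector spaces, $(Q^\lambda,H^\lambda,R^\lambda)\in\mathcal{H}_\alpha\times S^{2,-}_{\mathbb F}\times L^2_{\mathbb F}$. Because every coefficient in \eqref{FBSDE-appendix}, together with each conditional expectation $\mathbb E[\,\cdot\,|\mathcal F^0_t]$, is linear, forming the corresponding $\lambda$-combination of the dynamics of the two solutions shows that $(Q^\lambda,H^\lambda,R^\lambda)$ satisfies the three (F)BSDEs in \eqref{FBSDE-appendix} with data $(\overline f^\lambda,\overline g^\lambda)$ and $Z$-processes given by the same combination; moreover $R^\lambda=AQ^\lambda+H^\lambda$ on $[0,T)$ since $A$ is common to both solutions, and the end conditions $Q^\lambda_0=\lambda\chi+(1-\lambda)\chi=\chi$, $Q^\lambda_T=0$ hold. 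By the uniqueness statement of Theorem \ref{existence-appendix},
\[
   (Q,H,R)(\overline f^\lambda,\overline g^\lambda)=\lambda\,(Q,H,R)(\overline f^1,\overline g^1)+(1-\lambda)\,(Q,H,R)(\overline f^2,\overline g^2),
\]
so the map is affine and, in particular, convex.

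There is no genuine obstacle here; the only points that deserve a moment's attention are that the process $A$, and hence the linear ansatz constraint $R=AQ+H$, is the same for all inputs (this is where the exogeneity of \eqref{singular-BSDE} is used), that the fixed initial datum $\chi$ is compatible with convex combinations, and that the three solution spaces are linear so that membership is preserved under such combinations.
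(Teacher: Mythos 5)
Your proposal is correct and follows essentially the same route as the paper: well-definedness is read off from Theorem \ref{existence-appendix}, and convexity (in fact affinity) follows from the linearity of the system \eqref{FBSDE-appendix} in $(Q,H,R,Z)$ and $(\overline f,\overline g)$ together with uniqueness of solutions. The paper's proof is just a terser version of this argument; your additional remarks (that $A$ is independent of the data and that the solution spaces are linear) are implicit there and correctly identified by you as the only points needing attention.
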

\begin{proof}
By Theorem \ref{existence-appendix}, for each $(\overline f,\overline g)\in S^2_{\mathbb F}\times L^2_{\mathbb F}$, there exists a unique solution $(Q,H,R)$. Thus, the mapping is well defined. Moreover, by the uniqueness again, we have for $\rho\in[0,1]$
\[
    (Q,H,R)(\rho(\overline f,\overline g)+(1-\rho)(\overline f',\overline g'))=\rho (Q,H,R)(\overline f,\overline g)+(1-\rho)(Q,H,R)(\overline f',\overline g').
\]
\end{proof}

Using the same arguments as in the proof of Theorem \ref{existence-appendix} we can also get existence of a unique solution to the ``penalized version'' of \eqref{FBSDE-appendix} where the terminal state constraint on the forward process is replaced by the terminal condition of the backward process $R_T=2nQ_T$. To this end, we introduce the BSDE,
\[
    -dA^n_t=\left(\Lambda^4_t-\Lambda^1_t(A^n_t)^2\right)\,dt-Z_t\,d W_t,\quad A^n_T=2n.
\]
Existence and uniqueness of a solution to this equation has been established in \cite{AJK-2014}. Moreover, for each $t\in[0,T)$,
\begin{equation}\label{convergence-A-n-appendix}
    \lim_{n\rightarrow\infty} A^n_t=A_t, ~\textrm{a.s.}.
\end{equation}
When the terminal state constraint is replaced by the penalty term introduced above, the system \eqref{FBSDE-appendix} translates into the following system:
\begin{equation}\label{FBSDE-appendix-n}
    \left\{\begin{aligned}
         dQ^n_t=&~\left(-\Lambda^1_tR^n_t-\Lambda^2_t\mathbb E\left[\left.\gamma_tQ^n_t\right|\mathcal F^0_t\right]+\overline f^n_t\right)\,dt,\\
        %-dD_t=&~\left(\frac{\overline C_tD_t}{2\eta_t}+\overline C_t\mathbb E[\kappa_tq_t|\mathcal F^0_t]\frac{1}{2\eta_t}-\overline C_tg_t\right)\,dt-Z_t\,d\overline W_t,\\
        -dH^n_t=&~\left(-\Lambda^1_t A^n_t-\Lambda^2_tA^n_t\mathbb E[\gamma_tQ^n_t|\mathcal F^0_t]+A^n_t\overline f^n_t+\Lambda^3_t\mathbb E[\zeta_tR^n_t|\mathcal F^0_t] \right. \\ & \left. \qquad +\Lambda^5_t\mathbb E[\varrho_tQ^n_t|\mathcal F^0_t]+\overline g^n_t\right)\,dt-Z_t\,d W_t,\\
        -dR^n_t=&~\left(\Lambda^4_t Q^n_t+\Lambda^3_t\mathbb E[\zeta_tR^n_t|\mathcal F^0_s]+\Lambda^5_t\mathbb E[\varrho_tQ^n_t|\mathcal F^0_t]+\overline g^n_t\right)\,dt-Z_t\,d W_t,\\
        Q^n_0=&~\chi, ~H^n_T=0,~R^n_T=2nQ^n_T,
    \end{aligned}\right.
\end{equation}

\begin{corollary}\label{ex-n-appendix}
Assume that for each fixed $n \in \mathbb N$, $(\overline f^n,g^n)\in S^2_{\mathbb F}\times L^2_{\mathbb F}$. Then, for each $n \in \mathbb N$ the FBSDE \eqref{FBSDE-appendix-n} admits a unique solution $(Q^n,H^n,R^n)\in \mathcal H_{\alpha,n}\times S^2_{\mathbb F}\times L^2_{\mathbb F}$,
where
\[
    \mathcal H_{\alpha,n}=\left\{X:\mathbb E\left[\sup_{0\leq t\leq T}\left|\frac{X_t}{(T-t+\frac{1}{n})^\alpha}\right|^2\right]<\infty\right\}.
\]
\end{corollary}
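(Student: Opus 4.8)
The plan is to repeat, essentially verbatim, the two nested continuation arguments behind Lemma \ref{appendix-p-0} and Theorem \ref{existence-appendix}, with the singular process $A$ replaced throughout by the bounded process $A^n$, and with the weight $(T-t)^\alpha$ replaced by $(T-t+\frac1n)^\alpha$. First I would record the properties of $A^n$ that drive the argument: by \cite{AJK-2014} the BSDE for $A^n$ has a unique nonnegative solution, which by a comparison argument is bounded by a constant $C_n$, and — in analogy with \eqref{estimate-A} and \eqref{estimate--exp-bar-C-appendix} — satisfies $e^{-\int_{t_1}^{t_2}\Lambda^1_sA^n_s\,ds}\le C\big(\tfrac{T-t_2+1/n}{T-t_1+1/n}\big)^{\tau}$ for $0\le t_1<t_2\le T$ and $0\le\tau\le\beta$, the constant $C$ being independent of $n$. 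The shift $\frac1n$ appears precisely because $A^n$ stays of order $(T-t+\frac1n)^{-1}$ near $T$ instead of blowing up; the proof of this estimate is the same as in the singular case.

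Second, I would make the linear ansatz $R^n= A^nQ^n+H^n$, now on the closed interval $[0,T]$, which is consistent with the terminal data $H^n_T=0$, $A^n_T=2n$, $R^n_T=2nQ^n_T$. This transforms \eqref{FBSDE-appendix-n} into the exact analogue of \eqref{FBSDE-appendix} with $A^n$ in place of $A$. Since $A^n$ is bounded, the general solution formula for the (now non-singular) linear BSDE for $H^n$ gives $H^n_t=\mathbb E\big[\int_t^T e^{-\int_t^s\Lambda^1_uA^n_u\,du}K^n_s\,ds\,\big|\,\mathcal F_t\big]$ with $K^n\in L^2_{\mathbb F}$ whenever $(Q^n,R^n)\in\mathcal H_{\alpha,n}\times L^2_{\mathbb F}$; Doob's inequality then yields $H^n\in S^2_{\mathbb F}$ on the whole interval, not merely $S^{2,-}_{\mathbb F}$, while the $Q^n$-estimate of the type \eqref{p-bar-0-Q-S2-T-epsilon} places $Q^n$ in $\mathcal H_{\alpha,n}$ after dividing by $(T-t+\frac1n)^\alpha$ and using the displayed exponential estimate for $A^n$ with any $\alpha<\tau<\beta$.

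Third, with these replacements the proof runs without change. I would first solve the $\fp=0$ system by a continuation in the auxiliary parameter $\overline\fp$ exactly as in Lemma \ref{appendix-p-0}: the a priori bounds come from integration by parts for $Q^nR^n$ on $[0,T-\epsilon]$ together with Assumption \ref{ass-ex-sta} ii), which involves only $\Lambda^1,\dots,\Lambda^5,\gamma,\zeta,\varrho$ and not $A$, hence is untouched, and this yields a contraction in $\mathcal H_{\alpha,n}$ for a small enough step size. Then I would extend from $\fp=0$ to $\fp=1$ by the second continuation, as in Step 1 of Theorem \ref{existence-appendix}, the contraction being in $\mathcal H_{\alpha,n}\times L^2_{\mathbb F}$. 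Uniqueness follows as in Step 2 of Theorem \ref{existence-appendix}: integration by parts for the difference of two solutions gives $\|Q^n-Q'^n\|_{L^2}=\|R^n-R'^n\|_{L^2}=0$, which propagates through the representations of $H^n-H'^n$ and $Q^n-Q'^n$ to the $\sup$-norm and then to the $\mathcal H_{\alpha,n}$-norm.

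The argument presents no genuinely new difficulty — it is in fact easier than the singular case because $A^n$ is bounded. The one point to watch is that the weight must be $(T-t+\frac1n)^\alpha$ rather than $(T-t)^\alpha$, so that the exponential estimate for $A^n$ still absorbs the $\Lambda^1_sH^n_s$ term in the $Q^n$-integral near $T$; this is exactly what forces the space $\mathcal H_{\alpha,n}$ in the statement, and as $n\to\infty$ the weight degenerates to $(T-t)^\alpha$, reconciling the penalized solutions with the constrained one. If one additionally wants all constants independent of $n$ — needed for the subsequent convergence analysis, though not for this corollary — one simply uses the $n$-uniform form of the exponential estimate above throughout; with $n$ fixed this is immaterial.
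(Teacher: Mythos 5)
Your proposal is correct and follows essentially the same route as the paper, which proves Corollary \ref{ex-n-appendix} simply by invoking ``the same arguments as in the proof of Theorem \ref{existence-appendix}'' with the bounded process $A^n$ (and its lower bound from \cite{AJK-2014}) replacing the singular $A$ and the weight $(T-t+\tfrac1n)^\alpha$ replacing $(T-t)^\alpha$ in the nested continuation scheme. Your added observations — that $H^n$ now lives in $S^2_{\mathbb F}$ on all of $[0,T]$ because the terminal condition $H^n_T=0$ is attained, and that for fixed $n$ the $n$-uniformity of the exponential estimate is immaterial — are consistent with the paper's treatment.
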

\begin{remark}
Note that in \eqref{FBSDE-appendix-n}, the terminal condition for $H^n$ is $0$ so $H^n$ is defined on $[0,T]$. In \eqref{FBSDE-appendix} the process $H$ is only defined on $[0,T)$, due to to the singularity of the process $A$ at the terminal time.
\end{remark}

%%%%%%%%%%%%%%%%%%%%%%%%%%%%%%%%%%%
%%%%%%%%%%%%%%%%%%%%%%%%%%%%%%%%%%%
%%%%%%%%%%%%%%%%%%%%%%%%%%%%%%%%%%%

\subsection{Convergence}
We now prove an approximation result for the system \eqref{FBSDE-appendix} in terms of the systems \eqref{FBSDE-appendix-n} as $n \to \infty$. The convergence result is established under the additional assumption that for any $0\leq t_1<t_2\leq T$,
\begin{equation}\label{additional-ass-sta}
    e^{-\int_{t_1}^{t_2}\Lambda^1_uA_u\,du}\leq C\frac{T-t_2}{T-t_1}\textrm{  and  }    e^{-\int_{t_1}^{t_2}\Lambda^1_uA^n_u\,du}\leq C\frac{T-t_2+\frac{1}{n}}{T-t_1+\frac{1}{n}}.
\end{equation}
We refer to \cite{FGHP-2018} for sufficient conditions on the model parameters under which this assumption is satisfied.

The proof of the following lemma can be found in \cite[Lemma 4.4]{FGHP-2018}. %It will be used in the proof of the convergence result.

\begin{lemma}\label{boundedness-appendix}
Let $\overline f^n\in S^2_{\mathbb F}$ and $\overline g^n\in L^2_{\mathbb F}$ be two sequences of progressively measurable stochastic processes and $(Q^n,H^n,R^n)$ be the solution to the system \eqref{FBSDE-appendix-n}. If the sequences $\overline f^n$ and $\overline g^n$ are bounded in $S^2_{\mathbb F}$ and $L^2_{\mathbb F}$ uniformly in $n$, respectively, then
\[
    \sup_n\|Q^n\|_{\alpha,n}+\sup_n\|H^n\|_{S^{2,-}}+\sup_n\|R^n\|_{L^2}\leq C\left(\sup_n\|\overline f^n\|_{S^2}+\sup_n\|\overline g^n\|_{L^2}\right)<\infty.
\]
\end{lemma}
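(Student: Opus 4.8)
The plan is to run, at the level of the penalized system \eqref{FBSDE-appendix-n}, exactly the same chain of estimates that appears in the proof of Lemma~\ref{appendix-p-0}, Step~1, and in the existence step of Theorem~\ref{existence-appendix}, but now keeping careful track of the $n$-dependence so that all constants come out uniform in $n$. The four quantities $\|Q^n\|_{\alpha,n}$, $\|H^n\|_{S^{2,-}}$, $\|R^n\|_{L^2}$ are controlled in the order $R^n \to Q^n \to H^n \to Q^n$ (the last pass upgrading the $L^2$-bound on $Q^n$ to the weighted $S^2$-bound), closing a loop that does not lose constants. The crucial structural input is that, by \eqref{convergence-A-n-appendix} and the hypothesis \eqref{additional-ass-sta}, the exponential factor $e^{-\int_{t_1}^{t_2}\Lambda^1_u A^n_u\,du}$ obeys the same type of polynomial decay estimate as in \eqref{estimate--exp-bar-C-appendix}, with $T-t$ replaced by $T-t+\frac1n$ and with a constant $C$ that does not depend on $n$; this is what makes the weight $(T-t+\frac1n)^\alpha$ in the definition of $\mathcal H_{\alpha,n}$ the natural one and keeps the Doob-type estimates uniform.

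Concretely, I would proceed as follows. First, integration by parts applied to the product $Q^n R^n$ on $[0,T-\epsilon]$ (here using $H^n_T=0$, so no boundary term at $T$ survives once $\epsilon\searrow0$), together with the coercivity built into Assumption~\ref{ass-ex-sta}~ii) — which is precisely designed so that the quadratic form in $(Q^n,R^n)$ dominates — gives $\|R^n\|_{L^2}+\|Q^n\|_{L^2}\le C(\|\chi\|_{L^2}+\|\overline f^n\|_{S^2}+\|\overline g^n\|_{L^2})$ after absorbing the cross terms $|Q^n||\overline g^n|$, $|Q^n||\overline f^n|$ by Young's inequality and using $\mathbb E[A^n_0\chi^2]\le C\|\chi\|_{L^2}^2$ from the upper bound in \eqref{estimate-A} (which holds uniformly in $n$). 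The $Z$-martingale term vanishes under expectation. Second, writing $H^n$ via the linear-BSDE representation \eqref{integral-H} with $A$ replaced by $A^n$ and using \eqref{additional-ass-sta} plus Doob's maximal inequality exactly as in \eqref{p-bar-0-H-estimate}, one gets $\|H^n\|_{S^{2,-}}\le C(\|Q^n\|_{\alpha,n}+\|\overline f^n\|_{S^2}+\|R^n\|_{L^2}+\|\overline g^n\|_{L^2})$, where the conditional-expectation terms $\mathbb E[\gamma_t Q^n_t|\mathcal F^0_t]$, $\mathbb E[\varrho_t Q^n_t|\mathcal F^0_t]$ are handled by $|\mathbb E[\gamma Q^n|\mathcal F^0]|\le\|\gamma\|\,\mathbb E[|Q^n|\,|\,\mathcal F^0]$ and conditional Jensen. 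Third, substituting $R^n=A^nQ^n+H^n$ into the forward SDE and solving it as in \eqref{remark-equation-2}, then dividing by $(T-t+\frac1n)^\alpha$ and using monotone convergence as in \eqref{p-bar-0-Q-S2-T-epsilon} together with $\alpha<\beta$ (so the integral $\int_0^T (T-s+\frac1n)^{-\alpha}e^{-\int_s^t\Lambda^1_uA^n_u\,du}\,ds$ is bounded uniformly in $n$ thanks to \eqref{additional-ass-sta}), yields $\|Q^n\|_{\alpha,n}\le C(\|\chi\|_{L^2}+\|H^n\|_{S^{2,-}}+\|\overline f^n\|_{S^2}+\|\overline g^n\|_{L^2})$. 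Finally, combining the three displays and choosing the implicit small constants appropriately (all independent of $n$) lets one absorb $\|Q^n\|_{\alpha,n}$ and $\|R^n\|_{L^2}$ from the right-hand sides, producing the claimed inequality with a constant independent of $n$; taking suprema over $n$ and invoking the uniform-boundedness hypothesis on $(\overline f^n,\overline g^n)$ finishes the proof.

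The main obstacle — and the only place where genuine care beyond bookkeeping is needed — is the uniformity in $n$ of the constant $C$ in the exponential decay estimate \eqref{additional-ass-sta} and, through it, of the constant in the Doob/Hölder estimate for $H^n$: one must verify that the bound $\int_t^T \big((T-s+\tfrac1n)^{\alpha-1}\big/(T-t+\tfrac1n)^\alpha\big)\,ds$ is finite and $n$-independent, which is exactly where the exponent $1$ (rather than $\beta$) in the refined estimate \eqref{additional-ass-sta} is used and where $\alpha<\beta\le 1$ matters. Since \cite[Lemma 4.4]{FGHP-2018} carries out this computation in the special case $\Lambda^2=\Lambda^5=\overline f=0$, and the extra terms here ($\Lambda^2$, $\Lambda^5$, $\overline f^n\ne0$) enter only through additional summands that are each estimated by the same mechanism — bounded coefficients times conditional expectations of $Q^n$, plus the $S^2$-norm of $\overline f^n$ — the present statement follows by the same argument with these harmless additions, which is why the excerpt simply refers to that lemma.
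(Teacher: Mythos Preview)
The paper does not prove this lemma; it simply refers to \cite[Lemma~4.4]{FGHP-2018}. Your reconstruction --- rerunning the chain of estimates from Lemma~\ref{appendix-p-0} and Theorem~\ref{existence-appendix} for the penalized system, tracking that all constants coming from \eqref{estimate-A}, its $A^n$-analogue, and \eqref{additional-ass-sta} are uniform in $n$ --- is the right approach and matches what that reference does.

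There is one genuine slip in the detailed steps. In your Step~2 you bound $\|H^n\|_{S^{2,-}}$ by $C(\|Q^n\|_{\alpha,n}+\dots)$, and in Step~3 you bound $\|Q^n\|_{\alpha,n}$ by $C(\|H^n\|_{S^{2,-}}+\dots)$. Combining these gives $\|Q^n\|_{\alpha,n}\le C^2\|Q^n\|_{\alpha,n}+\dots$, and there is no mechanism here to make $C$ small: we are at $\fp=\overline\fp=1$, so the continuation parameters that produced the small constants in the contraction arguments of Lemma~\ref{appendix-p-0} and Theorem~\ref{existence-appendix} are not available. The loop as you wrote it does not close.

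The fix is already implicit in your own outline (``$R^n\to Q^n\to H^n\to Q^n$'') and in the paper's uniqueness argument \eqref{L2-0}--\eqref{S2-0-H}: in Step~2 one only needs $\|Q^n\|_{S^2}$, not $\|Q^n\|_{\alpha,n}$. Indeed, from the integral representation of $H^n$ and the bound $e^{-\int_t^s\Lambda^1A^n}A^n_s\le C(T-s+\tfrac1n)^{\beta-1}/(T-t+\tfrac1n)^\beta$ one gets, exactly as in \eqref{p-bar-0-S2-H},
\[
\|H^n\|_{S^{2,-}}\le C\bigl(\|Q^n\|_{S^2}+\|R^n\|_{L^2}+\|\overline f^n\|_{S^2}+\|\overline g^n\|_{L^2}\bigr),
\]
and $\|Q^n\|_{S^2}$ is obtained \emph{before} this, from the SDE for $Q^n$ written in terms of $R^n$ (as in \eqref{estimate-q-S2}), using the $L^2$-bounds from Step~1. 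Then Step~3 delivers $\|Q^n\|_{\alpha,n}$ cleanly, with no absorption needed. The only place a Young-type small parameter is genuinely used is in Step~1, to handle the boundary term $\mathbb E[\chi H^n_0]$: one writes $\mathbb E[\chi H^n_0]\le \varepsilon\,\mathbb E[|H^n_0|^2]+C_\varepsilon\|\chi\|_{L^2}^2$, feeds in the $H^n$-bound (which in turn uses $\|Q^n\|_{S^2}\le C(\|R^n\|_{L^2}+\|Q^n\|_{L^2}+\dots)$), and absorbs $\varepsilon C(\|R^n\|_{L^2}^2+\|Q^n\|_{L^2}^2)$ into the coercive left-hand side. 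With this correction your plan goes through.
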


\begin{lemma}\label{auxiliary-stability}
Let $\overline f^n$ and $\overline g^n$ be two sequences of stochastic processes satisfying the conditions in Lemma \ref{boundedness-appendix}. Then there exists $\overline f\in L^2_{\mathbb F}$, $\overline g\in L^2_{\mathbb F}$ and a convex combination of a subsequence of $(\overline f^n,\overline g^n)$ converging to $(\overline f,g)$ in $L^\nu$ with $1<\nu<2$, i.e.,
\begin{equation}\label{convergence-fn-gn-Lnu}
    \lim_{N\rightarrow\infty}\mathbb E\left[\int_0^T\left|\frac{1}{N}\sum_{k=1}^N(\overline f^{n_k}_t,\overline g^{n_k}_t)-(\overline f_t,\overline g_t)\right|^\nu\,dt\right]=0.
\end{equation}
\end{lemma}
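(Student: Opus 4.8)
The plan is to invoke Koml\'os's theorem \cite{Komlos-1967} directly on the two uniformly $L^2$-bounded sequences. First I would note that, by Lemma \ref{boundedness-appendix}, the boundedness hypotheses on $(\overline f^n,\overline g^n)$ in $S^2_{\mathbb F}$ and $L^2_{\mathbb F}$ in particular give a uniform bound $\sup_n\|\overline f^n\|_{L^2}+\sup_n\|\overline g^n\|_{L^2}<\infty$; thus both sequences, viewed as elements of the Hilbert (hence reflexive) space $L^2([0,T]\times\Omega)$, are bounded. Koml\'os's theorem then yields a subsequence $(\overline f^{n_k},\overline g^{n_k})$ and limits $\overline f,\overline g\in L^2_{\mathbb F}$ such that \emph{every} further subsequence of the Ces\`aro averages $\frac1N\sum_{k=1}^N(\overline f^{n_k}_t,\overline g^{n_k}_t)$ converges $dt\otimes d\mathbb P$-almost everywhere to $(\overline f_t,\overline g_t)$; in particular the full sequence of Ces\`aro averages converges a.e. One checks that $\overline f,\overline g$ are progressively measurable (being a.e.\ limits of progressively measurable processes, after passing to a good version) and, by Fatou applied to the $L^2$ bound, that they indeed lie in $L^2_{\mathbb F}$.

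It then remains to upgrade the almost-everywhere convergence of the Ces\`aro averages to convergence in $L^\nu$ for $1<\nu<2$. For this I would use a uniform integrability argument: the averages $\tfrac1N\sum_{k=1}^N(\overline f^{n_k},\overline g^{n_k})$ are bounded in $L^2([0,T]\times\Omega)$ by convexity of the norm and the uniform $L^2$ bound, hence the family of their $\nu$-th powers is uniformly integrable for any $\nu<2$ (by the de la Vall\'ee-Poussin criterion, since $x\mapsto x^{2/\nu}$ is superlinear). Combining a.e.\ convergence with uniform integrability of $|\cdot|^\nu$ gives, by Vitali's convergence theorem,
\[
    \lim_{N\to\infty}\mathbb E\left[\int_0^T\left|\frac1N\sum_{k=1}^N(\overline f^{n_k}_t,\overline g^{n_k}_t)-(\overline f_t,\overline g_t)\right|^\nu dt\right]=0,
\]
which is exactly \eqref{convergence-fn-gn-Lnu}.

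I do not expect a serious obstacle here; the statement is essentially a packaging of Koml\'os's theorem together with a standard uniform-integrability upgrade. The only points requiring a little care are (i) ensuring the limit processes can be taken progressively measurable — handled by choosing progressively measurable representatives of the a.e.\ limit, using that the filtration is augmented — and (ii) getting a \emph{single} subsequence that works simultaneously for both $\overline f^n$ and $\overline g^n$, which is automatic since one applies Koml\'os's theorem to the $\mathbb I\times\mathbb I$-valued sequence $(\overline f^n,\overline g^n)$ at once (or, equivalently, applies it to $\overline f^n$ first and then extracts a further Koml\'os subsequence for $\overline g^n$, the Ces\`aro property being stable under passing to subsequences). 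The fact that one only gets $L^\nu$ rather than $L^2$ convergence is intrinsic: Koml\'os's theorem provides no control on the oscillation needed for $L^2$ convergence, which is precisely why the subsequent convergence result for the FBSDE solutions is formulated in the $L^\nu$ topology.
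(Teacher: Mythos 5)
Your proposal is correct and follows essentially the same route as the paper: a Koml\'os-type theorem applied to the jointly $L^2$-bounded sequence $(\overline f^n,\overline g^n)$ to get a.e.\ convergence of Ces\`aro averages of a subsequence to a progressively measurable limit, Fatou's lemma to place the limit in $L^2_{\mathbb F}$, and Vitali's theorem (via the uniform integrability coming from the $L^2$ bound) to upgrade to $L^\nu$ convergence for $1<\nu<2$. The only cosmetic difference is that the paper cites the Koml\'os-type result from the proof of Theorem 2.1 in the reference BKOW-2004 rather than the original Koml\'os paper, and it leaves the uniform-integrability step implicit, which you spell out correctly.
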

\begin{proof}
Since the sequence $(\overline f^n,\overline g^n)$ is $L^2$ uniformly bounded, the proof of \cite[Theorem 2.1]{BKOW-2004} %(see also \cite{HN-2014} and \cite{NW-2011})
tells us there exists a subsequence of $(\overline f^n,\overline g^n)$ and a progressively measurable stochastic processes $(\overline f,\overline g)$ such that
\[
    \lim_{N\rightarrow\infty}\frac{1}{N}\sum_{k=1}^N(\overline f^{n_k},\overline g^{n_k})-(\overline f,\overline g)=0,\quad \textrm{a.e. a.s. on }[0,T]\times\Omega.
\]
Fatou's lemma implies that
\[
    \mathbb E\left[\int_0^T|(\overline f_t,\overline g_t)|^2\,dt\right]\leq\liminf_{N\rightarrow\infty}\frac{1}{N}\sum_{k=1}^N\mathbb E\left[\int_0^T|(\overline f^{n_k}_t,\overline g^{n_k}_t)|^2\,dt\right]<\infty.
\]
Thus, Vitali's convergence result implies \eqref{convergence-fn-gn-Lnu}.
%\begin{equation}
%    \lim_{N\rightarrow\infty}\frac{1}{N}\sum_{k=1}^{N}(\overline f^{n_k},g^{n_k})=(\overline f,g)\quad \textrm{a.e. a.s. on }[0,T]\times\Omega
%\end{equation}
%and
\end{proof}

The following theorem proves a convergence result for the FBSDE systems associated with the unconstrained penalized control problems to the system associated with the constrained one. The result is key to our maximum principle for the leader-follower game introduced above.

\begin{theorem}\label{stability-appendix}
Let $(\overline f^n,\overline g^n)$ be the sequence satisfying the conditions in Lemma \ref{auxiliary-stability} and $(\overline f,\overline g)\in L^2_{\mathbb F}\times L^2_{\mathbb F}$ be the limit. Let $(Q^n,H^n,R^n)$ and $(Q,H,R)$ be the solution to \eqref{FBSDE-appendix-n} and \eqref{FBSDE-appendix}, respectively. We further assume the limit $\overline f\in S^2_{\mathbb F}$. Then there exists a convex combination of a subsequence of $\left(\frac{1}{N}\sum_{k=1}^NQ^{n_k},\frac{1}{N}\sum_{k=1}^NH^{n_k},\frac{1}{N}\sum_{k=1}^NR^{n_k}\right)$ converging to $(Q,H,R)$ in $S^\nu_{\mathbb F}\times L^1_{\mathbb F}\times L^\nu_{\mathbb F}$, i.e.,
\begin{equation*}
    \begin{split}
     &~\lim_{N'\rightarrow\infty}\mathbb E\left[\sup_{0\leq t\leq T}\left|\frac{1}{N'}\sum_{j=1}^{N'}\frac{1}{N_j}\sum_{k=1}^{N_j}Q^{n_k}_t-Q_t\right|^\nu\right]=0,\\
     &~\lim_{N'\rightarrow\infty} \mathbb E\left[\int_0^T\left|\frac{1}{N'}\sum_{j=1}^{N'}\frac{1}{N_j}\sum_{k=1}^{N_j}H^{n_k}_t- H_t\right|\,dt\right]=0,\\
     &~\lim_{N'\rightarrow\infty}\mathbb E\left[\int_0^T\left|\frac{1}{N'}\sum_{j=1}^{N'}\frac{1}{N_j}\sum_{k=1}^{N_j}R^{n_k}_t- R_t\right|^\nu\,dt\right]=0.
    \end{split}
\end{equation*}
\end{theorem}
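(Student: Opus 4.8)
The plan is to construct, out of the penalized solutions $(Q^n,H^n,R^n)$, a triple of limit processes that can be shown to solve the constrained system \eqref{FBSDE-appendix}, and then invoke the uniqueness part of Theorem~\ref{existence-appendix} to identify this limit with $(Q,H,R)$. First I would record the uniform bounds from Lemma~\ref{boundedness-appendix}: $\sup_n\|Q^n\|_{\alpha,n}+\sup_n\|H^n\|_{S^{2,-}}+\sup_n\|R^n\|_{L^2}<\infty$. By Lemma~\ref{auxiliary-stability} we already have, along a subsequence, a convex-combination (Ces\`aro) convergence $\frac1N\sum_{k=1}^N(\overline f^{n_k},\overline g^{n_k})\to(\overline f,\overline g)$ in $L^\nu$ for $1<\nu<2$. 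The uniform $L^2$ (resp.\ $S^2$) bounds on $(Q^n,H^n,R^n)$ then allow a second application of the Koml\'os/Banach–Saks type argument used in \cite{BKOW-2004}: passing to a further subsequence and forming a further layer of convex combinations, I get processes $\overline Q\in L^2_{\mathbb F}$, $\overline R\in L^2_{\mathbb F}$, $\overline H\in S^{2,-}_{\mathbb F}$ and a.e.\ a.s.\ convergence of the iterated convex combinations $\frac{1}{N'}\sum_{j=1}^{N'}\frac{1}{N_j}\sum_{k=1}^{N_j}(Q^{n_k},H^{n_k},R^{n_k})$ to $(\overline Q,\overline H,\overline R)$; the convergence upgrades to $S^\nu\times L^1\times L^\nu$ by Vitali, exactly as in Lemma~\ref{auxiliary-stability}, using the uniform bounds for equi-integrability.

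The central step is to verify that $(\overline Q,\overline H,\overline R)$ actually solves the \emph{constrained} system \eqref{FBSDE-appendix}, i.e.\ that it lies in $\mathcal H_\alpha\times S^{2,-}_{\mathbb F}\times L^2_{\mathbb F}$, satisfies the three dynamical equations, the relation $\overline R=A\overline Q+\overline H$ on $[0,T)$, the initial condition $\overline Q_0=\chi$, and — crucially — the terminal constraint $\overline Q_T=0$. The dynamics pass to the limit because all the drivers are \emph{linear} in $(Q,H,R)$ with bounded coefficients $\Lambda^i,\gamma,\zeta,\varrho$, so convex combinations of solutions of \eqref{FBSDE-appendix-n} solve \eqref{FBSDE-appendix-n} with the correspondingly averaged inputs $(\overline f^n,\overline g^n)$ and with $A^n$ replaced by itself — here I must be careful, since $A^n$ depends on $n$, so the cleanest route is to first replace $A^n$ by $A$ in the $H^n$-equation at the cost of an error term controlled by $\|A^n-A\|$ on compact subintervals $[0,T-\epsilon]$, using \eqref{convergence-A-n-appendix} together with the uniform estimates \eqref{estimate-A} and the dominated-convergence bound \eqref{additional-ass-sta}, and only then average and pass to the limit. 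The identity $\overline R=A\overline Q+\overline H$ on $[0,T)$ follows from $R^n=A^n Q^n+H^n$ on $[0,T)$ after the same $A^n\to A$ replacement. For the terminal constraint, I would use that the constrained solution $(Q,H,R)$ exists (Theorem~\ref{existence-appendix}) and that $\overline Q\in\mathcal H_\alpha$: the bound $\|Q^n\|_{\alpha,n}\le C$ gives $|Q^n_t|\le C(T-t+\tfrac1n)^\alpha$, hence the averaged processes satisfy $|\overline Q_t|\le C(T-t)^\alpha$ after taking $n\to\infty$ in the tail, forcing $\overline Q_T=0$; the membership $\overline H\in S^{2,-}_{\mathbb F}$ follows from the representation \eqref{integral-H} with $K$ built from $\overline Q,\overline R,\overline f,\overline g$, exactly as in the heuristic preceding Theorem~\ref{existence-appendix}. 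Once $(\overline Q,\overline H,\overline R)$ is shown to be a solution of \eqref{FBSDE-appendix} in the right space, the uniqueness statement of Theorem~\ref{existence-appendix} forces $(\overline Q,\overline H,\overline R)=(Q,H,R)$, which is precisely the claimed convergence.

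The main obstacle I expect is the interplay between the two singular objects: the singular process $A$ (and its approximants $A^n$) and the fact that convergence is only in $L^\nu$, $\nu<2$, rather than $L^2$. Concretely, the $H^n$-equation contains the term $\Lambda^1_t A^n_t H^n_t$ with $A^n_t$ blowing up like $(T-t+\tfrac1n)^{-1}$ near $T$, so to control $\int_0^{T-\epsilon}|A^n_tH^n_t-A_t\overline H_t|\,dt$ I cannot simply use boundedness of coefficients; instead I must localize to $[0,T-\epsilon]$, exploit the uniform $S^{2,-}$-bound on $H^n$ together with $\int_0^{T-\epsilon}A^n_t\,dt\le C_\epsilon$ uniformly in $n$ (from \eqref{estimate-A}), and use \eqref{convergence-A-n-appendix} with dominated convergence to kill the $A^n-A$ discrepancy; then let $\epsilon\to0$ using that $H\in S^{2,-}_{\mathbb F}$ and that the contribution near $T$ is uniformly small thanks to \eqref{additional-ass-sta}. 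The fact that the convergence of $(\overline f^n,\overline g^n)$ is only in $L^\nu$ means every $L^2$-type estimate in the existence proof has to be re-run with exponent $\nu$, but this is routine once the structural linearity is in place; the genuine work is the uniform-in-$n$ control of the $A^n$-terms on $[0,T-\epsilon]$ and the passage $\epsilon\to0$.
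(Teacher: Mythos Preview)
Your strategy is the right one and matches the paper's: extract limits from the penalized solutions, verify they solve \eqref{FBSDE-appendix}, invoke the uniqueness part of Theorem~\ref{existence-appendix}. But there is a gap in how you obtain $S^\nu$-convergence for $Q$. Koml\'os gives a.e.\ a.s.\ convergence on $[0,T]\times\Omega$, and Vitali then upgrades this to $L^\nu([0,T]\times\Omega)$, exactly as in Lemma~\ref{auxiliary-stability}; neither gives $\mathbb E[\sup_t|\cdot|^\nu]\to0$, since you have no a.s.\ convergence of the running supremum and no equi-integrability of $\sup_t|\cdot|^\nu$ from the bounds at hand. The same issue makes ``passing the BSDE dynamics to the limit'' for $R$ and $H$ less direct than you suggest, because you would also need to control the martingale parts $Z^n\,dW$ and the blowing-up terminal data $R^n_T=2nQ^n_T$.

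The paper's fix is tactical: it applies Koml\'os \emph{only} to $R^n$, obtaining $\overline R\in L^2_{\mathbb F}$ with Ces\`aro $L^\nu$-convergence, and then \emph{defines} $\overline Q$ as the unique $S^2$-solution of the forward mean-field SDE with inputs $(\overline R,\overline f)$ and \emph{defines} $\overline H$ via the conditional-expectation integral formula (the analogue of \eqref{integral-H}). The $S^\nu$-convergence of the averaged $Q^n$ then follows from a standard Lipschitz SDE estimate driven by the $L^\nu$-convergence of the averaged $(R^n,\overline f^n)$, and the $L^1$-convergence of the averaged $H^n$ from a direct comparison of the two integral representations --- this is precisely where the $A^n\to A$ work you outlined takes place, via \eqref{estimate-A}, \eqref{additional-ass-sta} and dominated convergence, and your description of that step is accurate. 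One then sets $\widehat R:=A\overline Q+\overline H$, checks $\widehat R=\overline R$ a.e.\ on $[0,T)$, reads off that $(\overline Q,\overline H,\widehat R)\in\mathcal H_\alpha\times S^{2,-}_{\mathbb F}\times L^2_{\mathbb F}$ solves \eqref{FBSDE-appendix}, and concludes by uniqueness. Your plan becomes correct once you make this switch from ``extract all three limits via Koml\'os'' to ``extract $\overline R$ via Koml\'os, define $\overline Q$ and $\overline H$ through the equations''.
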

\begin{proof}

The uniform boundedness of $\overline f^n$ and $\overline g^n$ implies the uniform boundedness of $R^n$ in $L^2$ (Lemma \ref{boundedness-appendix}) and the uniform boundedness of $\frac{1}{N}\sum_{k=1}^NR^{n_k}$ in $L^2$. Thus, \cite{BKOW-2004} again yields the existence of a progressively measurable process $\overline R\in L^2_{\mathbb F}$ and a subsequence of $\frac{1}{N}\sum_{k=1}^NR^{n_k}$ such that
\begin{equation}\label{convergence-Rn-Lnu}
    \lim_{N'\rightarrow\infty}\mathbb E\left[\int_0^T\left|\frac{1}{N'}\sum_{j=1}^{N'}\frac{1}{N_j}\sum_{k=1}^{N_j}R^{n_k}_t-\overline R_t\right|^\nu\,dt\right]=0.
\end{equation}
By \eqref{convergence-fn-gn-Lnu}, the convergence of the same convex combination holds for $(\overline f^n,\overline g^n)$:
\begin{equation}\label{convergence-fn-gn-Lnu-2}
    \lim_{N'\rightarrow\infty}\mathbb E\left[\int_0^T\left|\frac{1}{N'}\sum_{j=1}^{N'}\frac{1}{N_j}\sum_{k=1}^{N_j}(\overline f^{n_k}_t,\overline g^{n_k}_t)-(\overline f_t,\overline g_t)\right|^\nu\,dt\right]=0.
\end{equation}
Define $\overline Q$ as the unique solution in $S^2_{\mathbb F}$ to the following mean field SDE in terms of the limits $\overline f$ and $\overline R$:
\begin{equation}\label{def-Q-bar}
    \overline Q_t=\chi+\int_0^t\left(-\Lambda^1_s\overline R_s-\Lambda^2_s\mathbb E[\gamma_s\overline Q_s|\mathcal F^0_s]+\overline f_s\right)\,ds.
\end{equation}
Standard SDE estimates, \eqref{convergence-Rn-Lnu} and \eqref{convergence-fn-gn-Lnu-2} yield,
\begin{equation}\label{convergence-Q-Snu}
   \lim_{N'\rightarrow\infty}\mathbb E\left[\sup_{0\leq t\leq T}\left|\frac{1}{N'}\sum_{j=1}^{N'}\frac{1}{N_j}\sum_{k=1}^{N_j}Q^{n_k}_t-\overline Q_t\right|^\nu\right]=0.
\end{equation}
Now define $\overline H$ in terms of the limits $\overline f$, $\overline R$ and $\overline Q$ as
\begin{equation}\label{def-H-bar}
\begin{split}
    \overline H_t=& \mathbb E\left[\left.\int_t^Te^{-\int_t^s\Lambda^1_u A_u\,du}\left(-\Lambda^2_sA_s\mathbb E[\gamma_s\overline Q_s|\mathcal F^0_s]+A_s\overline f_s+\Lambda^3_s\mathbb E[\zeta_s\overline R_s|\mathcal F^0_s] \right. \right. \right. \\
    & \left. \left. \left. \qquad +\Lambda^5_s\mathbb E[\varrho_s\overline Q_s|\mathcal F^0_s]+\overline g_s\right)\,ds\right|\mathcal F_t\right].
   \end{split}
\end{equation}
Thus, by \eqref{estimate-A}, \eqref{additional-ass-sta} and H\"older inequality,
\begin{align*}
    %\begin{split}
        &~\left|\frac{1}{N'}\sum_{j=1}^{N'}\frac{1}{N_j}\sum_{k=1}^{N_j}H^{n_k}_t-\overline H_t\right|\\
        \leq &~\frac{C}{N'}\sum_{j=1}^{N'}\frac{1}{N_j}\sum_{k=1}^{N_j}\left(\mathbb E\left[\left.\left(\int_t^T\left|e^{-\int_t^s\Lambda^1_uA^{n_k}_u\,du}A^{n_k}_s-e^{-\int_t^s\Lambda^1_uA^{}_u\,du}A^{}_s\right|\,ds\right)^2\right|\mathcal F_t\right]\right)^{\frac{1}{2}}\\
        &~\times\left(\mathbb E\left[\left.\sup_{0\leq s\leq T}|\mathbb E[Q^{n_k}_s|\mathcal F^0_s]|^2+\sup_{0\leq s\leq T}(\overline f^{n_k}_s)^2\right|\mathcal F_t\right]\right)^{\frac{1}{2}}\\
        &~+\frac{C}{N'}\sum_{j=1}^{N'}\frac{1}{N_j}\sum_{k=1}^{N_j}\left(\mathbb E\left[\left.\int_t^T\left|e^{-\int_t^s\Lambda^1_uA^{n_k}_u\,du}-e^{-\int_t^s\Lambda^1_uA^{}_u\,du}\right|^2\,ds\right|\mathcal F_t\right]\right)^{\frac{1}{2}}\left(\mathbb E\left[\left.\int_t^T|\overline g^{n_k}_s|^2\,ds\right|\mathcal F_t\right]\right)^{\frac{1}{2}}\\
        &~+\frac{C}{(T-t)^{\frac{1}{\nu}}}\left(\mathbb E\left[\left.\int_0^T\mathbb E\left[\left.\left|\frac{1}{N'}\sum_{j=1}^{N'}\frac{1}{N_j}\sum_{k=1}^{N_j}Q^{n_k}_s-\overline Q_s\right|^\nu\right|\mathcal F^0_s\right]+\left|\frac{1}{N'}\sum_{j=1}^{N'}\frac{1}{N_j}\sum_{k=1}^{N_j}f^{n_k}_s-\overline f_s\right|^\nu\,ds\right|\mathcal F_t\right]\right)^{\frac{1}{\nu}}\\
        &~+\frac{C}{N'}\sum_{j=1}^{N'}\frac{1}{N_j}\sum_{k=1}^{N_j}\left(\mathbb E\left[\left.\int_t^T\left|e^{-\int_t^s\Lambda^1_uA^{n_k}_u}-e^{-\int_t^s\Lambda^1_uA^{}_u}\right|^2\,ds\right|\mathcal F_t\right]\right)^{\frac{1}{2}}\left(\mathbb E\left[\left.\int_t^T\mathbb E[(R^{n_k}_s)^2+(Q^{n_k}_s)^2|\mathcal F^0_s]\,ds\right|\mathcal F_t\right]\right)^{\frac{1}{2}}\\
        &~+C\mathbb E\left[\left.\int_0^T\mathbb E\left[\left.\left|\frac{1}{N'}\sum_{j=1}^{N'}\frac{1}{N_j}\sum_{k=1}^{N_j}R^{n_k}_s-\overline R_s\right|+\left|\frac{1}{N'}\sum_{j=1}^{N'}\frac{1}{N_j}\sum_{k=1}^{N_j}Q^{n_k}_s-\overline Q_s\right|\right|\mathcal F^0_s\right]\,ds\right|\mathcal F_t\right]\\
        &~+\mathbb E\left[\left.\int_t^T\left|\frac{1}{N'}\sum_{j=1}^{N'}\frac{1}{N_j}\sum_{k=1}^{N_j}\overline g^{n_k}_s-\overline g_s\right|\,ds\right|\mathcal F_t\right].
    %\end{split}
\end{align*}
Applying H\"older's inequality again along with Doob's maximal inequality, the uniform boundedness of $(Q^n,R^n,\overline f^n,\overline g^n)$ and the dominated convergence theorem we get,
\begin{equation}\label{convergence-H-L1}
   \lim_{N'\rightarrow\infty} \mathbb E\left[\int_0^T\left|\frac{1}{N'}\sum_{j=1}^{N'}\frac{1}{N_j}\sum_{k=1}^{N_j}H^{n_k}_t-\overline H_t\right|\,dt\right]=0.
\end{equation}
Let $\widehat R= A\overline Q+\overline H$. For any $\widetilde T<T$, by \eqref{convergence-Q-Snu} and \eqref{convergence-H-L1} we have
\[
    \lim_{N'\rightarrow\infty}\mathbb E\left[\int_0^{\widetilde T}\left|\frac{1}{N'}\sum_{j=1}^{N'}\frac{1}{N_j}\sum_{k=1}^{N_j}R^{n_k}_t-\widehat R_t\right|\,dt\right]=0.
\]
Thus, \eqref{convergence-Rn-Lnu} implies that for any $\widetilde T<T$,
\[
    \mathbb E\left[\int_0^{\widetilde T}|\widehat R_t-\overline R_t|\,dt\right]=0.
\]
This proves that
\[
    \widehat R=\overline R, \textrm{ a.e. a.s. on }[0,T]\times\Omega.
\]

Thus, the limit $(\overline Q,\overline H,\widehat R)$ satisfies the system \eqref{FBSDE-appendix}. Moreover,
\[
	(\overline Q,\overline H,\widehat R)\in\mathcal H_\alpha\times S^{2,-}_{\mathbb F}\times L^2_{\mathbb F}.
\]
Indeed, since $\overline R\in L^2_{\mathbb F}$ and $\overline R=\widetilde R$ a.e. a.s. on $[0,T]\times\Omega$, we have that $\widehat R\in L^2_{\mathbb F}$. Moreover, \eqref{def-Q-bar} implies that $\overline Q\in S^2_{\mathbb F}$, from which \eqref{def-H-bar} implies $H\in S^{2,-}_{\mathbb F}$ and taking $\widehat R= A\overline Q+\overline H$ into \eqref{def-Q-bar} yields $\overline Q\in\mathcal H_\alpha$. Hence, the uniqueness of solutions in $\mathcal H_\alpha\times S^{2,-}_{\mathbb F}\times L^2_{\mathbb F}$ yields the desired convergence result.
\end{proof}

%%%%%%%%%%%%%%%%%%%%%%%%%%%%%%%%%%%
%%%%%%%%%%%%%%%%%%%%%%%%%%%%%%%%%%%
%%%%%%%%%%%%%%%%%%%%%%%%%%%%%%%%%%%

\section{A MFC problem of optimal portfolio liquidation}\label{section-MKV}

In this section, we solve the single-player portfolio liquidation model with expectations feedback introduced in Section \ref{portfolio1}. We make the following assumption which implies Assumption \ref{ass-ex-sta}.

\begin{ass}\label{ass-MKV}
The process $\widetilde g$ belongs to $L^2_{\mathbb F}$. The progressively measurable stochastic processes $\eta$, $\kappa$ and $\lambda$ are nonnegative and essentially bounded. Moreover, there exists some $\theta'>0$ such that
\[
    \etamin-\frac{\|\kappa\|}{2\theta'}>0,\qquad \lambdamin-\|\kappa\|\theta'>0.
\]
\end{ass}

The trader's objective is to minimize the cost function $J(\cdot)$ introduced in \eqref{cost-MKV} over the set of admissible controls
\[
    \mathcal A_{\mathbb F}(x):=\left\{\xi\in L^2_{{\mathbb F}}([0,T]\times\Omega;\mathbb R):\int_0^T\xi_s\,ds=x\right\}.
\]
A standard stochastic maximum principle suggests the candidate optimal strategy is given by
\begin{equation}\label{MKV-optimal-trading-rate}
    \xi^*_t=\frac{Y_t-\mathbb E[\kappa_tX_t|\mathcal F^0_t]}{2\eta_t}
\end{equation}
where $(X,Y)\in\mathcal H_\alpha\times L^2_{\mathbb F}$ is the unique solution to the FBSDE system \eqref{MKV-FBSDE}. Standard arguments show that $\xi^* \in \mathcal A_{\mathbb F}(x)$. To prove that $\xi^*$ is indeed the unique optimal control, we establish an auxiliary result that substitutes for the lack of convexity of the Hamiltonian for our MFC problem.

\begin{lemma}\label{auxiliary-convexity}
For every $t\in[0,T)$, we have
\begin{equation}
    \begin{split}
        &~\mathbb E\left[\kappa_tX_t\mathbb E[\xi_t|\mathcal F^0_t]+\eta_t\xi^2_t+\lambda_tX^2_t\right]-\mathbb E\left[\kappa_tX^*_t\mathbb E[\xi^*_t|\mathcal F^0_t]+\eta_t(\xi^*_t)^2+\lambda_t(X^*_t)^2\right]\\
        \geq&~\mathbb E\left[\left(\mathbb E[\kappa_tX^*_t|\mathcal F^0_t]+2\eta_t\xi^*_t\right)(\xi_t-\xi^*_t)+2\lambda_tX^*_t(X_t-X^*_t)+\kappa_t(X_t-X^*_t)\mathbb E[\xi^*_t|\mathcal F^0_t]\right].
    \end{split}
\end{equation}
Moreover, the above inequality becomes an equality if and only if $\xi_t= \xi^*_t$ a.s..
\end{lemma}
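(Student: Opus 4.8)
The plan is to expand the difference on the left-hand side algebraically and reduce everything to a single manifestly nonnegative quantity. First I would write $\xi_t = \xi^*_t + \delta_t$ and $X_t = X^*_t + \varphi_t$, where $\varphi_t = -\int_0^t \delta_s\,ds$ (since both $X$ and $X^*$ start at $x$, the increments match up). Expanding the quadratic terms,
\[
\eta_t\xi_t^2 - \eta_t(\xi^*_t)^2 = 2\eta_t\xi^*_t\delta_t + \eta_t\delta_t^2,\qquad
\lambda_t X_t^2 - \lambda_t(X^*_t)^2 = 2\lambda_t X^*_t\varphi_t + \lambda_t\varphi_t^2,
\]
and, using that $\mathbb E[\cdot|\mathcal F^0_t]$ is linear and that $\mathbb E[\kappa_t X_t \mathbb E[\xi_t|\mathcal F^0_t]] = \mathbb E[\mathbb E[\kappa_t X_t|\mathcal F^0_t]\,\mathbb E[\xi_t|\mathcal F^0_t]]$ by the tower property,
\[
\kappa_t X_t\mathbb E[\xi_t|\mathcal F^0_t] - \kappa_t X^*_t\mathbb E[\xi^*_t|\mathcal F^0_t]
= \kappa_t\varphi_t\mathbb E[\xi^*_t|\mathcal F^0_t] + \mathbb E[\kappa_t X^*_t|\mathcal F^0_t]\,\mathbb E[\delta_t|\mathcal F^0_t] + \mathbb E[\kappa_t\varphi_t|\mathcal F^0_t]\,\mathbb E[\delta_t|\mathcal F^0_t],
\]
after taking expectations and again invoking the tower property so that $\mathbb E[\mathbb E[\kappa_t X^*_t|\mathcal F^0_t]\delta_t] = \mathbb E[(\mathbb E[\kappa_t X^*_t|\mathcal F^0_t])\mathbb E[\delta_t|\mathcal F^0_t]]$. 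Subtracting the claimed right-hand side, all the first-order (linear in $\delta_t,\varphi_t$) terms cancel, and what remains is
\[
\mathbb E\left[\eta_t\delta_t^2 + \lambda_t\varphi_t^2 + \mathbb E[\kappa_t\varphi_t|\mathcal F^0_t]\,\mathbb E[\delta_t|\mathcal F^0_t]\right].
\]

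The key step is then to show this residual is nonnegative, with equality iff $\delta_t = 0$ a.s. The cross term is the only obstacle to positivity, and it need not have a sign. Here I would exploit the structure $\varphi_t = -\int_0^t\delta_s\,ds$, but more directly I expect the cleaner route is to bound the cross term using Young's inequality: for the $\theta'>0$ from Assumption \ref{ass-MKV},
\[
\left|\mathbb E[\kappa_t\varphi_t|\mathcal F^0_t]\,\mathbb E[\delta_t|\mathcal F^0_t]\right| \le \|\kappa\|\,\theta'\,\mathbb E[|\varphi_t|\,|\mathcal F^0_t]^2/\text{(something)} + \cdots,
\]
arranged so that the $\varphi_t^2$ piece is absorbed by $\lambda_t\varphi_t^2$ (using $\lambdamin - \|\kappa\|\theta' > 0$) and the $\delta_t^2$ piece is absorbed by $\eta_t\delta_t^2$ (using $\etamin - \|\kappa\|/(2\theta') > 0$). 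Concretely, writing $\mathbb E[\kappa_t\varphi_t|\mathcal F^0_t]\mathbb E[\delta_t|\mathcal F^0_t] \ge -\frac{\|\kappa\|}{2}\big(\theta'\,\mathbb E[|\varphi_t|\,|\mathcal F^0_t]^2 + \theta'^{-1}\mathbb E[|\delta_t|\,|\mathcal F^0_t]^2\big)$ and then using Jensen ($\mathbb E[|\varphi_t|\,|\mathcal F^0_t]^2 \le \mathbb E[\varphi_t^2|\mathcal F^0_t]$, likewise for $\delta_t$) together with the tower property, the residual is bounded below by
\[
\mathbb E\left[\left(\etamin - \tfrac{\|\kappa\|}{2\theta'}\right)\delta_t^2 + \left(\lambdamin - \tfrac{\|\kappa\|\theta'}{2}\right)\varphi_t^2\right] \ge 0,
\]
which uses essentially the same coercivity bookkeeping as Assumption \ref{ass-ex-sta}(ii) specialized via Assumption \ref{ass-MKV}.

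For the equality case: if $\delta_t = 0$ a.s. then $\varphi_t = 0$ and every term vanishes, giving equality. Conversely, if equality holds then the lower bound forces $(\etamin - \|\kappa\|/(2\theta'))\,\mathbb E[\delta_t^2] = 0$, and since $\etamin - \|\kappa\|/(2\theta') > 0$ strictly we get $\mathbb E[\delta_t^2] = 0$, i.e. $\xi_t = \xi^*_t$ a.s. I expect the main obstacle to be purely organizational rather than conceptual: carefully tracking the conditional expectations so that every cross term is rewritten via the tower property before Young/Jensen is applied, and choosing the split in Young's inequality so that both leftover quadratic coefficients come out strictly positive under Assumption \ref{ass-MKV} — this is where one must be sure the constant $\theta'$ is the same one that appears in the assumption (it is, by design).
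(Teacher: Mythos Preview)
Your approach is essentially the same as the paper's: reduce the inequality to showing
\[
\mathbb E\left[\eta_t\delta_t^2+\lambda_t\varphi_t^2+\mathbb E[\kappa_t\varphi_t|\mathcal F^0_t]\,\mathbb E[\delta_t|\mathcal F^0_t]\right]\ge 0,
\]
then bound the cross term via Young's inequality (with the parameter $\theta'$ from Assumption~\ref{ass-MKV}) and Jensen to absorb the conditional pieces into $\eta_t\delta_t^2$ and $\lambda_t\varphi_t^2$. The paper does exactly this, with the same final lower bound.

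One small slip: in the ``if'' direction of the equality case you write ``if $\delta_t=0$ a.s.\ then $\varphi_t=0$''. Since $\varphi_t=-\int_0^t\delta_s\,ds$, vanishing of $\delta$ at the single time $t$ does \emph{not} force $\varphi_t=0$. The lemma is really a pointwise-in-$t$ quadratic-form statement in the pair $(\delta_t,\varphi_t)$, and the relation $\varphi_t=-\int_0^t\delta_s\,ds$ plays no role in its proof (the paper never invokes it). The direction that is actually used downstream is the ``only if'' one, and there your argument is fine: equality forces the strictly positive lower bound to vanish, hence $\mathbb E[\delta_t^2]=0$. The paper likewise dispatches the equality claim in one sentence (``obvious from the above estimate'') without addressing this point.
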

\begin{proof}
%The proof is straightforward although we do not have the convexity of the running cost.
To prove \eqref{auxiliary-convexity}, it is equivalent to show
\[
    \mathbb E\left[\eta_t(\xi_t-\xi^*_t)^2+\lambda_t(X_t-X^*_t)^2+\mathbb E[(\xi_t-\xi^*_t)|\mathcal F^0_t]\mathbb E[\kappa_t(X_t-X^*_t)|\mathcal F^0_t]\right]\geq 0.
\]
Note that
\begin{equation*}
    \begin{split}
        &~|\mathbb E\left[\mathbb E[(\xi_t-\xi^*_t)|\mathcal F^0_t]\mathbb E[\kappa_t(X_t-X^*_t)|\mathcal F^0_t]\right]|\\
        \leq&~\|\kappa\|\mathbb E\left[\mathbb E[|\xi_t-\xi^*_t||\mathcal F^0_t]\mathbb E[|X_t-X^*_t||\mathcal F^0_t]\right]\\
        \leq&~\frac{\|\kappa\|}{2\theta}\mathbb E\left[\left(\mathbb E[|\xi_t-\xi^*_t||\mathcal F^0_t]\right)^2\right]+\frac{\|\kappa\|\theta}{2}\mathbb E\left[\left(\mathbb E[|X_t-X^*_t||\mathcal F^0_t]\right)^2\right].
    \end{split}
\end{equation*}
Thus,
\begin{equation*}
    \begin{split}
    &~\mathbb E\left[\eta_t(\xi_t-\xi^*_t)^2+\lambda_t(X_t-X^*_t)^2+\mathbb E[(\xi_t-\xi^*_t)|\mathcal F^0_t]\mathbb E[\kappa_t(X_t-X^*_t)|\mathcal F^0_t]\right]\\
    \geq&~\mathbb E\left[\left(\etamin-\frac{\|\kappa\|}{2\theta}\right)(\xi_t-\xi^*_t)^2+\left(\lambdamin-\frac{\|\kappa\|\theta}{2}\right)(X_t-X^*_t)^2-\|\kappa\|\mathbb E[|\xi_t-\xi^*_t||\mathcal F^0_t]\mathbb E[|X_t-X^*_t||\mathcal F^0_t]\right]\\
    &~+\frac{\|\kappa\|}{2\theta}\mathbb E\left[(\xi_t-\xi^*_t)^2\right]+\frac{\|\kappa\|\theta}{2}\mathbb E\left[(X_t-X^*_t)^2\right]\\
    \geq&~\mathbb E\left[\left(\etamin-\frac{\|\kappa\|}{2\theta}\right)(\xi_t-\xi^*_t)^2+\left(\lambdamin-\frac{\|\kappa\|\theta}{2}\right)(X_t-X^*_t)^2-\|\kappa\|\mathbb E[|\xi_t-\xi^*_t||\mathcal F^0_t]\mathbb E[|X_t-X^*_t||\mathcal F^0_t]\right]\\
    &~+\frac{\|\kappa\|}{2\theta}\mathbb E\left[(\mathbb E[|\xi_t-\xi^*_t||\mathcal F^0_t])^2\right]+\frac{\|\kappa\|\theta}{2}\mathbb E\left[(\mathbb E[|X_t-X^*_t||\mathcal F^0_t])^2\right]\\
    \geq&~\mathbb E\left[\left(\etamin-\frac{\|\kappa\|}{2\theta}\right)(\xi_t-\xi^*_t)^2+\left(\lambdamin-\frac{\|\kappa\|\theta}{2}\right)(X_t-X^*_t)^2\right]\\
    \geq&~0.
    \end{split}
\end{equation*}
The second claim is obvious from the above estimate.
\end{proof}

We are now ready to state and prove the main result of this section.

\begin{theorem}
	Under Assumption \ref{ass-MKV} the process $\xi^*$ defined in \eqref{MKV-optimal-trading-rate} is the unique optimal control to the MFC problem \eqref{cost-MKV}-\eqref{state-MFC}.
\end{theorem}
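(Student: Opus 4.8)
The plan is to verify optimality by a perturbation argument around $\xi^*$, replacing the usual convexity-of-the-Hamiltonian step (which fails for this cost functional) by Lemma~\ref{auxiliary-convexity}. We have already noted that $\xi^*\in\mathcal A_{\mathbb F}(x)$; since $\widetilde g\in L^2_{\mathbb F}$ and every admissible $\xi$ lies in $L^2_{\mathbb F}$ with associated state $X_t=x-\int_0^t\xi_s\,ds\in S^2_{\mathbb F}$, the cost $J$ is finite on $\mathcal A_{\mathbb F}(x)$. Fix $\xi\in\mathcal A_{\mathbb F}(x)$, write $X^*$ for the optimal state (so $X^*\in\mathcal H_\alpha\subset S^2_{\mathbb F}$ and $X^*_T=0$), and use $\widetilde g_tX_t-\widetilde g_tX^*_t=\widetilde g_t(X_t-X^*_t)$; Lemma~\ref{auxiliary-convexity}, applied for each $t$ and integrated over $[0,T]$, then gives
\begin{equation*}
J(\xi)-J(\xi^*)\ \ge\ \mathbb E\!\int_0^T\!\Big[\big(\mathbb E[\kappa_tX^*_t|\mathcal F^0_t]+2\eta_t\xi^*_t\big)(\xi_t-\xi^*_t)+2\lambda_tX^*_t(X_t-X^*_t)+\kappa_t(X_t-X^*_t)\mathbb E[\xi^*_t|\mathcal F^0_t]+\widetilde g_t(X_t-X^*_t)\Big]dt,
\end{equation*}
with equality if and only if $\xi_t=\xi^*_t$ a.s.\ for a.e.\ $t$.

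By \eqref{MKV-optimal-trading-rate} one has $\mathbb E[\kappa_tX^*_t|\mathcal F^0_t]+2\eta_t\xi^*_t=Y_t$, so the first term in the lower bound is $Y_t(\xi_t-\xi^*_t)$; moreover, since $\mathbb E[\kappa_tX^*_t|\mathcal F^0_t]$ is $\mathcal F^0_t$-measurable, $\mathbb E[\xi^*_t|\mathcal F^0_t]=\mathbb E[\tfrac{Y_t}{2\eta_t}|\mathcal F^0_t]-\mathbb E[\tfrac1{2\eta_t}|\mathcal F^0_t]\,\mathbb E[\kappa_tX^*_t|\mathcal F^0_t]$. Since $d(X_t-X^*_t)=-(\xi_t-\xi^*_t)\,dt$ with $X_0-X^*_0=X_T-X^*_T=0$, integrating by parts the product $Y_t(X_t-X^*_t)$ on $[0,T-\epsilon]$, using the backward equation of \eqref{MKV-FBSDE} and taking expectations (the stochastic integral has zero mean because $X-X^*\in S^2_{\mathbb F}$ and $(Y,Z)\in S^2\times L^2$ on $[0,T-\epsilon]$ by standard linear-BSDE estimates) yields
\begin{equation*}
\mathbb E\!\int_0^{T-\epsilon}\!\!Y_t(\xi_t-\xi^*_t)\,dt=-\mathbb E\big[Y_{T-\epsilon}(X_{T-\epsilon}-X^*_{T-\epsilon})\big]-\mathbb E\!\int_0^{T-\epsilon}\!\!(X_t-X^*_t)\Big(\kappa_t\mathbb E[\tfrac{Y_t}{2\eta_t}|\mathcal F^0_t]-\kappa_t\mathbb E[\tfrac1{2\eta_t}|\mathcal F^0_t]\mathbb E[\kappa_tX^*_t|\mathcal F^0_t]+2\lambda_tX^*_t+\widetilde g_t\Big)dt.
\end{equation*}
Substituting this into the lower bound and letting $\epsilon\searrow0$ (legitimate as $Y\in L^2_{\mathbb F}$), the terms in $\lambda_tX^*_t(X_t-X^*_t)$ and in $\widetilde g_t(X_t-X^*_t)$ cancel against their counterparts, and the contribution $\kappa_t(X_t-X^*_t)\mathbb E[\xi^*_t|\mathcal F^0_t]$ cancels against $-\kappa_t(X_t-X^*_t)\big(\mathbb E[\tfrac{Y_t}{2\eta_t}|\mathcal F^0_t]-\mathbb E[\tfrac1{2\eta_t}|\mathcal F^0_t]\mathbb E[\kappa_tX^*_t|\mathcal F^0_t]\big)$ by the displayed identity for $\mathbb E[\xi^*_t|\mathcal F^0_t]$; hence $J(\xi)-J(\xi^*)\ge-\lim_{\epsilon\searrow0}\mathbb E[Y_{T-\epsilon}(X_{T-\epsilon}-X^*_{T-\epsilon})]$.

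The main technical point is that this boundary term, produced by the singular terminal behaviour of the adjoint process, vanishes. Writing $Y=AX^*+H$ on $[0,T)$: for the $H$-part, $\mathbb E[|H_{T-\epsilon}(X_{T-\epsilon}-X^*_{T-\epsilon})|]\le\|H\|_{S^{2,-}}(\mathbb E|X_{T-\epsilon}-X^*_{T-\epsilon}|^2)^{1/2}\to0$ since $X_{T-\epsilon}-X^*_{T-\epsilon}\to0$ in $L^2$; for the $AX^*$-part, the upper bound in \eqref{estimate-A} gives $A_{T-\epsilon}\le C/\epsilon$ a.s., while $X^*_{T-\epsilon}=\int_{T-\epsilon}^T\xi^*_u\,du$ and $X_{T-\epsilon}-X^*_{T-\epsilon}=\int_{T-\epsilon}^T(\xi_u-\xi^*_u)\,du$ together with Cauchy--Schwarz give $\mathbb E[(X^*_{T-\epsilon})^2]\le\epsilon\,\mathbb E\!\int_{T-\epsilon}^T(\xi^*_u)^2du$ and $\mathbb E[(X_{T-\epsilon}-X^*_{T-\epsilon})^2]\le\epsilon\,\mathbb E\!\int_{T-\epsilon}^T(\xi_u-\xi^*_u)^2du$, both of the form $\epsilon\cdot o(1)$ as $\epsilon\searrow0$ because $\xi^*,\xi-\xi^*\in L^2_{\mathbb F}$; hence $\mathbb E[A_{T-\epsilon}|X^*_{T-\epsilon}|\,|X_{T-\epsilon}-X^*_{T-\epsilon}|]\le\frac{C}{\epsilon}(\mathbb E(X^*_{T-\epsilon})^2)^{1/2}(\mathbb E(X_{T-\epsilon}-X^*_{T-\epsilon})^2)^{1/2}\to0$. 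This gives $J(\xi)\ge J(\xi^*)$, and equality forces equality in Lemma~\ref{auxiliary-convexity} for a.e.\ $t$, hence $\xi_t=\xi^*_t$ a.s.\ for a.e.\ $t$, i.e.\ $\xi=\xi^*$ in $L^2_{\mathbb F}$, so $\xi^*$ is the unique optimal control. The delicate step is precisely this boundary estimate, where the blow-up $A_{T-\epsilon}\sim1/\epsilon$ is exactly compensated by the $O(\epsilon)$ decay of the two $L^2$-norms forced by the liquidation constraints.
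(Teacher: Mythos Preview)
Your argument is correct and follows essentially the same route as the paper: apply Lemma~\ref{auxiliary-convexity} in place of Hamiltonian convexity, integrate by parts $Y(X-X^*)$ on $[0,T-\epsilon]$, observe that all interior terms cancel by \eqref{MKV-optimal-trading-rate} and the backward equation, and then kill the boundary term $\mathbb E[Y_{T-\epsilon}(X_{T-\epsilon}-X^*_{T-\epsilon})]$ as $\epsilon\searrow 0$. The paper simply cites \cite[Theorem~2.9]{FGHP-2018} for this last limit, whereas you supply the estimate explicitly via the decomposition $Y=AX^*+H$, the bound $A_{T-\epsilon}\le C/\epsilon$ from \eqref{estimate-A}, and the $O(\epsilon)$ decay of $\mathbb E[(X^*_{T-\epsilon})^2]$ and $\mathbb E[(X_{T-\epsilon}-X^*_{T-\epsilon})^2]$ forced by the liquidation constraint and $\xi,\xi^*\in L^2$; your uniqueness argument via the equality case of Lemma~\ref{auxiliary-convexity} also matches the paper's.
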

\begin{proof}
To prove the optimality of the candidate strategy $\xi^*$ we fix an arbitrary control $\xi \in \mathcal A_{\mathbb F}(x)$ and denote by $X^*$ and $X$ the corresponding state processes. For any $\epsilon>0$, it follows from Lemma \ref{auxiliary-convexity} that
\begin{equation}\label{SMP-1}
    \begin{split}
        &~\mathbb E\left[\int_0^{T-\epsilon}\kappa_tX_t\mathbb E[\xi_t|\mathcal F^0_t]+\widetilde g_tX_t+\eta_t\xi^2_t+\lambda_tX^2_t\,dt\right]\\
        &~-\mathbb E\left[\int_0^{T-\epsilon}\kappa_tX^*_t\mathbb E[\xi^*_t|\mathcal F^0_t]+\widetilde g_tX^*_t+\eta_t(\xi^*_t)^2+\lambda_t(X^*_t)^2\,dt\right]\\
        \geq&~\mathbb E\left[\int_0^{T-\epsilon}\left(\mathbb E[\kappa_tX^*_t|\mathcal F^0_t]+2\eta_t\xi^*_t\right)(\xi_t-\xi^*_t)+(2\lambda_tX^*_t+\kappa_t\mathbb E[\xi^*_t|\mathcal F^0_t]+\widetilde g_t)(X_t-X^*_t)\,dt\right].
    \end{split}
\end{equation}
Integration by part yields,
\begin{equation}\label{SMP-2}
    \begin{split}
        &\mathbb E\left[Y_{T-\epsilon}(X_{T-\epsilon}-X^*_{T-\epsilon})\right]\\
            =&-\mathbb E\left[\int_0^{T-\epsilon}Y_t(\xi_t-\xi^*_t)\,dt\right]-\mathbb E\left[\int_0^{T-\epsilon}(X_t-X^*_t)\left(\kappa_t\mathbb E\left[\left.\frac{Y^*_t}{2\eta_t}\right|\mathcal F^0_t\right] \right. \right.  \\
        & \left. \left. \qquad -\kappa_t\mathbb E\left[\left.\frac{1}{2\eta_t}\right|\mathcal F^0_t\right]\mathbb E[\kappa_tX^*_t|\mathcal F^0_t]+2\lambda_tX^*_t+\widetilde g_t\right)\,dt\right]\\
        =&~-\mathbb E\left[\int_0^{T-\epsilon}Y_t(\xi_t-\xi^*_t)\,dt\right]-\mathbb E\left[\int_0^{T-\epsilon}(X_t-X^*_t)\left(\kappa_t\mathbb E\left[\xi^*_t|\mathcal F^0_t\right]+2\lambda_tX^*_t+\widetilde g_t\right)\,dt\right].
    \end{split}
\end{equation}
Putting \eqref{SMP-2} into \eqref{SMP-1}, we have
\begin{equation}\label{SMP-3}
    \begin{split}
        &~\mathbb E\left[\int_0^{T-\epsilon}\kappa_tX_t\mathbb E[\xi_t|\mathcal F^0_t]+\widetilde g_tX_t+\eta_t\xi^2_t+\lambda_tX^2_t\,dt\right]\\
        &~-\mathbb E\left[\int_0^{T-\epsilon}\kappa_tX^*_t\mathbb E[\xi^*_t|\mathcal F^0_t]+\widetilde g_tX^*_t+\eta_t(\xi^*_t)^2+\lambda_t(X^*_t)^2\,dt\right]\\
        &~+\mathbb E\left[Y_{T-\epsilon}(X_{T-\epsilon}-X^*_{T-\epsilon})\right]\\
        \geq&~\mathbb E\left[\int_0^{T-\epsilon}\left(\mathbb E[\kappa_tX^*_t|\mathcal F^0_t]+2\eta_t\xi^*_t-Y_t\right)(\xi_t-\xi^*_t)\,dt\right]=0.
    \end{split}
\end{equation}
Letting $\epsilon\rightarrow 0$, a similar argument as the proof of \cite[Theorem 2.9]{FGHP-2018} yields that
\begin{equation*}
    \begin{split}
    &~\lim_{\epsilon\rightarrow 0}\mathbb E[Y_{T-\epsilon}(X_{T-\epsilon}-X^*_{T-\epsilon})]=0.%=(A_{T-\epsilon}X^*_{T-\epsilon}+B_{T-\epsilon})(X_{T-\epsilon}-X^*_{T-\epsilon})\\
    %\leq&~ CA_{T-\epsilon}X^2_{T-\epsilon}+CA_{T-\epsilon}(X^*_{T-\epsilon})^2+B_{T-\epsilon}(X_{T-\epsilon}-X^*_{T-\epsilon})\\
    %\leq&~C\int_{T-\epsilon}^T(\xi^2_s+(\xi^*_s)^2)\,ds+B_{T-\epsilon}(X_{T-\epsilon}-X^*_{T-\epsilon})\\
    %\rightarrow&~0.
    \end{split}
\end{equation*}
Thus, \eqref{SMP-3} implies $$J(\xi)\geq J(\xi^*).$$

In order to prove the uniqueness of optimal controls, let $\xi'$ be another optimal control. Then, \eqref{SMP-3} yields
 \begin{equation*}
    \begin{split}
        0=&~\mathbb E\left[\int_0^{T}\kappa_tX_t\mathbb E[\xi'_t|\mathcal F^0_t]+\widetilde g_tX_t'+\eta_t(\xi'_t)^2+\lambda_t(X'_t)^2\,dt\right] \\
        & \qquad -\mathbb E\left[\int_0^{T}\kappa_tX^*_t\mathbb E[\xi^*_t|\mathcal F^0_t]+\widetilde g_tX^*_t+\eta_t(\xi^*_t)^2+\lambda_t(X^*_t)^2\,dt\right]\\
        \geq&~\mathbb E\left[\int_0^{T}\left(\mathbb E[\kappa_tX^*_t|\mathcal F^0_t]+2\eta_t\xi^*_t-Y_t\right)(\xi'_t-\xi^*_t)\,dt\right]=0.
    \end{split}
\end{equation*}
Thus, \eqref{SMP-1} holds with an equality. The second claim in Lemma \ref{auxiliary-convexity} implies the uniqueness.
\end{proof}

%%%%%%%%%%%%%%%%%%%%%%%%%%%%%%%%%%%
%%%%%%%%%%%%%%%%%%%%%%%%%%%%%%%%%%%
%%%%%%%%%%%%%%%%%%%%%%%%%%%%%%%%%%%

\section{A Stackelberg game of optimal portfolio liquidation}\label{section-MFG}

In this section, we solve the Stackelberg game of optimal portfolio liquidation introduced in Section \ref{portfolio2} above. We make the following assumption which implies Assumption \ref{ass-ex-sta} and Assumption \eqref{additional-ass-sta}.

\begin{ass}\label{assumption-MFG}
\begin{itemize}
    \item[(1)] The processes $\widetilde\kappa^0$, $\kappa$, $\eta$, $1/\eta$ and $\lambda$ belong to $L^{\infty}_{\mathbb F}([0,T]\times\Omega;[0,\infty))$.
    \item[(2)] The processes $\overline\kappa^0$, $\kappa^0$, $\eta^0$, $1/\eta^0$ and $\lambda^0$ belong to $L^{\infty}_{\mathbb F^0}([0,T]\times\Omega;[0,\infty))$.
    \item[(3)] For some positive constants $\theta'$, $\theta$ and $\overline\theta$,
    %\item[(3)]
    \[
        \etamin-\frac{\|\kappa\|}{2\theta'}>0,\quad \lambdamin-\|\kappa\|\theta'>0.
    \]
    and
    %\item[(4)]
     \[
        \quad\etamin^0-\frac{\|\kappa^0\|}{2\theta}>0,\quad \lambdamin^0-\frac{\|\kappa^0\|\theta}{2}-\frac{\|\overline\kappa^0\|\overline\theta}{2}>0,\quad \overline\lambdamin-\frac{\|\overline\kappa^0\|}{2\overline\theta}>0.
    \]
    \item[(4)] For any $0\leq s<t\leq T$,
    \[
        e^{-\int_s^t\frac{A_u}{2\eta_u}\,du}\leq C\left(\frac{T-t}{T-s}\right)
    \]
    and
    \[
        e^{-\int_s^t\frac{A^n_u}{2\eta_u}\,du}\leq C\left(\frac{T-t+\frac{1}{n}}{T-s+\frac{1}{n}}\right).
    \]
\end{itemize}
\end{ass}

The problem of the Stackelberg leader is to minimize the cost functional \eqref{leader-cost} over the set of admissible controls
\[
    \mathcal A_{\mathbb F^0}(x^0):=\left\{\xi^0\in L^2_{\mathbb F^0}([0,T]\times\Omega;\mathbb R):\int_0^T\xi^0_s\,ds=x^0\right\}.
\]

The follower's optimal response function is given by
\begin{equation}\label{followr-optimal-strategy}
    \overline\xi_t:=\overline\xi_t(\xi^0):=\frac{Y_t(\xi^0) - \mathbb E[\kappa_tX_t(\xi^0)|\mathcal F^0_t]}{2\eta_t},
\end{equation}
where $(X,Y)$ is the solution to \eqref{MKV-FBSDE} with $\widetilde g=\widetilde\kappa^0\xi^0$. We will occasionally drop the dependence on $\xi^0$ if there is no confusion.
Under Assumption \ref{assumption-MFG} the solution $(X,Y)$ enjoys better regularity properties, due to Remark \ref{increase-regularity} and the estimate \eqref{estimate-A}.
\begin{corollary}\label{existence-X-Y-B}
Under Assumption \ref{assumption-MFG}, the solution to \eqref{MKV-FBSDE} belongs to $\mathcal H_{1}\times S^2_{\mathbb F}$. Moreover, $Y=AX+B$ with $B\in \mathcal H_{\varsigma}$.
\end{corollary}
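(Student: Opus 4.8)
The plan is to specialize the general well-posedness result, Theorem~\ref{existence-appendix}, together with the regularity upgrade recorded in Remark~\ref{increase-regularity}, to the concrete FBSDE~\eqref{MKV-FBSDE}. First I would check that~\eqref{MKV-FBSDE} is of the form~\eqref{FBSDE-general} with $\Lambda^1_t = 1/(2\eta_t)$, $\Lambda^4_t = 2\lambda_t$, $\gamma_t = \kappa_t$, $\overline f \equiv 0$, and with the remaining terms $\Lambda^2, \Lambda^3, \Lambda^5, \overline g$ collecting the conditional-expectation feedback and the exogenous input $\widetilde g \in L^2_{\mathbb F}$; one must verify that Assumption~\ref{assumption-MFG} implies Assumption~\ref{ass-ex-sta}, which is precisely the point of the inequalities in part~(3) of Assumption~\ref{assumption-MFG} (they furnish the required $\theta_i$). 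This gives a unique solution $(X, B, Y) \in \mathcal H_\alpha \times S^{2,-}_{\mathbb F} \times L^2_{\mathbb F}$ with $Y = AX + B$ on $[0,T)$, for any $0 < \alpha < \beta$.

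The substance of the corollary is the improved exponents $\beta$ for $X$ and $\varsigma < \frac12 \wedge \beta$ for $B$. Since here $\overline f \equiv 0$, Remark~\ref{increase-regularity} applies directly and yields $X \in \mathcal H_\beta$ and $B \in \mathcal H_\varsigma$; concretely, one revisits the representation formulas~\eqref{remark-equation-1}–\eqref{remark-equation-2} for $H = B$ and $Q = X$ with $\overline f = b' = f' = f = 0$, uses the sharp bound~\eqref{estimate--exp-bar-C-appendix} with exponent exactly $\beta = \Lambda^1_\star/\|\Lambda^1\|$ in place of $\tau$, and invokes Doob's inequality to get $\mathbb E[\sup_t |B_t/(T-t)^\varsigma|^2] < \infty$ and then, feeding this back into the formula for $X$, $\mathbb E[\sup_t |X_t/(T-t)^\beta|^2] < \infty$. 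Here one should note that, under Assumption~\ref{assumption-MFG}(4), the estimate~\eqref{estimate--exp-bar-C-appendix} holds with the linear rate ($\beta$ effectively replaced by $1$ in the comparison with $(T-t_2)/(T-t_1)$), which is exactly what allows the exponent for $X$ to be pushed to $1$ and the exponent for $B$ to approach $\frac12$.

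Finally, $X \in \mathcal H_1$ implies $X_t \to 0$ as $t \nearrow T$ fast enough that the liquidation constraint $X_T = 0$ holds in the strong sense, and the decomposition $Y = AX + B$ with $B \in \mathcal H_\varsigma$ and $A$ satisfying the two-sided bound~\eqref{estimate-A} shows $Y \in S^2_{\mathbb F}$ (rather than merely $L^2_{\mathbb F}$): indeed $|Y_t| \le A_t|X_t| + |B_t| \le C(T-t)^{-2}\cdot C(T-t)^1 + C(T-t)^\varsigma$ is bounded on $[0,T)$, and together with $Y_T = 0$ this gives a bounded-path, hence $S^2$, solution; a slightly more careful argument using the sharper upper bound for $A$ together with the $\mathcal H_1$-regularity of $X$ gives the square-integrable supremum. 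The only genuinely delicate point is keeping track of the exponents so that the product $A_t X_t$ stays bounded near $T$ and the claim $B \in \mathcal H_\varsigma$ with $\varsigma < \frac12 \wedge \beta$ is not over-claimed; this is entirely analogous to the corresponding step in~\cite{FGHP-2018}, to which one can refer for the routine estimates.
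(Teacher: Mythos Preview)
Your approach is correct and matches the paper's own justification, which is simply the one-line remark that the corollary follows from Remark~\ref{increase-regularity} together with the estimate~\eqref{estimate-A}. You have correctly identified that $\overline f\equiv 0$ triggers Remark~\ref{increase-regularity}, that Assumption~\ref{assumption-MFG}(4) upgrades the exponent in~\eqref{estimate--exp-bar-C-appendix} from $\beta=\Lambda^1_\star/\|\Lambda^1\|$ to $1$ (yielding $X\in\mathcal H_1$), and that the deterministic bound $A_t\le C/(T-t)$ obtained from~\eqref{estimate-A} with bounded $\Lambda^1,\Lambda^4$ combines with $X\in\mathcal H_1$ and $B\in\mathcal H_\varsigma$ to give $Y\in S^2_{\mathbb F}$. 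One small correction: your first pass writes $A_t\le C(T-t)^{-2}$, which would not close the argument; as you then note, the sharper bound $A_t\le C/(T-t)$ (obtained by evaluating the conditional expectation in the upper bound of~\eqref{estimate-A}, which is $O(T-t)$ under the boundedness in Assumption~\ref{assumption-MFG}) is what is actually needed and used.
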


In the next section we first prove that the leader's problem has a unique solution if the terminal state constraints are replaced by finite penalty terms and establish a necessary maximum principle for the penalized problem.
Subsequently we prove the convergence of the state and adjoint equations of the penalized problems as the degree of penalization tends to infinity.

%%%%%%%%%%%%%%%%%%%%%%%%%%%%%%%%%%%
%%%%%%%%%%%%%%%%%%%%%%%%%%%%%%%%%%%
%%%%%%%%%%%%%%%%%%%%%%%%%%%%%%%%%%%

\subsection{The penalized problem: existence and maximum principle}

The penalized optimization problem is obtained by replacing the terminal state constraint on the leader's and follower's state process by a finite penalty term. The leader's problem consists in minimizing the cost functional
\begin{equation}\label{cost-p-n}
   J^{0,n}(\xi^0)
    :=\mathbb E\left[\int_0^T\overline\kappa^0_s\mathbb E[\overline\xi^n_s|\mathcal F^0_s]X^0_s+\kappa^0_s\xi^0_sX^0_s+\eta^0_s(\xi^0_s)^2+\lambda^0_s(X^0_s)^2+\overline\lambda_s(\mathbb E[\overline\xi^n_s|\mathcal F^0_s])^2\,ds+n(X^{0}_T)^2\right]
    \end{equation}
over all controls $\xi^0\in L^2_{\mathbb F^0}$ subject to the state dynamics
\begin{equation}\label{state-p-n}
    \left\{\begin{aligned}
    dX^0_t=&~-\xi^0_t\,dt,\\
    dX_t=&~-\frac{Y_t-\mathbb E[\kappa_tX_t|\mathcal F^0_t]}{2\eta_t}\,dt,\\
    -dY_t=&~\left(\kappa_t\mathbb E\left[\left.\frac{Y_t}{2\eta_t}\right|\mathcal F^0_t\right]-\kappa_t\mathbb E\left[\left.\frac{1}{2\eta_t}\right|\mathcal F^0_t\right]\mathbb E[\kappa_tX_t|\mathcal F_t]+2\lambda_tX_t+\widetilde\kappa^0_t\xi^0_t\right)\,dt-Z_t\,d W_t,\\
    X_0=&~x,~X^0_0=x^0,~Y_T=2nX_T,
    \end{aligned}\right.
\end{equation}
where the optimal response for the penalized follower $\overline\xi^n$ is defined as follows in terms of $(X,Y)$ in \eqref{state-p-n}
\[
    \overline\xi^n:=\frac{Y-\mathbb E[\kappa X|\mathcal F^0]}{2\eta}.
\]
We are now going to show that the penalized optimization problem has a unique solution. Similar arguments could be used to prove the existence of an optimal control for the original problem. They would not, however, give us an open-loop characterization of the optimal control.

\begin{theorem}\label{p-existence}
For each $n \in \mathbb N$, the penalized optimization problem \eqref{cost-p-n}-\eqref{state-p-n} admits a unique optimal control in $L^2_{\mathbb F^0}$.
\end{theorem}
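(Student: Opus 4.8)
The plan is to show that the penalized cost functional $J^{0,n}$, viewed as a functional on the Hilbert space $L^2_{\mathbb F^0}(x^0)$ (an affine subspace of $L^2_{\mathbb F^0}$), is strictly convex, coercive, and lower semicontinuous, so that a unique minimizer exists by the direct method. First I would make the dependence on $\xi^0$ explicit: by Corollary \ref{mu-convex-xi0} (applied to the FBSDE \eqref{MKV-FBSDE} with $\overline f=0$, $\overline g=\widetilde\kappa^0\xi^0$, but in the penalized version \eqref{FBSDE-appendix-n} so that $(X,Y)=(X(\xi^0),Y(\xi^0))$ is the solution of \eqref{state-p-n}) the map $\xi^0\mapsto(X,Y)$ is affine and continuous from $L^2_{\mathbb F^0}$ into $\mathcal H_{\alpha,n}\times S^2_{\mathbb F}$; consequently $\mu^{*,n}:=\mathbb E[\overline\xi^n|\mathcal F^0]$ and $X^0$ are affine continuous functions of $\xi^0$ as well. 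Writing $\xi^0=\xi^{0}_1+\xi^0_2$ etc., this reduces $J^{0,n}$ to a quadratic-plus-linear functional of $\xi^0$.

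The heart of the argument is strict convexity. For $\xi^0,\eta^0\in L^2_{\mathbb F^0}(x^0)$ with associated states $(X^0,X,Y,\mu^{*,n})$ and $(\widehat X^0,\widehat X,\widehat Y,\widehat\mu^{*,n})$, the second-order term of $J^{0,n}$ along the segment is, up to the penalty $n(\delta X^0_T)^2\ge 0$,
\[
  \mathbb E\!\left[\int_0^T\!\!\Big(\overline\kappa^0_t\,\delta\mu^{*,n}_t\,\delta X^0_t+\kappa^0_t\,\delta X^0_t\,\delta\xi^0_t+\eta^0_t(\delta\xi^0_t)^2+\lambda^0_t(\delta X^0_t)^2+\overline\lambda_t(\delta\mu^{*,n}_t)^2\Big)\,dt\right],
\]
where $\delta$ denotes the difference. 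Two cross terms must be controlled. The term $\kappa^0_t\,\delta X^0_t\,\delta\xi^0_t=\tfrac12\frac{d}{dt}\!\left(\kappa^0_t(\delta X^0_t)^2\right)-\tfrac12(\dot\kappa^0_t)(\delta X^0_t)^2$ is handled by integration by parts using $\delta X^0_0=\delta X^0_T=0$ on the affine subspace — actually, since $\kappa^0$ need not be differentiable, I would instead keep $\kappa^0_t\delta X^0_t\delta\xi^0_t$ and absorb it via Young's inequality $|\kappa^0_t\delta X^0_t\delta\xi^0_t|\le\frac{\|\kappa^0\|}{2\theta}(\delta\xi^0_t)^2+\frac{\|\kappa^0\|\theta}{2}(\delta X^0_t)^2$, exactly mirroring Lemma \ref{auxiliary-convexity}. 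The mixed term $\overline\kappa^0_t\,\delta\mu^{*,n}_t\,\delta X^0_t$ is treated by $|\overline\kappa^0_t\delta\mu^{*,n}_t\delta X^0_t|\le\frac{\|\overline\kappa^0\|}{2\overline\theta}(\delta\mu^{*,n}_t)^2+\frac{\|\overline\kappa^0\|\overline\theta}{2}(\delta X^0_t)^2$. Combining and invoking the three inequalities in Assumption \ref{assumption-MFG}(3) for the leader, namely $\etamin^0-\frac{\|\kappa^0\|}{2\theta}>0$, $\lambdamin^0-\frac{\|\kappa^0\|\theta}{2}-\frac{\|\overline\kappa^0\|\overline\theta}{2}>0$ and $\overline\lambdamin-\frac{\|\overline\kappa^0\|}{2\overline\theta}>0$, the quadratic form is bounded below by
\[
  c\,\mathbb E\!\left[\int_0^T\big((\delta\xi^0_t)^2+(\delta X^0_t)^2+(\delta\mu^{*,n}_t)^2\big)\,dt\right]+n\,\mathbb E[(\delta X^0_T)^2]
\]
for some $c>0$; in particular it vanishes only if $\delta\xi^0\equiv0$, which gives strict convexity, and the $(\delta\xi^0_t)^2$ term gives coercivity (the linear part of $J^{0,n}$ being continuous on $L^2_{\mathbb F^0}$). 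Weak lower semicontinuity follows since $J^{0,n}$ is continuous and convex on the closed affine set $\mathcal A_{\mathbb F^0}(x^0)$.

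Existence of a minimizer then follows from the direct method: a minimizing sequence is bounded by coercivity, hence has a weakly convergent subsequence with limit in $\mathcal A_{\mathbb F^0}(x^0)$ (which is weakly closed), and weak lower semicontinuity passes the infimum to the limit; uniqueness follows from strict convexity. I expect the main obstacle to be the rigorous verification that $\xi^0\mapsto(X,Y,X^0)$ is genuinely affine and $L^2$-continuous in the penalized setting — one must cite the penalized existence/uniqueness result Corollary \ref{ex-n-appendix} together with the linearity coming from uniqueness (as in Corollary \ref{mu-convex-xi0}) — and, relatedly, to confirm that $\mu^{*,n}=\mathbb E[\overline\xi^n|\mathcal F^0]$ depends continuously on $\xi^0$ in $L^2$, which in turn requires the a priori bound $\sup_n\|R^n\|_{L^2}+\sup_n\|Q^n\|_{\alpha,n}<\infty$ type estimates from Lemma \ref{boundedness-appendix} specialized to fixed $n$. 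Everything else is a routine convexity computation essentially identical to the one already carried out in Lemma \ref{auxiliary-convexity}.
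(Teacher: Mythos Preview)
Your proposal is essentially correct, and the strict-convexity part matches the paper's: the paper rewrites $J^{0,n}$ by completing the square, which is exactly your Young's-inequality estimate with the constants $\theta,\overline\theta$ from Assumption~\ref{assumption-MFG}(3), and both deduce uniqueness from strict convexity via Corollary~\ref{mu-convex-xi0}. The existence argument, however, is genuinely different. The paper does \emph{not} use weak compactness and weak lower semicontinuity. It takes a minimizing sequence $(\xi^{0,n,m})_m$, extracts the uniform $L^2$ bound from coercivity, and then applies the Koml\'os-type Lemma~\ref{auxiliary-stability} to obtain Ces\`aro convergence of a subsequence in $L^\nu$ for $1<\nu<2$; the stability argument of Theorem~\ref{stability-appendix} transfers this convergence to the states $(X^{n,m},Y^{n,m},X^{0,n,m})$, and Fatou together with convexity of $J^{0,n}$ gives $J^{0,n}(\xi^{0,n,*})\le J^*$. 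The payoff of the paper's route is that it reuses precisely the Ces\`aro machinery needed anyway for the subsequent approximation results (Lemma~\ref{app}, Theorem~\ref{nec-charac-optim-stra}). Your direct-method route is more classical and arguably cleaner for this isolated statement, but it requires strong $L^2$-continuity of $\xi^0\mapsto(X,Y)$ in the penalized system; Corollary~\ref{mu-convex-xi0} gives only affinity, so you must extract continuity from the a~priori estimates behind Corollary~\ref{ex-n-appendix}, as you correctly flag.

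One correctable slip: the penalized problem \eqref{cost-p-n}--\eqref{state-p-n} is posed over \emph{all} of $L^2_{\mathbb F^0}$, not over the constrained set $\mathcal A_{\mathbb F^0}(x^0)$; the terminal condition has been replaced by the penalty $n(X^0_T)^2$. Your references to ``the affine subspace $L^2_{\mathbb F^0}(x^0)$'', to ``$\delta X^0_T=0$ on the affine subspace'', and to ``the closed affine set $\mathcal A_{\mathbb F^0}(x^0)$'' are therefore misplaced here. Since you abandoned the integration-by-parts idea and your Young's-inequality bound together with the direct method work verbatim on the full Hilbert space $L^2_{\mathbb F^0}$, this does not affect the substance; just fix the domain.
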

\begin{proof}
In view of Corollary \ref{ex-n-appendix} the systems \eqref{state-p-n} is well-posed for each fixed $\xi^0\in L^2_{\mathbb F^0}$. The representation of the cost functional
\begin{equation*}
    \begin{split}
    & ~J^{0,n}(\xi^0)\\
    =&~\mathbb E\left[\int_0^T\frac{\overline\kappa^0_t}{2}\left(\sqrt{\overline\theta} X^0_t+\frac{\mathbb E\left[\left.\overline\xi_t^n\right|\mathcal F^0_t\right]}{\sqrt{\overline\theta}}\right)^2+\frac{\kappa^0_t}{2}\left(\sqrt\theta X^0_t+\frac{\xi^0_t}{\sqrt\theta}\right)^2+\left(\lambda^0_t-\frac{\overline\kappa^0_t\overline\theta}{2}-\frac{\kappa^0_t\theta}{2}\right)(X^0_t)^2\right.\\
    &~\left.+\left(\eta^0_t-\frac{\kappa^0_t}{2\theta}\right)(\xi^0_t)^2+\left(\overline\lambda_t-\frac{\overline\kappa^0_t}{2\overline\theta}\right)\left(\mathbb E\left[\left.\overline\xi_t^n\right|\mathcal F^0_t\right]\right)^2\,dt+n(X^0_T)^2\right]
    \end{split}
\end{equation*}
along with Corollary \ref{mu-convex-xi0} and Assumption \ref{assumption-MFG} shows that $J^{0,n}$ is strictly convex. Uniqueness of the optimal strategy follows.

Let $J^* = \inf_{\xi^0 \in L^2_{\mathbb F^0}} J^{0,n}(\xi^0)$. Then $J^* < \infty$ because $J^{0,n}(x^0/T)$ is bounded. Let $\{\xi^{0,n,m}\}\subseteq L^2_{\mathbb F^0}$ be a sequence such that
\[
    \lim_{m\rightarrow\infty}J^{0,n}(\xi^{0,n,m})=J^*.
\]
By Assumption \ref{assumption-MFG} this implies,
\begin{equation}\label{penalized-uniform-bound}
    \sup_m E\left[\int_0^T(\xi^{0,n,m}_s)^2\,ds\right]<C.
\end{equation}
Thus, Lemma \ref{auxiliary-stability} implies the existence of some $\xi^{0,n,*}\in L^2_{\mathbb F^0}$ such that
\begin{equation}\label{penalized-L-nu-convergence-xi0}
    \lim_{\overline N\rightarrow\infty}\mathbb E\left[\int_0^T\left|\overline\xi^{0,n,N}_t-\xi^{0,n,*}_t\right|^\nu\,dt\right]=0,~1<\nu<2,
\end{equation}
where
\[
    \overline\xi^{0,n,N}=\frac{1}{N}\sum_{k=1}^{N}\xi^{0,n,m_k}.
\]
Let $(X^{0,n,*},X^{n,*},Y^{n,*})$ be the solution to \eqref{state-p-n} associated with $\xi^{0,n,*}$. Then the same argument as in the proof of Theorem \ref{stability-appendix} implies,
\[
    \lim_{N\rightarrow\infty}\mathbb E\left[\int_0^T\left|\frac{1}{\overline N}\sum_{j=1}^{\overline N}\frac{1}{N_j}\sum_{k=1}^{N_j}(X^{n,m_k}_t,Y^{n,m_k}_t)-(X^{n,*}_t,Y^{n,*}_t)\right|^\nu\,dt\right]=0,~1<\nu<2.
\]
Moreover, \eqref{penalized-L-nu-convergence-xi0} yields,
\[
    \lim_{\overline N\rightarrow\infty}\mathbb E\left[\sup_{0\leq t\leq T}\left|\frac{1}{\overline N}\sum_{j=1}^{\overline N}\frac{1}{N_j}\sum_{k=1}^{N_j}X^{0,n,m_k}_t-X^{0,n,*}_t\right|^\nu\,dt\right]= 0.
\]

 Thus, Fatou's lemma and the convexity of $J^{0,n}$ imply that
\begin{equation*}
    \begin{split}
       J^{0,n}(\xi^{0,n,*})\leq \liminf_{\overline N\rightarrow\infty}J^{0,n}\left(\frac{1}{\overline N}\sum_{j=1}^{\overline N}\overline\xi^{0,n,N_j}\right)\leq\liminf_{\overline N\rightarrow\infty}\frac{1}{\overline N}\sum_{j=1}^{\overline N}\frac{1}{N_j}\sum_{k=1}^{N_j}J^{0,n}\left(\xi^{0,n,m_k}\right) =J^*.
    \end{split}
\end{equation*}
\end{proof}

From now on, we denote by $\xi^{0,n,*}$ the unique optimal control for the penalized optimization \eqref{cost-p-n}-\eqref{state-p-n}. The following theorem provides a characterization of $\xi^{0,n,*}$.

\begin{theorem}[Necessary maximum principle]\label{p-max-principle}
The optimal control $\xi^{0,n,*}$ admits the following representation:
\begin{equation}\label{characterization-xi0-n}
    \xi^{0,n,*}_t=\frac{p^n_t+\mathbb E[\widetilde\kappa^0_tq^n_t|\mathcal F^0_t]-\kappa^0_tX^{0,n,*}_t}{2\eta^0_t},\quad \textrm{a.e. a.s. on }[0,T]\times\Omega,
\end{equation}
where $X^{0,n,*}$, $p^n$ and $q^n$ satisfy the following FBSDE system:
\begin{equation}\label{p-optimal-system}
    \left\{\begin{aligned}
        dX^{0,n,*}_t=&~-\xi^{0,n,*}_t\,dt,\\
        dX^{n,*}_t=&~-\xi^{n,*}_t\,dt,\\
        -dY^{n,*}_t=&~\left(\kappa_t\mathbb E\left[\xi^{n,*}_t|\mathcal F^0_t\right]+2\lambda_tX^{n,*}_t+\widetilde\kappa^0_t\xi^{0,n,*}_t\right)\,dt-Z_t\,d W_t,\\
        -dp^n_t=&~\left(\overline\kappa^0_t\mathbb E\left[\xi^{n,*}_t|\mathcal F^0_t\right]+\kappa^0_t\xi^{0,n,*}_t+2\lambda^0_tX^{0,n,*}_t\right)\,dt-Z_t\,dW^0_t,\\
        -dq^n_t=&~\left(-\frac{r^n_t}{2\eta_t}-\mathbb E\left[\kappa_tq^n_t|\mathcal F^0_t\right]\frac{1}{2\eta_t}+\overline f^n_t\right)\,dt,\\
        -dr^n_t=&~\left(-2\lambda_tq^n_t+\kappa_t\mathbb E\left[\left.\frac{r_t}{2\eta_t}\right|\mathcal F^0_t\right]+\kappa_t\mathbb E\left[\left.\frac{1}{2\eta_t}\right|\mathcal F^0_t\right]\mathbb E[\kappa_tq^n_t|\mathcal F^0_t]+\overline g^n_t\right)\,dt-Z_t\,d W_t,\\
        X^0_0=&~x^0,~X_0=x,~Y^{n,*}_T=2nX^{n,*}_T,~p^n_T=2nX^{0,n,*}_T,~r^n_T=-2nq^n_T,~q^n_0=0,
    \end{aligned}\right.
\end{equation}
where
    \begin{equation}\label{xi-n-star}
        \xi^{n,*}_t:=\frac{Y^{n,*}_t-\mathbb E[\kappa_tX^{n,*}_t|\mathcal F^0_t]}{2\eta_t},
    \end{equation}
\begin{equation}\label{input-f-bar-n}
    \overline f^n_t:=\frac{\overline\kappa^0_tX^{0,n,*}_t}{2\eta_t}+\frac{\overline\lambda_t}{\eta_t}\mathbb E\left[\xi^{n,*}_t|\mathcal F^0_t\right],
\end{equation}
and
\begin{equation}\label{input-g-bar-n}
    \overline g^n_t:=-\kappa_t\mathbb E\left[\left.\frac{1}{2\eta_t}\right|\mathcal F^0_t\right]\overline\kappa^0_tX^{0,n,*}_t-2\overline\lambda_t\kappa_t\mathbb E\left[\left.\frac{1}{2\eta_t}\right|\mathcal F^0_t\right]\mathbb E\left[\xi^{n,*}_t|\mathcal F^0_t\right].
\end{equation}

\end{theorem}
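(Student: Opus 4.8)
The plan is to derive the maximum principle for the penalized Stackelberg leader by a standard convex-perturbation (Gateaux-derivative) argument, using the strict convexity of $J^{0,n}$ established in Theorem \ref{p-existence} together with the linearity and convexity of the follower's response map from Corollary \ref{mu-convex-xi0}. Fix $n$, write $\xi^{0,*}:=\xi^{0,n,*}$ for the optimal leader control, and for an arbitrary admissible direction $\delta\xi^0\in L^2_{\mathbb F^0}$ and $\rho\in(0,1]$ consider the perturbed control $\xi^0_\rho:=\xi^{0,*}+\rho\,\delta\xi^0$. By Corollary \ref{mu-convex-xi0} the follower's state-adjoint triple $(X(\xi^0_\rho),Y(\xi^0_\rho),\ldots)$ depends affinely on $\xi^0_\rho$, so the variations $\delta X^0,\delta X,\delta Y$ of $(X^0,X,Y)$ and the induced variation $\delta\overline\xi^n=\frac{\delta Y-\mathbb E[\kappa\,\delta X|\mathcal F^0]}{2\eta}$ are obtained by differentiating \eqref{state-p-n} at $\rho=0$; these satisfy a linear FBSDE with zero initial/terminal data for $\delta X^0,\delta X$ and terminal condition $\delta Y_T=2n\,\delta X_T$. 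The first step is to compute $\frac{d}{d\rho}\big|_{\rho=0}J^{0,n}(\xi^0_\rho)$; because $J^{0,n}$ is convex and $\xi^{0,*}$ is the minimizer this derivative is zero (and, by strict convexity, characterizes $\xi^{0,*}$).

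The second and central step is to rewrite the first-order condition in adjoint form. I introduce $p^n$ as the adjoint of $X^0$ (solving the backward equation in \eqref{p-optimal-system} with $p^n_T=2nX^{0,n,*}_T$, which matches the $n(X^0_T)^2$ penalty), and $(q^n,r^n)$ as the adjoint pair of the \emph{forward-backward} follower block $(Y,X)$: since $Y$ is backward and $X$ is forward, duality forces $q^n$ (dual to $Y$) to be forward with $q^n_0=0$ and $r^n$ (dual to $X$) to be backward with the penalty-type terminal condition $r^n_T=-2nq^n_T$. Applying integration by parts to $p^n\delta X^0$, $q^n\delta Y$ and $r^n\delta X$ on $[0,T]$ (the penalized equations are on the closed interval, so no $\widetilde T\nearrow T$ truncation is needed here) and summing, all the ``interior'' terms involving $\delta X^0,\delta X,\delta Y,\delta\overline\xi^n$ cancel against the corresponding terms in $\frac{d}{d\rho}J^{0,n}$ precisely when $(q^n,r^n)$ solve the stated FBSDE with the inputs $\overline f^n,\overline g^n$ from \eqref{input-f-bar-n}--\eqref{input-g-bar-n} — here one must carefully expand the conditional-expectation terms coming from $\delta\overline\xi^n$ and from the $\mathbb E[\kappa X|\mathcal F^0]$ terms in \eqref{state-p-n}, using the tower property and $\mathbb F^0$-measurability of the coefficients $\overline\kappa^0,\kappa^0,\eta^0,\lambda^0,\overline\lambda$. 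What survives is $\mathbb E\int_0^T\big(p^n_t+\mathbb E[\widetilde\kappa^0_tq^n_t|\mathcal F^0_t]-\kappa^0_tX^{0,n,*}_t-2\eta^0_t\xi^{0,n,*}_t\big)\,\delta\xi^0_t\,dt=0$ for every $\delta\xi^0\in L^2_{\mathbb F^0}$, which yields \eqref{characterization-xi0-n} by arbitrariness of the direction (after projecting the bracket onto $\mathbb F^0$, which is automatic since $p^n,X^{0,n,*},\eta^0,\kappa^0$ are $\mathbb F^0$-adapted and $\mathbb E[\widetilde\kappa^0q^n|\mathcal F^0]$ is $\mathbb F^0$-adapted by construction).

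It remains to check that the adjoint system \eqref{p-optimal-system} is well-posed: the $(q^n,r^n)$ block is exactly the penalized system \eqref{FBSDE-appendix-n} with the identifications $\Lambda^1=\frac{1}{2\eta}$, $\Lambda^4=2\lambda$, $\Lambda^3=\kappa$, $\gamma=\kappa$, $\zeta=\frac{1}{2\eta}$, $\varrho=0$, $\Lambda^2=\Lambda^5=0$ (one checks Assumption \ref{ass-ex-sta} holds under Assumption \ref{assumption-MFG}(3) with the constants $\theta'$), so Corollary \ref{ex-n-appendix} gives a unique solution $(q^n,r^n)$ in the appropriate weighted space; the $p^n$ equation is a linear BSDE with bounded coefficients and $L^2$ terminal data, hence uniquely solvable in $S^2_{\mathbb F^0}$; and the forward equations together with the already-solved follower block \eqref{state-p-n} close the system. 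I expect the main obstacle to be the bookkeeping in the integration-by-parts step: one must simultaneously dualize a \emph{forward-backward} subsystem (the follower's $(X,Y)$) inside a larger control problem, track all the nested conditional expectations $\mathbb E[\cdot|\mathcal F^0]$ and cross terms such as $\mathbb E[\kappa X|\mathcal F^0]$ and $\mathbb E[\tfrac{1}{2\eta}|\mathcal F^0]\mathbb E[\kappa X|\mathcal F^0]$, and verify that the residual terms assemble exactly into the somewhat intricate expressions \eqref{input-f-bar-n}--\eqref{input-g-bar-n} for $\overline f^n$ and $\overline g^n$ and into the $\widetilde\kappa^0$-coupling between $p^n$ and $q^n$. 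Once the identifications of $\overline f^n,\overline g^n$ are pinned down, the rest is routine.
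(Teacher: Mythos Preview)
Your approach is correct and in fact spells out what the paper outsources: the paper's proof of this theorem is essentially three lines---it notes existence of $\xi^{0,n,*}$ from Theorem~\ref{p-existence}, checks that the sub-systems $(X^{n,*},Y^{n,*})$ and $(q^n,r^n)$ are well-posed via Corollary~\ref{ex-n-appendix}, and then simply cites the stochastic maximum principle for control of FBSDEs with partial information (\O ksendal--Sulem). Your convex-perturbation/Gateaux-derivative computation, followed by duality via integration by parts against $(p^n,q^n,r^n)$, is precisely the argument that underlies that reference; you are doing it by hand rather than by citation. The gain is self-containedness: you actually see where the inputs $\overline f^n,\overline g^n$ come from and why the $\widetilde\kappa^0$-coupling appears between $p^n$ and $q^n$.

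One concrete correction in your well-posedness check. Your identification of the $(q^n,r^n)$ block with the abstract system \eqref{FBSDE-appendix-n} is not right: setting $\Lambda^2=\Lambda^5=0$ and $\varrho=0$ drops the term $-\tfrac{1}{2\eta_t}\mathbb E[\kappa_t q^n_t|\mathcal F^0_t]$ in the $q^n$-equation and the term $\kappa_t\mathbb E[\tfrac{1}{2\eta_t}|\mathcal F^0_t]\mathbb E[\kappa_t q^n_t|\mathcal F^0_t]$ in the $r^n$-equation. The correct identification (as the paper uses later, in the proof of Lemma~\ref{app} and in Corollary~\ref{existence-q-r-D}) is to set $Q^n=-q^n$, $R^n=r^n$, and
\[
\Lambda^1=\Lambda^2=\zeta=\tfrac{1}{2\eta},\quad \gamma=\Lambda^3=\varrho=\kappa,\quad \Lambda^4=2\lambda,\quad \Lambda^5=\kappa\,\mathbb E\!\left[\tfrac{1}{2\eta}\middle|\mathcal F^0\right],\quad \chi=0.
\]
With these choices Assumption~\ref{ass-ex-sta} is covered by Assumption~\ref{assumption-MFG}(3), and Corollary~\ref{ex-n-appendix} applies. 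This is purely a bookkeeping slip and does not affect your overall strategy.
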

\begin{proof}
A unique optimal control $\xi^{0,n,*}$ exists, due to Theorem \ref{p-existence}. It is to be viewed as an exogenous input to the FBSDE system \eqref{p-optimal-system}. Thus, the system $(X^{n,*},Y^{n,*})$ is a special case of \eqref{FBSDE-appendix-n} by taking \eqref{xi-n-star} into account. Corollary \ref{ex-n-appendix} implies that the system is well-posed. Considering $\overline f^n$ and $\overline g^n$ as inputs, the system $(q^n,r^n)$ is well-posed, again due to Corollary \ref{ex-n-appendix}. The characterization \eqref{characterization-xi0-n} is then a direct result of stochastic maximum principle for control of FBSDE with partial information; cf~\cite{OS-2009}.
\end{proof}

The ansatz $p^n=\overline A^nX^{0,n,*}+\overline p^n$ shows that the equation for $p^n$ could be dropped from the above system. It yields the following BSDEs for the processes $\overline A^n$ and $\overline p^n$ that will be used in the next subsection:
\begin{equation}\label{A-n-overline}
    \left\{\begin{aligned}
    -d\overline A^n_t=&~\left(-\frac{(\overline A^n_t)^2}{2\eta^0_t}+\frac{\kappa^0_t\overline A^n_t}{2\eta^0_t}+2\lambda^0_t\right)\,dt-Z_t^{\overline A^n}\,d W^0_t,\\
    \overline A^n_T=&~2n
    \end{aligned}\right.
\end{equation}
and
\begin{equation}\label{pn-bar}
    \left\{\begin{aligned}
    -d\overline p^n_t=&~\left(-\frac{\overline A^n_t\overline p^n_t}{2\eta^0_t}-\frac{\overline A^n_t\mathbb E[\widetilde\kappa^0_tq^n_t|\mathcal F^0_t]}{2\eta^0_t}+\kappa^0_t\xi^{0,n,*}_t+\overline\kappa^0_tE\left[\xi^{n,*}_t|\mathcal F^0_t\right]\right)\,dt-Z^{\overline p^n}_t\,dW^0_t,\\
    \overline p^n_T=&~0.
    \end{aligned}\right.
\end{equation}

%%%%%%%%%%%%%%%%%%%%%%%%%%%%%%%%%%%
%%%%%%%%%%%%%%%%%%%%%%%%%%%%%%%%%%%
%%%%%%%%%%%%%%%%%%%%%%%%%%%%%%%%%%%

\subsection{The optimal solution to the Stackelberg game}

Let us recall that $\xi^{0,n,*}$ denotes the leader's optimal control for penalized optimization with index $n \in \mathbb N$. The  uniform boundedness of  $J^{0,n}(x^0/T)$ in $n \in \mathbb N$ implies,
 \begin{equation}\label{uniform-boundedness-xi0n}
    \sup_n\mathbb E\left[\int_0^T\left|\xi^{0,n,*}_t\right|^2\,dt+n(X^{0,n,*}_T)^2\right]<\infty.
 \end{equation}
  Thus, the same arguments as in the proof of Lemma  \ref{auxiliary-stability} yield the existence of a progressively measurable process
\begin{equation}\label{xi0-L2}
    \xi^{0,*}\in L^2_{\mathbb F^0}(\Omega\times[0,T];\mathbb R)
\end{equation}
such that
\begin{equation}\label{app-xi0-2}
    \lim_{N\rightarrow\infty}\mathbb E\left[\int_0^T\left|\frac{1}{N}\sum_{k=1}^N\xi^{0,n_k,*}_t-\xi^{0,*}_t\right|^\nu\,dt\right]=0,~1<\nu<2.
\end{equation}

Our goal is to prove that $\xi^{0,*}$ is the leader's unique optimal strategy in the original state-constrained Stackelberg game. To this end, we first establish a representation of $\xi^{0,*}$ in terms of the solution to the system \eqref{state-leader}, \eqref{leader-adjoint-p} and \eqref{leader-adjoint-q-r} by proving that the solutions to the system of state and adjoint equations \eqref{p-optimal-system} for the unconstrained penalized MFC problem Cesaro converge to the solutions to the systems \eqref{MKV-FBSDE}, \eqref{state-leader}, \eqref{leader-adjoint-p} and \eqref{leader-adjoint-q-r}. From this, we then deduce a sufficient maximum principle for the leader's MFC problem from which we conclude the optimality of the candidate strategy $\xi^{0,*}$.

%%%%%%%%%%%%%%%%%%%%%%%%%%%%%%%%%%%
%%%%%%%%%%%%%%%%%%%%%%%%%%%%%%%%%%%
%%%%%%%%%%%%%%%%%%%%%%%%%%%%%%%%%%%

\subsubsection{Approximation}

With the limit $\xi^{0,*}$ at hand, we can consider the FBSDE system \eqref{MKV-FBSDE}, \eqref{state-leader}, \eqref{leader-adjoint-p} and \eqref{leader-adjoint-q-r} with $\xi^0$ replaced by $\xi^{0,*}$. The system \eqref{MKV-FBSDE} for $(X^{*},Y^*)$ is well-posed, due to Corollary \ref{existence-X-Y-B}. The system for $(q,r)$ is well-posed, due to the following corollary.

\begin{corollary}\label{existence-q-r-D}
If we take $\chi=x^0$, $\Lambda^1=\Lambda^2=\zeta=1/2\eta$, $\gamma=\Lambda^3=\varrho=\kappa$, $\Lambda^4=2\lambda$, $\Lambda^5=\kappa\mathbb E\left[\left.\frac{1}{2\eta}\right|\mathcal F^0\right]$, $Q=-q$,
\begin{equation}\label{input-f-bar}
    \overline f=\frac{\kappa^0X^{0,*}}{2\eta}+\frac{\overline\lambda}{\eta}\mathbb E\left[\xi^*|\mathcal F^0\right]
\end{equation}
and
\begin{equation}\label{input-g-bar}
    \overline g=-\kappa\mathbb E\left[\left.\frac{1}{2\eta}\right|\mathcal F^0\right]\overline\kappa^0X^{0,*}-2\overline\lambda\kappa\mathbb E\left[\left.\frac{1}{2\eta}\right|\mathcal F^0\right]\mathbb E\left[\xi^*|\mathcal F^0\right],
\end{equation}
where
\begin{equation}\label{optimal-strategy-follower-2}
    \xi^*:=\frac{Y^{*}}{2\eta}-\frac{1}{2\eta}\mathbb E[\kappa_tX^{*}|\mathcal F^0].
\end{equation}
Then the system \eqref{FBSDE-general} reduces \eqref{leader-adjoint-q-r}. Hence, existence and uniqueness of a solution holds for \eqref{leader-adjoint-q-r}. Moreover, $r=-Aq+D$ with $D\in S^{2,-}_{\mathbb F}$.
\end{corollary}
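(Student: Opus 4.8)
The plan is to deduce the statement directly from Theorem \ref{existence-appendix}. Under the stated change of variables, \eqref{FBSDE-general} --- equivalently its reformulation \eqref{FBSDE-appendix} through the ansatz $R=AQ+H$, with $A$ now solving $-dA_t=\left(2\lambda_t-A_t^2/(2\eta_t)\right)dt-Z_t\,dW_t$, $\lim_{t\nearrow T}A_t=\infty$ (the same process $A$ that appears in the representation $Y=AX+B$ of Corollary \ref{existence-X-Y-B}) --- is an instance of the FBSDE covered by Theorem \ref{existence-appendix}. It therefore remains to check that the reduction is correct and that the specialized coefficients fulfil Assumption \ref{ass-ex-sta}, and then to read the conclusion back through the change of variables; note that only Assumption \ref{ass-ex-sta}, not the additional hypothesis \eqref{additional-ass-sta}, enters Theorem \ref{existence-appendix}.

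The reduction is a direct substitution: inserting $\Lambda^1=\Lambda^2=\zeta=1/(2\eta)$, $\gamma=\Lambda^3=\varrho=\kappa$, $\Lambda^4=2\lambda$, $\Lambda^5=\kappa\,\mathbb E[1/(2\eta)|\mathcal F^0]$, the initial datum $\chi$, the identification $Q=-q$, and the expressions \eqref{input-f-bar}, \eqref{input-g-bar} for $(\overline f,\overline g)$ into \eqref{FBSDE-general}, one matches the drivers and the boundary data term by term against \eqref{leader-adjoint-q-r}; I would verify this by inspection. Assumption \ref{ass-ex-sta}(i) is immediate from Assumption \ref{assumption-MFG}(1), since $\kappa$, $\eta$, $1/\eta$, $\lambda\in L^\infty_{\mathbb F}$ and hence $\mathbb E[1/(2\eta)|\mathcal F^0]\in L^\infty_{\mathbb F}$ as well. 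For Assumption \ref{ass-ex-sta}(ii) I would take $\theta_1=\theta_2=\theta'$, with $\theta'$ the constant of Assumption \ref{assumption-MFG}(3); using $|\Lambda^2|^2=1/(4\eta^2)$, $\|\gamma\|=\|\Lambda^3\|=\|\varrho\|=\|\kappa\|$ and $\|\Lambda^5\|\leq\|\kappa\|/(2\etamin)$, the first inequality reduces to $\left(\tfrac{1}{2\eta}-\tfrac{\|\kappa\|}{4\eta^2\theta'}\right)_\star>0$, which holds because $\etamin-\|\kappa\|/(2\theta')>0$, and the second reduces to $\left(2\lambda-\|\kappa\|\theta'-\|\kappa\|^2/(2\etamin)\right)_\star>0$; the two inequalities of Assumption \ref{assumption-MFG}(3) yield $\|\kappa\|\theta'<\lambdamin$ and $\|\kappa\|^2/(2\etamin)<\lambdamin$, so their sum is strictly below $2\lambdamin$ and the second inequality holds. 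This is precisely the ``Assumption \ref{assumption-MFG} implies Assumption \ref{ass-ex-sta}'' step announced at the start of Section \ref{section-MFG}.

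For Assumption \ref{ass-ex-sta}(iii): $\chi\in L^2_{\mathbb F}$ is clear, and $(\overline f,\overline g)\in S^2_{\mathbb F}\times L^2_{\mathbb F}$ holds because $X^{0,*}\in S^2_{\mathbb F}$ (as $\xi^{0,*}\in L^2_{\mathbb F^0}$ by \eqref{xi0-L2} and $X^{0,*}_t=x^0-\int_0^t\xi^{0,*}_s\,ds$), because $(X^*,Y^*)\in\mathcal H_1\times S^2_{\mathbb F}$ by Corollary \ref{existence-X-Y-B} so that $\xi^*$ of \eqref{optimal-strategy-follower-2} lies in $S^2_{\mathbb F}$ (using $\mathcal H_1\subseteq S^2_{\mathbb F}$ and the boundedness of $1/\eta$ and $\kappa$), and because $\mathbb E[\,\cdot\,|\mathcal F^0]$ maps $S^2$ into $S^2$ by Doob's maximal inequality; hence $\overline f,\overline g\in S^2_{\mathbb F}\subseteq L^2_{\mathbb F}$. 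With these verifications Theorem \ref{existence-appendix} furnishes a unique $(Q,H,R)\in\mathcal H_\alpha\times S^{2,-}_{\mathbb F}\times L^2_{\mathbb F}$, which under the change of variables is the unique solution $(q,r)$ of \eqref{leader-adjoint-q-r}, and the identity $R=AQ+H$ on $[0,T)$ becomes $r=-Aq+D$ with $D:=H\in S^{2,-}_{\mathbb F}$. The only ingredients that are not pure bookkeeping are the verification of the structural smallness condition Assumption \ref{ass-ex-sta}(ii) for this specific coefficient choice --- which is exactly why Assumption \ref{assumption-MFG}(3) is imposed --- and the $S^2_{\mathbb F}$-membership of $\overline f$, which rests on the improved regularity $(X^*,Y^*)\in\mathcal H_1\times S^2_{\mathbb F}$ provided by Corollary \ref{existence-X-Y-B}.
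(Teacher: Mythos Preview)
Your proposal is correct and is precisely the argument the paper intends: the corollary carries no separate proof in the paper, being stated as an immediate consequence of Theorem \ref{existence-appendix} together with the blanket claim (at the beginning of Section \ref{section-MFG}) that Assumption \ref{assumption-MFG} implies Assumption \ref{ass-ex-sta}. You have simply supplied the omitted verifications --- the coefficient-by-coefficient matching, the check of Assumption \ref{ass-ex-sta}(ii) via $\theta_1=\theta_2=\theta'$ (where your chain $\|\kappa\|/(2\etamin)<\theta'$ combined with $\|\kappa\|\theta'<\lambdamin$ yields $\|\kappa\|^2/(2\etamin)<\lambdamin$), and the $S^2_{\mathbb F}$-membership of $\overline f$ through Corollary \ref{existence-X-Y-B} --- and then read off $r=-Aq+D$ from $R=AQ+H$; this is exactly the intended route.
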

We now introduce two BSDEs that we expect to be the limits to the equations \eqref{A-n-overline} and \eqref{pn-bar}:
\begin{equation}\label{limit-A-bar}
    \left\{\begin{aligned}
    -d\overline A_t=&~\left(-\frac{\overline A^2_t}{2\eta^0_t}+\frac{\kappa^0_t\overline A_t}{2\eta^0_t}+2\lambda^0_t\right)\,dt-Z_t\,dW^0_t\\
    \lim_{t\nearrow T}\overline A_t=&~\infty,
    \end{aligned}\right.
\end{equation}
and
\begin{equation}\label{limit-p-bar}
    \left\{\begin{aligned}
    -d\overline p_t=&~\left(-\frac{\overline A_t\overline p_t}{2\eta^0_t}-\frac{\overline A_t\mathbb E[\widetilde\kappa^0_tq_t|\mathcal F^0_t]}{2\eta^0_t}+\kappa^0_t\xi^{0,*}_t+\overline\kappa^0_tE\left[\xi^{*}_t|\mathcal F^0_t\right]\right)\,dt-Z^{\overline p}_t\,d W^0_t,\\
    \overline p_T=&~0.
    \end{aligned}\right.
\end{equation}

where $\xi^*$ and $\xi^{0,*}$ are defined in \eqref{optimal-strategy-follower-2} and \eqref{xi0-L2}, respectively. The following lemma confirms our guess. It shows that the solutions to the FBSDE system \eqref{p-optimal-system} converge to the solutions to the FBSDE systems \eqref{MKV-FBSDE}, \eqref{state-leader}, \eqref{leader-adjoint-q-r} and \eqref{limit-p-bar} in the same sense as the optimal solutions to the unconstrained penalized problems converge to the candidate solution of the constrained problem.

\begin{lemma}\label{app}
For $1<\nu<2$, it holds that
\begin{equation}\label{app-X0}
    %\begin{split}
    \lim_{N\rightarrow\infty}\mathbb E\left[\sup_{0\leq t\leq T}\left|\frac{1}{N}\sum_{k=1}^NX^{0,n_k,*}_t-X^{0,*}_t\right|^\nu\right]=0,
\end{equation}
\begin{equation}\label{app-X}
    \lim_{\overline N\rightarrow\infty}\mathbb E\left[\int_0^T\left|\frac{1}{\overline N}\sum_{j=1}^{\overline N}\frac{1}{N_j}\sum_{k=1}^{N_j}X^{n_k,*}_t-X^*_t\right|^\nu\,dt\right]=0,
\end{equation}
\begin{equation}\label{app-Y}
    \lim_{\overline N\rightarrow\infty}\mathbb E\left[\int_0^T\left|\frac{1}{\overline N}\sum_{j=1}^{\overline N}\frac{1}{N_j}\sum_{k=1}^{N_j}Y^{n_k,*}_t-Y^*_t\right|^\nu\,dt\right]=0,
\end{equation}
%\begin{equation}\label{app-D}
%    \lim_{\widetilde N\rightarrow\infty}\mathbb E\left[\int_0^T\left|\frac{1}{\widetilde N}\sum_{i=1}^{\widetilde N}\frac{1}{\overline N_i}\sum_{j=1}^{\overline N_i}\frac{1}{N_j}\sum_{k=1}^{N_j}D^{n_k,*}_t-D_t\right|\,dt\right]=0,
%\end{equation}
    %\end{split}
\begin{equation}\label{app-q}
    %\begin{split}
    \lim_{\widetilde N\rightarrow\infty}\mathbb E\left[\sup_{0\leq t\leq T}\left|\frac{1}{\widetilde N}\sum_{i=1}^{\widetilde N}\frac{1}{\overline N_i}\sum_{j=1}^{\overline N_i}\frac{1}{N_j}\sum_{k=1}^{N_j}q^{n_k}_t-q_t\right|^\nu\right]=0,
\end{equation}
\begin{equation}\label{app-r}
    \lim_{\widetilde N\rightarrow\infty}\mathbb E\left[\int_0^T\left|\frac{1}{\widetilde N}\sum_{i=1}^{\widetilde N}\frac{1}{\overline N_i}\sum_{j=1}^{\overline N_i}\frac{1}{N_j}\sum_{k=1}^{N_j}r^{n_k}_t-r_t\right|^\nu\,dt\right]=0,
    %\end{split}
\end{equation}
\begin{equation}\label{app-p-bar}
    \lim_{\widetilde N\rightarrow\infty}\mathbb E\left[\int_0^T\left|\frac{1}{\widetilde N}\sum_{i=1}^{\widetilde N}\frac{1}{\overline N_i}\sum_{j=1}^{\overline N_i}\frac{1}{N_j}\sum_{k=1}^{N_j}\overline p^{n_k}_t-\overline p_t\right|^\nu\,dt\right]=0.
    %\end{split}
\end{equation}
%where $(X^{0,*},X^*,Y^*,q,D,\overline p)$ are defined by \eqref{limit-X0}-\eqref{limit-p-bar}, respectively.
\end{lemma}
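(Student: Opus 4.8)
The plan is to descend through the coupled system \eqref{p-optimal-system} in the order in which it was assembled, $X^{0,n,*}\to(X^{n,*},Y^{n,*})\to(q^n,r^n)\to\overline p^n$, using that each block is either an instance of the penalized FBSDE \eqref{FBSDE-appendix-n} — to which the convergence Theorem \ref{stability-appendix} applies once its driving inputs are controlled — or a single linear BSDE that can be solved explicitly. The nested Cesaro averages in \eqref{app-X0}--\eqref{app-p-bar} simply record that one further layer of averaging is introduced each time we pass down one level of the hierarchy; since a Cesaro mean of an already convergent Cesaro mean converges to the same limit, the inner averages are always inherited from the preceding level.

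\emph{State level.} Estimate \eqref{app-X0} requires no new subsequence: integrating $dX^{0,n,*}_t=-\xi^{0,n,*}_t\,dt$ and $dX^{0,*}_t=-\xi^{0,*}_t\,dt$ gives $\sup_{0\le t\le T}\big|\frac1N\sum_{k=1}^N X^{0,n_k,*}_t-X^{0,*}_t\big|\le\int_0^T\big|\frac1N\sum_{k=1}^N\xi^{0,n_k,*}_s-\xi^{0,*}_s\big|\,ds$, and \eqref{app-X0} follows from \eqref{app-xi0-2} by H\"older's inequality. For \eqref{app-X}--\eqref{app-Y}, by \eqref{xi-n-star} the pair $(X^{n,*},Y^{n,*})$ is exactly the penalized system \eqref{FBSDE-appendix-n} attached to \eqref{MKV-FBSDE} with inputs $\overline f^n\equiv0$, $\overline g^n=\widetilde\kappa^0\xi^{0,n,*}$, while $(X^*,Y^*)$ solves the constrained system \eqref{MKV-FBSDE} with $\widetilde g=\widetilde\kappa^0\xi^{0,*}$ (well posed by Corollary \ref{existence-X-Y-B}). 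By \eqref{uniform-boundedness-xi0n} and boundedness of $\widetilde\kappa^0$, $\overline g^n$ is $L^2$-bounded uniformly in $n$; by \eqref{app-xi0-2} its Cesaro means along $(n_k)$ converge in $L^\nu$ to $\widetilde\kappa^0\xi^{0,*}\in L^2_{\mathbb F}$, which is therefore the limiting input in the sense of Theorem \ref{stability-appendix}; the limit $\overline f\equiv0$ lies in $S^2_{\mathbb F}$; and Assumption \ref{assumption-MFG}(4) is precisely the exponential estimate \eqref{additional-ass-sta} for the coefficient $1/(2\eta)$ of this system. Theorem \ref{stability-appendix} then delivers \eqref{app-X}--\eqref{app-Y} (its $Q$-component gives even $S^\nu$-convergence, dominating the $L^\nu$-convergence claimed for $X$). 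Combining this with \eqref{optimal-strategy-follower-2}, \eqref{xi-n-star} and boundedness of $\kappa,1/\eta$ shows that the same double Cesaro means of $\xi^{n,*}$ converge to $\xi^*$ in $L^\nu$.

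\emph{Adjoint level.} For \eqref{app-q}--\eqref{app-r}, Corollary \ref{existence-q-r-D} identifies the block $(q^n,r^n)$ of \eqref{p-optimal-system} with the penalized system \eqref{FBSDE-appendix-n} attached to \eqref{leader-adjoint-q-r}, driven by $(\overline f^n,\overline g^n)$ from \eqref{input-f-bar-n}--\eqref{input-g-bar-n}, the limit being \eqref{leader-adjoint-q-r} driven by $(\overline f,\overline g)$ from \eqref{input-f-bar}--\eqref{input-g-bar} (uniquely solvable by the same corollary, and with coefficient $1/(2\eta)$, so Assumption \ref{assumption-MFG}(4) again supplies \eqref{additional-ass-sta}). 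The processes $\overline f^n,\overline g^n$ are affine, with $n$-independent bounded coefficients and conditional-expectation operators, in $X^{0,n,*}$ and $\xi^{n,*}$; hence, using \eqref{uniform-boundedness-xi0n}, Lemma \ref{boundedness-appendix} at the preceding level (which bounds $\|Y^{n,*}\|_{L^2}$ and $\|X^{n,*}\|_{L^2}$ uniformly in $n$) and boundedness of $\kappa,1/\eta$, the sequence $(\overline f^n,\overline g^n)$ is $L^2$-bounded uniformly in $n$, and by \eqref{app-X0} together with the $\xi^{n,*}$-convergence just obtained its double Cesaro means along $(n_k)$ converge in $L^\nu$ to $(\overline f,\overline g)$; one also checks $\overline f\in S^2_{\mathbb F}$ from $X^{0,*}\in S^2_{\mathbb F}$ and $\xi^*\in S^2_{\mathbb F}$ (the latter by Corollary \ref{existence-X-Y-B}). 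Re-running the proof of Theorem \ref{stability-appendix} — which introduces exactly one further outer Cesaro average — then yields \eqref{app-q}--\eqref{app-r}.

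\emph{Leader's reduced adjoint, and the main obstacle.} The identity \eqref{app-p-bar} concerns the linear BSDE \eqref{pn-bar} for $\overline p^n$, whose coefficient $\overline A^n/(2\eta^0)$ is singular at $T$ because $\overline A^n\to\overline A$ with $\overline A$ blowing up at $T$ (equations \eqref{A-n-overline}, \eqref{limit-A-bar}). I would represent $\overline p^n$ through its integrating factor, solving it out of the drift,
\[
\overline p^n_t=\mathbb E\Big[\int_t^T e^{-\int_t^s \overline A^n_u/(2\eta^0_u)\,du}\Big(-\frac{\overline A^n_s\,\mathbb E[\widetilde\kappa^0_s q^n_s|\mathcal F^0_s]}{2\eta^0_s}+\kappa^0_s\xi^{0,n,*}_s+\overline\kappa^0_s\,\mathbb E[\xi^{n,*}_s|\mathcal F^0_s]\Big)\,ds\,\Big|\,\mathcal F^0_t\Big],
\]
so that only the inhomogeneity survives in the integrand, with $\overline p$ admitting the analogous representation from \eqref{limit-p-bar}. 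Since the appropriate triple Cesaro means of $q^n$, $\xi^{0,n,*}$ and $\xi^{n,*}$ converge (by \eqref{app-q}, \eqref{app-X0} and the paragraphs above), \eqref{app-p-bar} follows by reproducing the estimate in the proof of Theorem \ref{stability-appendix} leading to \eqref{convergence-H-L1}, now with $\Lambda^1 A$ replaced by $\overline A^n/(2\eta^0)$; no further subsequence is needed since the $\overline p^n$-equation is linear with given data. The main obstacle — exactly as in Theorem \ref{stability-appendix} itself — is to pass the integrating factors $e^{-\int_t^s\overline A^n_u/(2\eta^0_u)\,du}$ and the products $\overline A^n_s\,\mathbb E[\widetilde\kappa^0_s q^n_s|\mathcal F^0_s]$ to the limit uniformly up to the singularity at $t=s=T$: this requires the a.s.\ convergence $\overline A^n_t\to\overline A_t$ on $[0,T)$ (the analogue of \eqref{convergence-A-n-appendix}) together with exponential estimates for $\overline A^n$ and $\overline A$ of the type in Assumption \ref{assumption-MFG}(4), both available from the arguments of \cite{AJK-2014} for this class of Riccati BSDEs, which together make the dominated convergence theorem applicable at the terminal time.
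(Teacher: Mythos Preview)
Your proposal is correct and follows essentially the same route as the paper: establish \eqref{app-X0} directly from \eqref{app-xi0-2}, then apply Theorem \ref{stability-appendix} to the $(X^{n,*},Y^{n,*})$ block (with $\overline f^n\equiv 0$, $\overline g^n=\widetilde\kappa^0\xi^{0,n,*}$) and again to the $(q^n,r^n)$ block (with $(\overline f^n,\overline g^n)$ given by \eqref{input-f-bar-n}--\eqref{input-g-bar-n}), each application introducing one more layer of Cesaro averaging; finally deduce \eqref{app-p-bar} from the preceding convergences. Your verification of $\overline f\in S^2_{\mathbb F}$ via Corollary \ref{existence-X-Y-B} and of the exponential estimate via Assumption \ref{assumption-MFG}(4) matches the paper's checks.

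Where you differ is only in explicitness: for \eqref{app-p-bar} the paper merely writes ``By \eqref{app-xi0-2}, \eqref{app-X}, \eqref{app-Y} and \eqref{app-q} we also have \eqref{app-p-bar}'', whereas you write out the integrating-factor representation for $\overline p^n$ and identify the genuine technical point --- the need for a.s.\ convergence $\overline A^n\to\overline A$ on $[0,T)$ and exponential estimates for $\overline A^n,\overline A$ analogous to Assumption \ref{assumption-MFG}(4), so that the argument leading to \eqref{convergence-H-L1} can be reproduced. This is the right elaboration of what the paper leaves implicit; note only that mimicking \eqref{convergence-H-L1} gives $L^1$-convergence, which you then upgrade to $L^\nu$ by uniform $L^2$-boundedness and Vitali (or interpolation), a step worth stating since \eqref{app-p-bar} is claimed in $L^\nu$.
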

\begin{proof}
The convergence \eqref{app-X0} follows immediately from the convergence \eqref{app-xi0-2} and the definition of $X^{0,*}$. Taking $\chi=x$, $\zeta=\Lambda^1=-\Lambda^2=1/2\eta$, $\gamma=\Lambda^3=\varrho=\kappa$, $\Lambda^4=2\lambda$, $\Lambda^5=-\kappa\mathbb E\left[\left.\frac{1}{2\eta}\right|\mathcal F^0\right]$, $\overline f^n=0$ and $\overline g^n=\widetilde\kappa^0\xi^{0,n,*}$ in \eqref{FBSDE-appendix-n} the convergence \eqref{app-X} and \eqref{app-Y} follows from Theorem \ref{stability-appendix}, due to the uniform boundedness of $\overline g^n$ in $L^2$.

In  \eqref{FBSDE-appendix-n}, let $\chi=x^0$, $\Lambda^1=\Lambda^2=\zeta=1/2\eta$, $\gamma=\Lambda^3=\varrho=\kappa$, $\Lambda^4=2\lambda$, $\Lambda^5=\kappa\mathbb E\left[\left.\frac{1}{2\eta}\right|\mathcal F^0\right]$, $Q^n=-q^n$ and $(\overline f^n,\overline g^n)$ as in \eqref{input-f-bar-n} and \eqref{input-g-bar-n}. It follows from \eqref{app-X0}-\eqref{app-Y} that
\begin{equation}
    \lim_{N\rightarrow\infty}\mathbb E\left[\int_0^T\left|\frac{1}{\overline N}\sum_{j=1}^{\overline N}\frac{1}{N_j}\sum_{k=1}^{N_j}(\overline f^{n_k}_t,\overline g^{n_k})-(\overline f_t,\overline g_t)\right|^\nu\,dt\right]=0,
\end{equation}
 where $\overline f$ and $\overline g$ are defined as in \eqref{input-f-bar} and \eqref{input-g-bar}, respectively. By Corollary \ref{existence-X-Y-B} and the estimate \eqref{estimate-A}, we have $\overline f\in S^2_{\mathbb F}$ and $\overline g\in L^2_{\mathbb F}$. So \eqref{app-q} and \eqref{app-r} follow again from Theorem \ref{stability-appendix}. By \eqref{app-xi0-2}, \eqref{app-X}, \eqref{app-Y} and \eqref{app-q} we  also have \eqref{app-p-bar}.
\end{proof}

The preceding approximation lemma yields a representation on the candidate optimal strategy in terms of the candidate optimal state and adjoint processes akin to the maximum principle for the penalized problem.

\begin{theorem}\label{nec-charac-optim-stra}
The limit $\xi^{0,*}$ in \eqref{app-xi0-2} admits the following representation:
\begin{equation}\label{nec-charac-eq}
    \xi^{0,*}_t=\frac{p_t+\mathbb E[\widetilde\kappa^0_tq_t|\mathcal F^0_t]-\kappa^0_tX^{0,*}_t}{2\eta^0_t},\quad\textrm{a.e. a.s. on }[0,T]\times\Omega,
\end{equation}
where $p:=\overline AX^{0,*}+\overline p$.
%and $(X^{0,*},\overline p,q)$ are the same as those in Lemma \ref{app}.
Moreover, $\xi^{0,*}\in\mathcal A_{\mathbb F}(x^0)$ and $p$ satisfies the dynamic \eqref{leader-adjoint-p}.
\end{theorem}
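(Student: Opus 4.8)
\emph{Plan.} I would deduce the representation \eqref{nec-charac-eq} by passing to the limit in the necessary maximum principle \eqref{characterization-xi0-n} for the penalized problems along the finest of the (iterated) Ces\`aro means appearing in Lemma \ref{app} (convergence along a coarser mean is inherited by any finer one). Substituting the ansatz $p^n=\overline A^nX^{0,n,*}+\overline p^n$ into \eqref{characterization-xi0-n} gives
\[
  2\eta^0_t\,\xi^{0,n,*}_t=\overline A^n_tX^{0,n,*}_t+\overline p^n_t+\mathbb E[\widetilde\kappa^0_tq^n_t|\mathcal F^0_t]-\kappa^0_tX^{0,n,*}_t .
\]
On the left-hand side \eqref{app-xi0-2} yields convergence (in $L^\nu$) to $2\eta^0_t\xi^{0,*}_t$; on the right-hand side \eqref{app-X0}, \eqref{app-q} and \eqref{app-p-bar} handle $\kappa^0X^{0,n,*}$, $\mathbb E[\widetilde\kappa^0q^n|\mathcal F^0]$ and $\overline p^n$. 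Since $\eta^0$ is bounded away from $0$ and above, and all these convergences hold in $L^\nu$ on each strip $[0,\widetilde T]\times\Omega$ with $\widetilde T<T$, once the remaining term $\overline A^nX^{0,n,*}$ is dealt with (next paragraph) one matches the two sides on $[0,\widetilde T]$ and lets $\widetilde T\nearrow T$ to obtain \eqref{nec-charac-eq} a.e.\ a.s.\ on $[0,T]\times\Omega$.

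To pass to the limit in $\overline A^nX^{0,n,*}$, I would first record --- exactly as for the pair $(A^n,A)$ in \eqref{convergence-A-n-appendix}--\eqref{estimate-A}, the generators of \eqref{A-n-overline} and \eqref{limit-A-bar} being of the same linear--quadratic type and Assumption \ref{assumption-MFG}(3)--(4) being the analogue of Assumption \ref{ass-ex-sta} --- that \eqref{limit-A-bar} is well posed, $\overline A^n\ge 0$, $\overline A^n_t\le C/(T-t+\frac{1}{n})$ and $\overline A_t\le C/(T-t)$ uniformly, and $\overline A^n_t\to\overline A_t$ a.s.\ for every $t\in[0,T)$. Fix $\widetilde T<T$. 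On $[0,\widetilde T]$ the $\overline A^n$ are uniformly bounded, while $\sup_{t\le T}|X^{0,n,*}_t|$ is bounded in $L^2$ uniformly in $n$ by \eqref{uniform-boundedness-xi0n}. Splitting $\overline A^{n}X^{0,n,*}-\overline AX^{0,*}=(\overline A^{n}-\overline A)X^{0,n,*}+\overline A\,(X^{0,n,*}-X^{0,*})$, the first summand already tends to $0$ in $L^\nu([0,\widetilde T]\times\Omega)$ by H\"older's inequality with conjugate exponents $\frac{2}{2-\nu}$ and $\frac{2}{\nu}$, boundedness of $\overline A^n,\overline A$ on $[0,\widetilde T]$ together with dominated convergence (for the first factor), and the uniform $L^2$-bound (for the second factor), while the Ces\`aro means of the second summand tend to $0$ in $L^\nu([0,\widetilde T]\times\Omega)$ by \eqref{app-X0} and the boundedness of $\overline A$ on $[0,\widetilde T]$. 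This is the step I expect to be the main obstacle, as it is the only point where the singular process $\overline A$ and the fact that the penalized positions $X^{0,n,*}$ are not yet liquidated genuinely interact; the rest is bookkeeping with the convergences of Lemma \ref{app}.

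Next I would verify that $\xi^{0,*}$ is admissible, i.e.\ $\xi^{0,*}\in\mathcal A_{\mathbb F^0}(x^0)$. From \eqref{uniform-boundedness-xi0n}, $\mathbb E[(X^{0,n,*}_T)^2]\le C/n\to0$, so $X^{0,n,*}_T\to0$ in $L^2$ and the Ces\`aro means of $X^{0,n,*}_T$ converge to $0$ in $L^\nu$; on the other hand \eqref{app-X0} forces the same means to converge to $X^{0,*}_T$ in $L^\nu$, whence $X^{0,*}_T=0$ a.s. Since $X^{0,*}_t=x^0-\int_0^t\xi^{0,*}_s\,ds$ and $\xi^{0,*}\in L^2_{\mathbb F^0}$ by \eqref{xi0-L2}, it follows that $\int_0^T\xi^{0,*}_s\,ds=x^0$, i.e.\ $\xi^{0,*}\in\mathcal A_{\mathbb F^0}(x^0)$.

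Finally, that $p:=\overline AX^{0,*}+\overline p$ solves \eqref{leader-adjoint-p} on $[0,T)$ is a direct computation. Applying integration by parts to the product $\overline AX^{0,*}$ using \eqref{limit-A-bar}, $dX^{0,*}_t=-\xi^{0,*}_t\,dt$ and \eqref{limit-p-bar}, and then substituting the identity \eqref{nec-charac-eq} for the term $\overline A_t\xi^{0,*}_t$, one sees that the quadratic and the linear-in-$\overline A$ contributions cancel, as do the pairs $\pm\overline A_t\overline p_t/2\eta^0_t$ and $\pm\overline A_t\mathbb E[\widetilde\kappa^0_tq_t|\mathcal F^0_t]/2\eta^0_t$, leaving
\[
 -dp_t=\bigl(2\lambda^0_tX^{0,*}_t+\kappa^0_t\xi^{0,*}_t+\overline\kappa^0_t\mathbb E[\xi^*_t|\mathcal F^0_t]\bigr)\,dt- Z_t\,dW^0_t,
\]
where $Z_t$ collects the two $\mathbb F^0$-martingale integrands coming from \eqref{limit-A-bar} and \eqref{limit-p-bar}. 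Rewriting $\overline\kappa^0_t\mathbb E[\xi^*_t|\mathcal F^0_t]=\overline\kappa^0_t\mathbb E[Y^*_t/2\eta_t|\mathcal F^0_t]-\overline\kappa^0_t\mathbb E[1/2\eta_t|\mathcal F^0_t]\,\mathbb E[\kappa_tX^*_t|\mathcal F^0_t]$ by the tower property together with \eqref{optimal-strategy-follower-2} identifies this dynamics with \eqref{leader-adjoint-p}.
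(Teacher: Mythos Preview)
Your proposal is correct and follows essentially the same route as the paper: pass to the limit in the penalized first-order condition \eqref{characterization-xi0-n} via Lemma \ref{app}, verify $X^{0,*}_T=0$ from \eqref{uniform-boundedness-xi0n} and \eqref{app-X0}, and recover \eqref{leader-adjoint-p} by integration by parts using \eqref{nec-charac-eq}. The paper dispatches the first step in one line (``follows immediately from Theorem \ref{p-max-principle} and Lemma \ref{app}''), whereas you correctly isolate the only nontrivial point --- that Lemma \ref{app} gives convergence of $\overline p^n$ and $X^{0,n,*}$ but not of $p^n$ itself --- and fill the gap by handling $\overline A^nX^{0,n,*}\to\overline AX^{0,*}$ on $[0,\widetilde T]$ via the splitting and H\"older/dominated-convergence argument; this is exactly what the paper leaves implicit.
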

\begin{proof}
The characterization \eqref{nec-charac-eq} follows immediately from Theorem \ref{p-max-principle} and Lemma \ref{app}. It remains to verify the admissibility of $\xi^{0,*}$. The fact that $\xi^{0,*}$ belongs to $L^2_{\mathbb F^0}$ is due to \eqref{xi0-L2}.
 By the uniform boundedness \eqref{uniform-boundedness-xi0n},
\[
    \lim_{n\rightarrow\infty}\mathbb E[(X^{0,n,*}_T)^2]=0.
\]
By \eqref{app-X0},
\begin{equation*}
   \lim_{\widetilde N\rightarrow\infty} \mathbb E\left[\left|\frac{1}{\widetilde N}\sum_{i=1}^{\widetilde N}\frac{1}{\overline N_i}\sum_{j=1}^{\overline N_i}\frac{1}{N_j}\sum_{k=1}^{N_j}X^{0,n_k,*}_T-X^{0,*}_T\right|^\nu\right]= 0.
\end{equation*}
Thus,
\begin{equation*}
    \begin{split}
    &~\mathbb E[|X^{0,*}_T|^\nu]\\
    \leq&~ 2\mathbb E\left[\left|\frac{1}{\widetilde N}\sum_{i=1}^{\widetilde N}\frac{1}{\overline N_i}\sum_{j=1}^{\overline N_i}\frac{1}{N_j}\sum_{k=1}^{N_j}X^{0,n_k,*}_T-X^{0,*}_T\right|^\nu\right]+2\frac{1}{\widetilde N}\sum_{i=1}^{\widetilde N}\frac{1}{\overline N_i}\sum_{j=1}^{\overline N_i}\frac{1}{N_j}\sum_{k=1}^{N_j}\mathbb E\left[|X^{0,n_k,*}_T|^\nu\right]\rightarrow 0,
    \end{split}
\end{equation*}
which implies $X^{0,*}_T=0$ a.s.. Finally, starting from $p:=\overline A X^{0,*}+\overline p$ by integration by parts and taking into account the characterization \eqref{nec-charac-eq}, we know $p$ satisfies \eqref{leader-adjoint-p}.
\end{proof}

%%%%%%%%%%%%%%%%%%%%%%%%%%%%%%%%%%%
%%%%%%%%%%%%%%%%%%%%%%%%%%%%%%%%%%%
%%%%%%%%%%%%%%%%%%%%%%%%%%%%%%%%%%%

\subsubsection{Sufficient maximum principle}

In this section, a sufficient maximum principle is established, from which we obtain the optimality of $\xi^{0,*}$ for the leader's MFC problem. The next theorem verifies that $\xi^{0,*}$ is indeed the unique optimal strategy for the leader.
\begin{theorem}[Sufficient maximum principle]\label{suff-max-prin}
Under the Assumption \ref{assumption-MFG}, $\xi^{0,*}$ given by Theorem \ref{nec-charac-optim-stra} is the unique optimal strategy to the leader's optimization.
\end{theorem}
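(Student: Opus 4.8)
The plan is to establish a \emph{sufficient} maximum principle: for every admissible $\xi^0\in\mathcal A_{\mathbb F^0}(x^0)$ we show $J^0(\xi^0)\geq J^0(\xi^{0,*})$, with equality only when $\xi^0=\xi^{0,*}$. The three ingredients are (i) strict convexity of $J^0$ in $\xi^0$, (ii) the characterization \eqref{nec-charac-eq} of $\xi^{0,*}$ through the adjoint equations \eqref{leader-adjoint-p}--\eqref{leader-adjoint-q-r}, and (iii) the vanishing of the terminal boundary terms produced by integration by parts on $[0,T-\epsilon]$ as $\epsilon\downarrow 0$.

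First I would record that $J^0$ is strictly convex in $\xi^0$. Completing the squares in \eqref{leader-cost} with the constants $\theta,\overline\theta$ of Assumption \ref{assumption-MFG}, exactly as in the representation of $J^{0,n}$ in the proof of Theorem \ref{p-existence} but without the penalty $n(X^0_T)^2$, rewrites $J^0(\xi^0)$ as the expectation of a nonnegative combination of squares plus the term $(\eta^0_t-\kappa^0_t/(2\theta))(\xi^0_t)^2$ whose coefficient is bounded away from $0$. Since Corollary \ref{mu-convex-xi0} makes $\xi^0\mapsto(X,Y)$ affine, hence $\xi^0\mapsto\mu^*=\mathbb E[\xi^*|\mathcal F^0]$ affine, and $\xi^0\mapsto X^0$ is affine, $J^0$ is a convex functional of $\xi^0$ on $\mathcal A_{\mathbb F^0}(x^0)$, strictly so because of the $(\xi^0)^2$--term.

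Next, fix an admissible $\xi^0$ with associated state $(X^0,X,Y)$ and set $\delta X^0:=X^0-X^{0,*}$, $\delta X:=X-X^*$, $\delta Y:=Y-Y^*$, $\delta\xi^0:=\xi^0-\xi^{0,*}$; then $d\delta X^0_t=-\delta\xi^0_t\,dt$ with $\delta X^0_0=0$, and $(\delta X,\delta Y)$ solves the system \eqref{MKV-FBSDE} with $\widetilde g$ replaced by $\widetilde\kappa^0\delta\xi^0$ and null initial condition. Convexity of each summand in Step~1 gives $J^0(\xi^0)-J^0(\xi^{0,*})\geq DJ^0(\xi^{0,*})[\delta\xi^0]$, the directional derivative being the expectation of a linear functional of $\delta X^0$, $\delta\mu^*:=\mathbb E[\delta\xi|\mathcal F^0]$ and $\delta\xi^0$ (with $\delta\xi=(\delta Y-\mathbb E[\kappa\delta X|\mathcal F^0])/(2\eta)$), and with equality iff $\delta\xi^0=0$ a.e.\ a.s.\ by the strict convexity in $\xi^0$. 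Performing integration by parts for $p\,\delta X^0$, $r\,\delta X$ and $q\,\delta Y$ on $[0,T-\epsilon]$ and substituting the adjoint dynamics \eqref{leader-adjoint-p}--\eqref{leader-adjoint-q-r} --- whose $\Lambda$--structure is designed precisely so that the interior state-variation terms cancel --- rewrites $DJ^0(\xi^{0,*})[\delta\xi^0]$ as
\[
\mathbb E\Big[\int_0^{T-\epsilon}\big(2\eta^0_t\xi^{0,*}_t-p_t-\mathbb E[\widetilde\kappa^0_tq_t|\mathcal F^0_t]+\kappa^0_tX^{0,*}_t\big)\delta\xi^0_t\,dt\Big]+\mathbb E\big[p_{T-\epsilon}\delta X^0_{T-\epsilon}+r_{T-\epsilon}\delta X_{T-\epsilon}+q_{T-\epsilon}\delta Y_{T-\epsilon}\big].
\]
By \eqref{nec-charac-eq} the integrand coefficient vanishes identically, so only the three boundary terms remain.

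The main obstacle is showing these boundary terms vanish as $\epsilon\downarrow 0$; this is where the singular terminal behaviour of the adjoint processes enters. For $p=\overline AX^{0,*}+\overline p$ I would use $X^0_T=X^{0,*}_T=0$ (admissibility, Theorem \ref{nec-charac-optim-stra}), the blow-up rate $\overline A_{T-\epsilon}=O(1/\epsilon)$ from the analogue of \eqref{estimate-A} valid under Assumption \ref{assumption-MFG}(4), the $\mathcal H_1$--type decay $X^{0,*}_{T-\epsilon}=O(\epsilon)$ inherited from $e^{-\int_s^t\overline A_u/(2\eta^0_u)\,du}\leq C(T-t)/(T-s)$, and $\delta X^0_{T-\epsilon}\to0$; for $r=-Aq+D$ with $D\in S^{2,-}_{\mathbb F}$ (Corollary \ref{existence-q-r-D}) I would use $X_T=X^*_T=0$, $\delta X\in\mathcal H_1$ (Corollary \ref{existence-X-Y-B}), $q\in\mathcal H_\alpha$ and \eqref{estimate-A} to kill the singular part $Aq\,\delta X$, and $\delta X_{T-\epsilon}\to0$ in $L^2$ to kill the $D$--part; for $q\,\delta Y$ I would use $q_T=0$ with $q\in\mathcal H_\alpha$, so $q_{T-\epsilon}\to0$ in $L^2$, together with $\delta Y\in S^2_{\mathbb F}$. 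These estimates are of the same type as in \cite[Theorem 2.9]{FGHP-2018}. Having shown all boundary terms vanish, we obtain $J^0(\xi^0)\geq J^0(\xi^{0,*})$ for every admissible $\xi^0$, proving optimality; and if some other admissible $\xi'$ were optimal, equality would hold throughout Step~3, which by the strict convexity in $\xi^0$ forces $\xi'=\xi^{0,*}$ a.e.\ a.s., proving uniqueness.
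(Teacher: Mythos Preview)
Your proposal is correct and follows essentially the same three-step scheme as the paper: strict convexity of $J^0$ via Corollary \ref{mu-convex-xi0}, integration by parts for $p\,\delta X^0$, $r\,\delta X$, $q\,\delta Y$ on $[0,T-\epsilon]$ combined with the characterization \eqref{nec-charac-eq}, and vanishing of the boundary terms as $\epsilon\downarrow 0$.

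The only noteworthy difference is in the handling of the two boundary terms $\mathbb E[r_{T-\epsilon}\delta X_{T-\epsilon}]$ and $\mathbb E[q_{T-\epsilon}\delta Y_{T-\epsilon}]$. You treat them separately: for the first you split $r=-Aq+D$ and control $Aq\,\delta X$ by the pointwise bound $A_{T-\epsilon}\leq C/\epsilon$ from \eqref{estimate-A} together with $q\in\mathcal H_\alpha$ and $\delta X\in\mathcal H_1$; for the second you use $q_{T-\epsilon}\to 0$ in $L^2$ against $\delta Y\in S^2_{\mathbb F}$. The paper instead adds the two terms and uses both decompositions $r=-Aq+D$ and $\delta Y=A\,\delta X+\delta B$ simultaneously, so that the singular contributions $\pm A\,q\,\delta X$ cancel \emph{algebraically}, leaving only $\delta X\cdot D+\delta B\cdot q$, each of which is a product of two processes vanishing at $T$. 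Your rate-estimate argument is perfectly valid under Assumption \ref{assumption-MFG}(4) and Corollary \ref{existence-X-Y-B}; the paper's cancellation trick is a bit more robust (it does not rely on the precise rate $\mathcal H_1$ for $\delta X$, only on $\delta X\to 0$ and $D\in S^{2,-}_{\mathbb F}$) and avoids the need to bound the singular factor $A$ explicitly.
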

\begin{proof}
We denote by $(X^{0,*},X^*,Y^*)$ the states corresponding to $\xi^{0,*}$ and by $(X^0,X,Y)$ the states corresponding to a generic strategy $\xi^0\in L^2_{\mathbb F^0}$. The verification is split into three steps.

\textbf{Step 1.}
By Corollary \ref{mu-convex-xi0}, $X$ and $Y$ are convex in $\xi^0$ in the sense that
 \[
    (X(\rho\xi^0+(1-\rho)\xi^{0'}),Y(\rho\xi^0+(1-\rho)\xi^{0'}))=\rho(X(\xi^0),Y(\xi^{0}))+(1-\rho)(X(\xi^{0'}),Y(\xi^{0'})).
 \]
 Thus, $J^0$ is strictly convex in $\xi^0$. As a result, there is at most one optimal strategy.% Moreover, the Hamiltonian defined by \eqref{Hamiltonian-leader} is convex in states and control, i.e., $H(t,p_t,q_t,r_t,\cdot,\cdot,\cdot,\cdot)$ is convex.

\textbf{Step 2.} Integration by part for $(X^0-X^{0,*})p$, $(X-X^*)r$ and $(Y-Y^*)q$ on $[0,\widetilde T]$ for $0\leq \widetilde T<T$ yields,
\begin{align*}\label{int-by-part}
    %\begin{split}
        &~\mathbb E\left[(X^0_{\widetilde T}-X^{0,*}_{\widetilde T})p_{\widetilde T}\right]+\mathbb E\left[(X_{\widetilde T}-X^*_{\widetilde T})r_{\widetilde T}\right]+\mathbb E\left[(Y_{\widetilde T}-Y^*_{\widetilde T})q_{\widetilde T}\right]\\
%        =&~-\mathbb E\left[\int_0^{\widetilde T}(X^0_t-X^{0,*}_t)\left(\overline\kappa^0_t\mathbb E\left[\xi^*_t|\mathcal F^0_t\right]+\kappa^0_t\xi^{0,*}_t+2\lambda^0_tX^{0,*}_t\right)\,dt\right]\\
%        &~-\mathbb E\left[\int_0^{T-\epsilon}\kappa_t(X_t-X^*_t)\left(-\overline\kappa^0_t\mathbb E\left[\left.\frac{1}{2\eta_t}\right|\mathcal F^0_t\right]X^{0,*}_t-2\overline\lambda_t\mathbb E\left[\left.\frac{1}{2\eta_t}\right|\mathcal F^0_t\right]\mathbb E[\xi^*_t|\mathcal F^0_t]\right)\,dt\right]\\
%        &~-\mathbb E\left[\int_0^{\widetilde T}\frac{Y_t-Y^*_t}{2\eta_t}\left(\overline\kappa^0_tX^{0,*}_t+2\overline\lambda_t\mathbb E\left[\xi^*_t|\mathcal F^0_t\right]\right)\,dt\right]\\
%        &~-\mathbb E\left[\int_0^{\widetilde T}(p_t+\widetilde\kappa^0_tq_t)(\xi^0_t-\xi^{0,*}_t)\,dt\right]\\
         =&~-\mathbb E\left[\int_0^{\widetilde T}(X^0_t-X^{0,*}_t)\left(\overline\kappa^0_t\mathbb E\left[\xi^*_t|\mathcal F^0_t\right]+\kappa^0_t\xi^{0,*}_t+2\lambda^0_tX^{0,*}_t\right)\,dt\right]\\
         &~-\mathbb E\left[\int_0^{T-\epsilon}\mathbb E[\kappa_t(X_t-X^*_t)|\mathcal F^0_t]\left(-\overline\kappa^0_t\mathbb E\left[\left.\frac{1}{2\eta_t}\right|\mathcal F^0_t\right]X^{0,*}_t-2\overline\lambda_t\mathbb E\left[\left.\frac{1}{2\eta_t}\right|\mathcal F^0_t\right]\mathbb E[\xi^*_t|\mathcal F^0_t]\right)\,dt\right]\\
        &~-\mathbb E\left[\int_0^{\widetilde T}\mathbb E\left[\left.\frac{Y_t-Y^*_t}{2\eta_t}\right|\mathcal F^0_t\right]\left(\overline\kappa^0_tX^{0,*}_t+2\overline\lambda_t\mathbb E\left[\xi^*_t|\mathcal F^0_t\right]\right)\,dt\right]\\
        &~-\mathbb E\left[\int_0^{\widetilde T}(p_t+\mathbb E[\widetilde\kappa^0_tq_t|\mathcal F^0_t])(\xi^0_t-\xi^{0,*}_t)\,dt\right],
    %\end{split}
\end{align*}
where we recall $\xi^*$ is defined in \eqref{optimal-strategy-follower-2}.

\textbf{Step 3.} In order to prove the optimality of the strategy \eqref{nec-charac-eq} we define, for any $\widetilde T<T$ the cost functional
\begin{equation*}
    \begin{split}
    &~\widetilde J^0(\xi^0):=\mathbb E\left[\int_0^{\widetilde T}\overline\kappa^0_t\left(\mathbb E\left[\left.\frac{Y_t}{2\eta_t}\right|\mathcal F^0_t\right]-\mathbb E\left[\left.\frac{1}{2\eta_t}\right|\mathcal F^0_t\right]\mathbb E[\kappa_tX_t|\mathcal F^0_t]\right)X^0_t+\kappa_t^0\xi^0_tX^0_t+\eta^0_t(\xi^0_t)^2\right.\\
    &~\left.+\lambda^0_t(X^0_t)^2+\overline\lambda_t\left|\mathbb E\left[\left.\frac{Y_t}{2\eta_t}\right|\mathcal F^0_t\right]-\mathbb E\left[\left.\frac{1}{2\eta_t}\right|\mathcal F^0_t\right]\mathbb E[\kappa_tX_t|\mathcal F^0_t]\right|^2\,dt\right].
    \end{split}
\end{equation*}
By direct calculation we have
\begin{equation}\label{P-difference-cost}
    \begin{split}
    &~\widetilde J^0(\xi^0)-\widetilde J^0(\xi^{0,*})\\
    \geq&~ \mathbb E\left[\int_0^{T-\epsilon}(X^0_t-X^{0,*}_t)\left(\overline\kappa^0_t\mathbb E[\xi^*_t|\mathcal F^0_t]+\kappa^0_t\xi^0_t+2\lambda^0_t\xi^{0,*}_t\right)\,dt\right]\\
    &~+\mathbb E\left[\int_0^{T-\epsilon}\mathbb E\left[\left.\frac{Y_t-Y^*_t}{2\eta_t}\right|\mathcal F^0_t\right]\left(\overline\kappa^0_tX^{0,*}_t+2\overline\lambda_t\mathbb E[\xi^*_t|\mathcal F^0_t]\right)\,dt\right]\\
    &~+\mathbb E\left[\int_0^{T-\epsilon}\mathbb E[\kappa_t(X_t-X^*_t)|\mathcal F^0_t]\left(-\overline\kappa^0_t\mathbb E\left[\left.\frac{1}{2\eta_t}\right|\mathcal F^0_t\right]X^{0,*}_t-2\overline\lambda_t\mathbb E\left[\left.\frac{1}{2\eta_t}\right|\mathcal F^0_t\right]\mathbb E[\xi^*_t|\mathcal F^0_t]\right)\,dt\right]\\
    &~+\mathbb E\left[\int_0^{T-\epsilon}(\xi^0_t-\xi^{0,*}_t)\left(\kappa^0_tX^{0,*}_t+2\eta^0_t\xi^{0,*}_t\right)\,dt\right]
    \end{split}
\end{equation}
%\begin{equation}\label{convexity-hamilton}
%    \begin{split}
%       &~ H(t,p_t,r_t,q_t,X_t^{0},X_t,Y_t,\xi_t^{0})-H(t,p_t,r_t,q_t,X_t^{0,*},X_t^*,Y_t^*,\xi_t^{0,*})\\
%       \geq&~\left(\overline\kappa^0_t\mathbb E\left[\left.\frac{Y^*_t}{2\eta_t}\right|\mathcal F^0_t\right]+\kappa^0_t\xi^{0,*}_t+2\lambda^0_tX^{0,*}_t\right)(X^0_t-X_t^{0,*})-2\lambda_tq_t(X_t-X^*_t)\\
%       &~\left(-\kappa_tq_t
%       +\overline\kappa^0_tX^{0,*}_t+\overline\lambda_t \mathbb E\left[\left.\frac{Y^*_t}{\eta_t}\right|\mathcal F^0_t\right]\right)\mathbb E\left[\left.\frac{Y_t-Y^*_t}{2\eta_t}\right|\mathcal F^0_t\right]-\frac{r_t}{2\eta_t}(Y_t-Y^*_t).
%    \end{split}
%\end{equation}
Plugging the result in Step 2 into \eqref{P-difference-cost} and taking into account the characterization \eqref{nec-charac-eq}, we have
\[
   \widetilde J^0(\xi^0)-\widetilde J^0(\xi^{0,*})+\mathbb E\left[(X^0_{\widetilde T}-X^{0,*}_{\widetilde T})p_{\widetilde T}\right]+\mathbb E\left[(X_{\widetilde T}-X^*_{\widetilde T})r_{\widetilde T}\right]+ \mathbb E\left[(Y_{\widetilde T}-Y^*_{\widetilde T})q_{\widetilde T}\right]\geq 0.
\]
The same estimate as in the proof of \cite[Theorem 2.9]{FGHP-2018} yields that
\begin{equation*}
    \begin{split}
    &~\lim_{\widetilde T\nearrow T}\mathbb E\left|(X^0_{\widetilde T}-X^{0,*}_{\widetilde T})p_{\widetilde T}\right|=0.
    %\leq&~ \mathbb E\left|\overline A_{\widetilde T}X^{0,*}_{\widetilde T}X^0_{\widetilde T}+\overline A_{\widetilde T}\left(X^{0,*}_{\widetilde T}\right)^2+\overline p_{\widetilde T}\left|X^0_{\widetilde T}-X^{0,*}_{\widetilde T}\right|\right|\\
%    \leq&~\frac{C}{2(T-\widetilde T)}\mathbb E\left[\left|\int_{\widetilde T}^T\xi^0_t\,dt\right|^2+3\left|\int_{\widetilde T}^T\xi^{0,*}_t\,dt\right|^2\right]+\mathbb E\left|\overline p_{\widetilde T}\left|X^0_{\widetilde T}-X^{0,*}_{\widetilde T}\right|\right|\\
%    \leq&~\frac{C}{2}\mathbb E\left[\int_{\widetilde T}^T(\xi^0_t)^2\,dt+3\int_{\widetilde T}^T(\xi^{0,*}_t)^2\,dt\right]+\mathbb E\left|\overline p_{\widetilde T}\left|X^0_{\widetilde T}-X^{0,*}_{\widetilde T}\right|\right|\\
%    \rightarrow&~0,\qquad\textrm{ as }\widetilde T\uparrow T
    \end{split}
\end{equation*}
Moreover, Corollary \ref{existence-X-Y-B} and Corollary \ref{existence-q-r-D} imply that
\begin{equation*}
    \begin{split}
        &~\mathbb E\left[(X_{\widetilde T}-X^*_{\widetilde T})r_{\widetilde T}\right]+ \mathbb E\left[(Y_{\widetilde T}-Y^*_{\widetilde T})q_{\widetilde T}\right]\\
        =&~\mathbb E\left[(X_{\widetilde T}-X^*_{\widetilde T})(-A_{\widetilde T}q_{\widetilde T}+D_{\widetilde T})+\left(A_{\widetilde T}X_{\widetilde T}+B_{\widetilde T}-A_{\widetilde T}X^*_{\widetilde T}-B^*_{\widetilde T}\right)q_{\widetilde T}\right]\\
        =&~\mathbb E\left[(X_{\widetilde T}-X^*_{\widetilde T})D_{\widetilde T}+(B_{\widetilde T}-B^*_{\widetilde T})q_{\widetilde T}\right]\\
        \rightarrow&~0,\qquad\textrm{ as }\widetilde T\nearrow T.
    \end{split}
\end{equation*}
Thus, letting $\widetilde T\nearrow T$, dominated convergence yields
\[
    J^0(\xi^0)-J^0(\xi^{0,*})\geq 0.
\]
\end{proof}

As a corollary, we obtain that a convex combination of the value functions for the penalized optimization problems converges to the value function of the constrained problem.

\begin{corollary}\label{app-cost}
There exists a convex combination of the value functions converging to $J^{0}(\xi^{0,*})$, i.e.,
\[
    \lim_{\widetilde N\rightarrow\infty}\frac{1}{\widetilde N}\sum_{i=1}^{\widetilde N}\frac{1}{\overline N_i}\sum_{j=1}^{\overline N_i}\frac{1}{N_j}\sum_{k=1}^{N_j}J^{0,n_k}(\xi^{0,n_k,*})=J^0(\xi^{0,*}).
\]
\end{corollary}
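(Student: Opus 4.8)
The plan is to prove the two bounds $\limsup_{\widetilde N\to\infty}\mathcal J_{\widetilde N}\le J^0(\xi^{0,*})$ and $\liminf_{\widetilde N\to\infty}\mathcal J_{\widetilde N}\ge J^0(\xi^{0,*})$, where $\mathcal J_{\widetilde N}$ denotes the iterated Ces\`aro mean $\frac1{\widetilde N}\sum_{i=1}^{\widetilde N}\frac1{\overline N_i}\sum_{j=1}^{\overline N_i}\frac1{N_j}\sum_{k=1}^{N_j}J^{0,n_k}(\xi^{0,n_k,*})$ in the statement. The upper bound will come from optimality of the penalized controls; the lower bound from the nonnegativity and joint convexity of the running cost together with Fatou's lemma, applied along the convex combinations furnished by Lemma~\ref{app} and \eqref{app-xi0-2}.

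For the upper bound, Theorem~\ref{nec-charac-optim-stra} shows that $\xi^{0,*}$ is admissible, so the state $X^{0,*}$ it generates satisfies $X^{0,*}_T=0$, whence $n(X^{0,*}_T)^2=0$; optimality of $\xi^{0,n_k,*}$ for the $n_k$-th penalized problem (Theorem~\ref{p-existence}) then gives $J^{0,n_k}(\xi^{0,n_k,*})\le J^{0,n_k}(\xi^{0,*})$. I would next show $J^{0,n}(\xi^{0,*})\to J^0(\xi^{0,*})$ as $n\to\infty$: feeding the fixed input $\widetilde g=\widetilde\kappa^0\xi^{0,*}$ into \eqref{state-p-n}, the penalized follower's response $\overline\xi^n$ converges, strongly in $L^2$, to the constrained response $\xi^*$ (the single-player convergence underlying Section~\ref{section-MKV}, using $A^n\to A$ from \eqref{convergence-A-n-appendix} and the uniform convexity of the follower's cost in the control), and since $X^{0,*}$ does not depend on $n$ every term of $J^{0,n}(\xi^{0,*})$ passes to the limit. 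Because an iterated Ces\`aro mean of the convergent sequence $\{J^{0,n_k}(\xi^{0,*})\}_k$ has the same limit, this yields $\limsup_{\widetilde N}\mathcal J_{\widetilde N}\le J^0(\xi^{0,*})$.

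For the lower bound, I recall from the proof of Theorem~\ref{p-existence} that, under Assumption~\ref{assumption-MFG}, the running cost of $J^{0,n}$ is a nonnegative, jointly convex function $\ell$ of the triple $(X^0_t,\xi^0_t,\mathbb E[\overline\xi^n_t|\mathcal F^0_t])$ (a sum of nonnegative coefficients times squares of affine expressions), and that $\ell$ is precisely the running cost of $J^0$. Discarding the nonnegative penalty term and applying Jensen's inequality to the iterated convex combination over $n_k$, $\mathcal J_{\widetilde N}$ is bounded below by $\mathbb E[\int_0^T \ell\,dt]$ with $\ell$ evaluated at the convex combinations of $X^{0,n_k,*}$, $\xi^{0,n_k,*}$ and $\mathbb E[\overline\xi^{n_k}|\mathcal F^0]$ that appear in Lemma~\ref{app} and \eqref{app-xi0-2}; the convergence of the last of these to $\mu^*:=\mathbb E[\xi^*|\mathcal F^0]$ follows from \eqref{app-X}--\eqref{app-Y} together with the identity $\overline\xi^n=\frac{Y^{n,*}-\mathbb E[\kappa X^{n,*}|\mathcal F^0]}{2\eta}$, which matches the form of $\xi^*$. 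Since these convex combinations converge in $L^\nu$, hence almost everywhere along a subsequence, and $\ell\ge0$, Fatou's lemma gives $\liminf_{\widetilde N}\mathcal J_{\widetilde N}\ge\mathbb E\big[\int_0^T\ell(X^{0,*}_t,\xi^{0,*}_t,\mu^*_t)\,dt\big]=J^0(\xi^{0,*})$; the last equality holds because $(X^*,Y^*)$ solves the constrained follower FBSDE \eqref{MKV-FBSDE} (with $\widetilde g=\widetilde\kappa^0\xi^{0,*}$), so that $\mu^*$ is the follower's constrained optimal response to $\xi^{0,*}$. Combining the two bounds gives the claim.

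The main obstacle is the upper bound, specifically the convergence $J^{0,n}(\xi^{0,*})\to J^0(\xi^{0,*})$: the $L^\nu$-convergence ($1<\nu<2$) supplied by the stability theory is not by itself enough to pass to the limit in the quadratic term $\overline\lambda_t(\mathbb E[\overline\xi^n_t|\mathcal F^0_t])^2$, because a sequence bounded in $L^2$ need not have convergent $L^2$-norms — only weak convergence and lower semicontinuity are automatic, which would give the inequality in the wrong direction. One therefore needs the strong $L^2$-convergence of the penalized follower's response to the constrained one, which itself rests on the uniform convexity of the follower's cost in the control and the convergence of the follower's penalized value functions to the constrained value: an argument of the same flavour as, but carried out one level below, the one being proved here.
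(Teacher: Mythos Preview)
Your lower-bound argument---drop the penalty, use the joint convexity of the running cost to pull the iterated convex combination inside, then apply Fatou along the $L^\nu$-convergent sequences from Lemma~\ref{app} and \eqref{app-xi0-2}---is exactly what the paper does (the paper packages the convex combination as an auxiliary functional $K(\widetilde N)$ and sandwiches it between $J^0(\xi^{0,*})$ and the Ces\`aro mean of $J^0(\xi^{0,n_k,*})$, but the mechanism is identical).

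For the upper bound the paper takes a shorter route than you: it simply asserts the pointwise chain
\[
J^0(\xi^{0,n_k,*}) \le J^{0,n_k}(\xi^{0,n_k,*})=\inf_{\xi\in L^2_{\mathbb F^0}}J^{0,n_k}(\xi)\le J^0(\xi^{0,*})
\]
for every $k$, citing that $\xi^{0,*}$ is admissible for the penalized problem (with vanishing penalty since $X^{0,*}_T=0$). With this inequality in hand no limiting argument is needed for the upper bound and your ``main obstacle'' never arises.

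That said, your concern is well placed and actually applies to the paper's shortcut too. Both inequalities in the displayed chain compare costs computed with \emph{different} follower responses: $J^{0,n_k}(\cdot)$ uses the penalized follower response $\overline\xi^{n_k}$, while $J^0(\cdot)$ uses the constrained response $\xi^*$. The paper's justification (``additional penalty term'' and ``admissibility'') only accounts for the term $n(X^0_T)^2$, not for the discrepancy between $\overline\xi^{n_k}$ and $\xi^*$; so the identification $J^{0,n_k}(\xi^{0,*})=J^0(\xi^{0,*})$ (implicit in the right-hand inequality) is exactly the convergence issue you flag. Your proposed remedy---establish strong $L^2$-convergence of the penalized follower response to the constrained one for the \emph{fixed} input $\widetilde g=\widetilde\kappa^0\xi^{0,*}$, using the uniform convexity of the follower's cost and $A^n\to A$---is the right way to close this gap, and once done it serves both your limiting argument and the paper's pointwise inequality.
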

\begin{proof}
Recall that $X^{0,n_k,*}$ and $\xi^{n_k,*}$ are the optimal state of the leader and the optimal strategy of the follower corresponding to $\xi^{0,n_k,*}$, respectively. Due to the additional penalty term in the definition of $J^{0,n_k}$ and because $\xi^{0,*}$ is an admissible strategy for the penalized problem,\footnote{Notice that $J^ 0(\xi^{0,n_k,*})$ is well-defined even though $\xi^{0,n_k,*}$ may not not admissible for the constrained optimization problem.}
\begin{equation*}
    \begin{split}
        J^0(\xi^{0,n_k,*})
%        =&~\mathbb E\left[\int_0^T\overline\kappa^0_t\mu^{n_k,*}_tX^{0,n_k,*}_t+\kappa^0_t\xi^{0,n_k,*}_t X^{0,n_k,*}_t+\eta^0_t(\xi^{0,n_k,*}_t)^2+\lambda^0_t(X^{0,n_k,*}_t)^2+\overline\lambda_t(\mu^{n_k,*}_t)^2\,dt\right]~(\textrm{definition of }J^0)\\
%        \leq&~\mathbb E\left[\int_0^T\overline\kappa^0_t\mu^{n_k,*}_tX^{0,n_k,*}_t+\kappa^0_t\xi^{0,n_k,*}_t X^{0,n_k,*}_t+\eta^0_t(\xi^{0,n_k,*}_t)^2+\lambda^0_t(X^{0,n_k,*}_t)^2+\overline\lambda_t(\mu^{n_k,*}_t)^2\,dt+n_k(X^{0,n_k,*}_T)^2\right]\\
        \leq J^{0,n_k}(\xi^{0,n_k,*})=\inf_{\xi\in L^2_{\mathbb F^0}([0,T]\times\Omega;\mathbb R)}J^{0,n_k}(\xi)
        \leq J^0(\xi^{0,*})
        %~(\textrm{since }\xi^{0,*} \textrm{ is an admissible strategy of the penalized optimization}) .
    \end{split}
\end{equation*}

Denote by $K(\widetilde N)$ the cost functional with $(\xi^0,X^0,\xi)$ in $J^0$ replaced by $$\left(\frac{1}{\widetilde N}\sum_{i=1}^{\widetilde N}\frac{1}{\overline N_i}\sum_{j=1}^{\overline N_i}\frac{1}{N_j}\sum_{k=1}^{N_j}\xi^{0,n_k,*},\frac{1}{\widetilde N}\sum_{i=1}^{\widetilde N}\frac{1}{\overline N_i}\sum_{j=1}^{\overline N_i}\frac{1}{N_j}\sum_{k=1}^{N_j}X^{0,n_k,*},\frac{1}{\widetilde N}\sum_{i=1}^{\widetilde N}\frac{1}{\overline N_i}\sum_{j=1}^{\overline N_i}\frac{1}{N_j}\sum_{k=1}^{N_j}\xi^{n_k,*}\right).$$
By the convexity, we have
\begin{equation*}
    \begin{split}
    &~K(\widetilde N)
   \leq\frac{1}{\widetilde N}\sum_{i=1}^{\widetilde N}\frac{1}{\overline N_i}\sum_{j=1}^{\overline N_i}\frac{1}{N_j}\sum_{k=1}^{N_j}J^0(\xi^{0,n_k,*})\leq J^0(\xi^{0,*}).
    \end{split}
\end{equation*}
By Lemma \ref{app}, \eqref{app-xi0-2} and Fatou's lemma,
\[
    J^0(\xi^{0,*})\leq\liminf_{\widetilde N\rightarrow\infty}K(\widetilde N)\leq \liminf_{\widetilde N\rightarrow\infty}\frac{1}{\widetilde N}\sum_{i=1}^{\widetilde N}\frac{1}{\overline N_i}\sum_{j=1}^{\overline N_i}\frac{1}{N_j}\sum_{k=1}^{N_j}J^0(\xi^{0,n_k,*})\leq J^0(\xi^{0,*}).
\]
\end{proof}

%%%%%%%%%%%%%%%%%%%%%%%%%%%%%%%%%%%
%%%%%%%%%%%%%%%%%%%%%%%%%%%%%%%%%%%
%%%%%%%%%%%%%%%%%%%%%%%%%%%%%%%%%%%

\section{Conclusion}

We established existence and uniqueness of solutions results for linear McKean Vlasov FBSDEs with a terminal state constraint on the forward process. The general results were used to solve novel MFC problems and mean-field leader-follower games of optimal portfolio liquidation. For the leader-follower game it could be viewed as a MFC problem where the state dynamics follows a controlled FBSDE. For such problems we proved a novel stochastic maximum principle. The proof was based on a approximation method. We proved that both the sequence of optimal solutions and the sequence of state and adjoint equations associated with a family of penalized problems Cesaro converge to a unique limit that yields the optimal solution, respectively, the adjoint equations to the original state-constrained problem. To the best of our knowledge, no numerical methods for simulating the solution to conditional McKean-Vlasov FBSDEs are yet available. It would be desirable to develop such methods in order to study the interplay between the leader's and the follower's equilibrium strategies in greater detail.

\bibliography{bib_Guanxing_thesis}
%\bibliography{bib_Guanxing_singular}

\end{document}